\title{KADABRA is an ADaptive Algorithm for Betweenness via Random Approximation\footnote{This work was done while the authors were visiting the Simons Institute for the Theory of Computing.}}
\author[1]{Michele Borassi}
\author[2]{Emanuele Natale}
\affil[1]{IMT Insitute for Advanced Studies, 55100 Lucca, Italy\\
  \texttt{michele.borassi@imtlucca.it}}
\affil[2]{Sapienza University of Rome, 00185 Roma, Italy\\
  \texttt{natale@di.uniroma1.it}}
\renewcommand{\P}{\Pr}
\newcommand{\p}{\boldsymbol{\pi}}
\newcommand{\Path}{\boldsymbol{\pi}}
\newcommand{\E}{\mathbb{E}}
\newcommand{\Xvp}[2]{X_{#1}^{#2}}
\newcommand{\dep}[2]{\delta_{#1}(#2)}
\newcommand{\sig}[2]{\sigma_{#1#2}}
\newcommand{\sigv}[3]{\sigma_{#1#2}(#3)}
\newcommand{\ba}[1]{\boldsymbol{b}(#1)}
\renewcommand{\O}{\mathcal{O}}
\newcommand{\w}[1]{\rho_{#1}}
\newcommand{\whp}{w.h.p.}
\newcommand{\aas}{a.a.s.}
\newcommand{\N}[2]{\boldsymbol{N}^{#1}(#2)}
\newcommand{\G}[2]{\boldsymbol{\Gamma}^{#1}(#2)}
\newcommand{\R}[2]{\boldsymbol{R}^{#1}(#2)}
\renewcommand{\r}[2]{\boldsymbol{r}^{#1}(#2)}
\newcommand{\wres}{\rho_{\text{\upshape res}}}
\newcommand{\X}{\boldsymbol{X}}
\newcommand{\Y}{\boldsymbol{Y}}
\newcommand{\Z}{\boldsymbol{Z}}
\newcommand{\stoptime}{\boldsymbol{\tau}}
\newcommand{\fixtime}{\tau}
\newcommand{\deltal}{\delta_{L}}
\newcommand{\deltau}{\delta_{U}}
\newcommand{\lambdal}{\lambda_{L}}
\newcommand{\lambdau}{\lambda_{U}}
\newcommand{\logdl}{\log{\frac 1{\deltal}}}
\newcommand{\logdu}{\log{\frac 1{\deltau}}}
\newcommand{\btilde}{\tilde{\boldsymbol{b}}}
\newcommand{\btildev}[1]{\btilde(#1)}
\newcommand{\btildekv}[2]{\btilde_{#1}(#2)}
\newcommand{\betv}[1]{\bet(#1)}
\newcommand{\bet}{\bc}
\newcommand{\deltalv}[1]{\delta_{L}^{(#1)}}
\newcommand{\deltauv}[1]{\delta_{U}^{(#1)}}
\newcommand{\lambdalv}[1]{\lambda_{L}^{(#1)}}
\newcommand{\lambdauv}[1]{\lambda_{U}^{(#1)}}
\newcommand{\logdlv}[1]{\log{\frac 1{\deltalv{#1}}}}
\newcommand{\logduv}[1]{\log{\frac 1{\deltauv{#1}}}}
\newcommand{\ETOP}{\widetilde{TOP}}
\newcommand{\rk}{{RK}}
\newcommand{\abra}{{ABRA}}
\newcommand{\abraaut}{{ABRA-Aut}}
\newcommand{\abrag}{{ABRA-1.2}}
\newcommand{\cad}{{KADABRA}\xspace}
\newcommand{\bbbfs}{bb-BFS\xspace}
\newcommand{\good}{\boldsymbol{A}}
\DeclareMathOperator{\vd}{VD}
\DeclareMathOperator{\bc}{bc}
\DeclareMathOperator{\res}{res}
\DeclareMathOperator{\var}{Var}
\DeclareMathOperator{\avg}{avg}
\newtheorem{thm}{Theorem}
\newtheorem{lem}[thm]{Lemma}
\newtheorem{remark}[thm]{Remark}
\newtheorem{definition}[thm]{Definition}
\crefname{axiom}{Axiom}{Axioms}
\crefname{thm}{Theorem}{Theorems}
\crefname{claim}{Claim}{Claims}
\crefname{lem}{Lemma}{Lemmas}
\crefname{fact}{Fact}{Facts}
\crefname{proposition}{Proposition}{Propositions}
\crefname{cor}{Corollary}{Corollaries}
\begin{document}

\maketitle

\begin{abstract}
We present \cad, a new algorithm to approximate betweenness centrality in directed and undirected graphs, which significantly outperforms all previous approaches on real-world complex networks. 
The efficiency of the new algorithm relies on two new theoretical contributions, of independent interest.

The first contribution focuses on sampling shortest paths, a subroutine used by most algorithms that approximate betweenness centrality. We show that, on realistic random graph models, we can perform this task in time $|E|^{\frac{1}{2}+o(1)}$ with high probability, obtaining a significant speedup with respect to the $\Theta(|E|)$ worst-case performance. We experimentally show that this new technique achieves similar speedups on real-world complex networks, as well.

The second contribution is a new rigorous application of the adaptive sampling technique. This approach decreases the total number of shortest paths that need to be sampled to compute all betweenness centralities with a given absolute error, and it also handles more general problems, such as computing the $k$ most central nodes. Furthermore, our analysis is general, and it might be extended to other settings, as well.
\end{abstract}

\section{Introduction}

In this work we focus on estimating the \emph{betweenness centrality}, which is one of the most famous measures of \emph{centrality} for nodes and edges of real-world complex networks \cite{Easley2010NetworksCA,newman2010networks}. 
The rigorous definition of betweenness centrality has its roots in sociology, dating back to the Seventies, when Freeman formalized the informal concept discussed in the previous decades in different scientific communities \cite{bavelas1948mathematical,shimbel1953structural,shaw1954group,cohn1958networks,Borgatti2006AGP}, although the definition already appeared in \cite{anthonisse1971rush}. 
Since then, this notion has been very successful in network science \cite{wasserman1994social,newman2001scientific,Geisberger2008ContractionHF,newman2010networks}. 

A probabilistic way to define the betweenness centrality\footnote{As explained in see Section \ref{sec:algoshort}, to simplify notation we consider the \emph{normalized} betweenness centrality.} $\betv{v}$ of a node $v$ in a graph $G=(V,E)$ is the following. We choose two nodes $s$ and $t$, and we go from $s$ to $t$ through a shortest path $\pi$; if the choices of $s$, $t$ and $\pi$ are made uniformly at random, the betweenness centrality of a node $v$ is the probability that we pass through $v$.

In a seminal paper \cite{Brandes2001}, Brandes showed that it is possible to exactly compute the betweenness centrality of all the nodes in a graph in time $\O(mn)$, where $n$ is the number of nodes and $m$ is the number of edges. A corresponding lower bound was proved in \cite{Borassi2015}: if we are able to compute the betweenness centrality of a single node in time $\O(mn^{1-\epsilon})$ for some $\epsilon>0$, then the Strong Exponential Time Hypothesis \cite{Impagliazzo2001} is false.

This result further motivates the rich line of research on computing approximations of betweenness centrality, with the goal of trading precision with efficiency.
The main idea is to 
define a probability distribution over the set of all paths, by choosing two uniformly random nodes $s,t$, and then a uniformly distributed $st$-path $\Path$, 
so that $\Pr(v \in \Path)=\betv v$. As a consequence, we can approximate $\betv v$ by sampling paths $\Path_1,\dots,\Path_\tau$ according to this distribution, and estimating $\btildev v:=\frac{1}{\fixtime}\sum_{i=1}^\fixtime \X_i(v)$, where $\X_i(v)=1$ if $v \in \Path_i$ (and $v \neq s,t$), $0$ otherwise.

The tricky part of this approach is to provide probabilistic guarantees on the quality of this approximation: the goal is to obtain a $1-\delta$ confidence interval $\boldsymbol{I}(v)=[\btildev v-\lambdal,\btildev v+\lambdau]$ for $\betv v$, which means that $\Pr(\forall v \in V,\betv v \in \boldsymbol{I}(v))\geq 1-\delta$. 
Thus, the research for approximating betweenness centrality has been focusing on obtaining, as fast as possible, the smallest possible $\boldsymbol{I}$.

\subsubsection*{Our Contribution}
\label{sec:contribshort}

In this work, we propose a new and faster algorithm to approximate betweenness centrality in directed and undirected graphs, named \cad. In the standard task of approximating betweenness centralities with absolute error at most $\lambda$, we show that, on average, the new algorithm is more than $100$ times faster than the previous ones, on graphs with approximately $10\,000$ nodes. Moreover, differently from previous approaches, our algorithm can perform more general tasks, since it does not need all confidence intervals to be equal. As an example, we consider the computation of the $k$ most central nodes: all previous approaches compute all centralities with an error $\lambda$, and use this approximation to obtain the ranking. Conversely, our approach allows us to use small confidence interval only when they are needed, and allows bigger confidence intervals for nodes whose centrality values are ``well separated''. This way, we can compute for the first time an approximation of the $k$ most central nodes in networks with millions of nodes and hundreds of millions of edges, like the Wikipedia citation network and the IMDB actor collaboration network. 

Our results rely on two main theoretical contributions, which are interesting in their own right, since their generality naturally extends to other applications. 

\subparagraph{Balanced bidirectional breadth-first search.}
By leveraging on recent advanced results, we prove that, on many realistic random models of real-world complex networks, it is possible to sample a random path between two nodes $s$ and $t$ in time $m^{\frac{1}{2}+o(1)}$ if the degree distribution has finite second moment, or $m^{\frac{4-\beta}{2}+o(1)}$ if the degree distribution is power law with exponent $2<\beta<3$. The models considered are the Configuration Model \cite{Bollobas1980}, and all Rank-1 Inhomogeneous Random Graph models \cite[Chapter 3]{Hofstad2014}, such as the Chung-Lu model \cite{Chung2006}, the Norros-Reittu model \cite{Norros2006}, and the Generalized Random Graph \cite[Chapter 3]{Hofstad2014}.
Our proof techniques have the merit of adopting a unified approach that simultaneously works in all models considered. These models well represent metric properties of real-world networks \cite{Borassi2016}: indeed, our results are confirmed by practical experiments.

The algorithm used is simply a balanced bidirectional BFS (\bbbfs): we perform a BFS from each of the two endpoints $s$ and $t$, in such a way that the two BFSs are likely to explore about the same number of edges, and we stop as soon as the two BFSs ``touch each other''. 
Rather surprisingly, this technique was never implemented to approximate betweenness centrality, and it is rarely used in the experimental algorithm community. 
Our theoretical analysis provides a clear explanation of the reason why this technique improves over the standard BFS: this means that many state-of-the-art algorithm for real-world complex networks can be improved by the \bbbfs.

\subparagraph{Adaptive sampling made rigorous.}
To speed up the estimation of the betweenness centrality, previous work make use of the technique of adaptive sampling, which consists in testing during the execution of the algorithm whether some condition on the sample obtained so far has been met, and terminating the execution of the algorithm as soon as this happens. 
However, this technique introduces a subtle stochastic dependence between the time in which the algorithm terminates and the correctness of the given output, which previous papers claiming a formal analysis of the technique did not realize (see Section \ref{sec:adapshort} for details). 
With an argument based on martingale theory, we provide a general analysis of such useful technique. Through this result, we do not only improve previous estimators, but we also make it possible to define more general stopping conditions, that can be decided ``on the fly'': this way, with little modifications, we can adapt our algorithm to perform more general tasks than previous ones.

To better illustrate the power of our techniques, we focus on the unweighted, static graphs, and to the centrality of nodes. 
However, our algorithm can be easily adapted to compute the centrality of edges, to handle weighted graphs and, since its core part consists merely in sampling paths, we conjecture that it may be coupled with the existing techniques in \cite{Bergamini2015FullyDynamicAO} to handle dynamic graphs. 

\subsubsection*{Related Work}
\label{sec:relatedshort}

\subparagraph{Computing Betweenness Centrality.}
With the recent event of big data, the major shortcoming of betweenness centrality has been the lack of efficient methods to compute it \cite{Brandes2001}. 
In the worst case, the best exact algorithm to compute the centrality of all the nodes is due to Brandes \cite{Brandes2001}, and its time complexity is $\O(mn)$: the basic idea of the algorithm is to define the dependency $\dep sv=\sum_{t \in V} \frac{\sigv stv}{\sig st}$, which can be computed in time $\O(m)$, for each $v \in V$ (we denote by $\sigv stv$ the number of shortest paths from $s$ to $t$ passing through $v$, and by $\sig st$ the number of $st$-shortest paths). 
In \cite{Borassi2015}, it is also shown that Brandes algorithm is almost optimal on sparse graphs: an algorithm that computes the betweenness centrality of a single vertex in time $\O(mn^{1-\epsilon})$ falsifies widely believed complexity assumptions, such as the Strong Exponential Time Hypothesis \cite{Impagliazzo2001}, the Orthogonal Vector conjecture \cite{Abboud2016}, or the Hitting Set conjecture \cite{Williams2014}. Corresponding results in the dense, weighted case are available in \cite{grand2015}: computing the betweenness centrality exactly is as hard as computing the All Pairs Shortest Path, and computing an approximation with a given relative error is as hard as computing the diameter. For both these problems, there is no algorithm with running-time $\O(n^{3-\epsilon})$, for any $\epsilon>0$. This shows that, for dense graphs, having an additive approximation rather than a multiplicative one is essential for a provably fast algorithm to exist.
These negative results further motivate the already rich line of research on approaches that overcome this barrier. A first possibility is to use heuristics, that do not provide analytical guarantees on their performance \cite{atalyrek2013ShatteringAC,Erds2015ADA,Vella2016AlgorithmsAH}. Another line of research has defined variants of betweenness centrality, that might be easier to compute \cite{Brandes2008OnVO,pfeffer2012k,Dolev2010RoutingBC}. Finally, a third line of research has investigated approximation algorithms, which trade accuracy for speed \cite{Jacob2004AlgorithmsFC,Brandes2007,Geisberger2008ContractionHF,lim2011online}. Our work follows the latter approach.
The first approximation algorithm proposed in the literature \cite{Jacob2004AlgorithmsFC} adapts Eppstein and Wang's approach for computing closeness centrality \cite{Eppstein2001FastAO}, using Hoeffding's inequality and the union bound technique. This way, it is possible to obtain an estimate of the betweenness centrality of every node that is correct up to an additive error $\lambda$ with probability $\delta$, by sampling $\O(\frac{D^2}{\lambda^2}\log\frac{n}{\delta})$ nodes, where $D$ is the diameter of the graph. In
\cite{Geisberger2008ContractionHF}, it is shown that this can lead to an overestimation. 
Riondato and Kornaropoulos improve this sampling-based approach by sampling single shortest paths instead of the whole dependency of a node \cite{Riondato2015}, introducing the use of the VC-dimension.
As a result, the number of samples is decreased to $\frac{c}{\lambda^2}(\lfloor\log_2(\vd-2)\rfloor+1+\log(\frac{1}{\delta}))$, where $\vd$ is the vertex diameter, that is, the minimum number of nodes in a shortest path in $G$ (it can be different from $D+1$ if the graph is weighted). 
This use of the VC-dimension is further developed and generalized in \cite{Riondato2016}. Finally, many of these results were adapted to handle dynamic networks \cite{Bergamini2015FullyDynamicAO,Riondato2016}. 

\subparagraph{Approximating the top-\texorpdfstring{$k$}{k} betweenness centrality set.}
Let us order the nodes $v_1,...,v_n$ such that $\bc(v_1)\geq ...\geq \bc(v_n)$ and define 
$TOP(k) = \{ (v_i,\bc(v_i)): i \leq k\}$. In \cite{Riondato2015} and \cite{Riondato2016}, the authors provide an algorithm that, for any given $\delta,\epsilon$, with probability $1-\delta$ outputs a set $\ETOP(k) = \{(v_i,\btildev{v_i})\}$ such that: 
i) If $v \in TOP(k)$ then $v \in \ETOP(k)$ 
and $| \bc(v) - \btildev{v} | \leq \epsilon \bc(v)$;
ii) If $v \in \ETOP(k)$ but $v \not\in TOP(k)$ then $\btildev{v}\leq (\mathbf{b}_k-\epsilon) (1+\epsilon)$ where $\mathbf{b}_k$ is the $k$-th largest betweenness given by a preliminary phase of the algorithm.

\subparagraph{Adaptive sampling.}
In \cite{Bader2007,Riondato2016}, the number of samples required is substantially reduced using the adaptive sampling technique introduced by Lipton and Naughton in \cite{lipton_estimating_1989,lipton_query_1995}. Let us clarify that, by adaptive sampling, we mean that the termination of the sampling process depends on the sample observed so far (in other cases, the same expression refers to the fact that the distribution of the new samples is a function of the previous ones \cite{aggarwal_adaptive_2009}, while the sample size is fixed in advance). Except for \cite{pietracaprina_mining_2010}, previous approaches tacitly assume that there is little dependency between the stopping time and the correctness of the output: indeed, they prove that, for each \emph{fixed} $\fixtime$, the probability that the estimate is wrong at time $\fixtime$ is below $\delta$. However, the stopping time $\stoptime$ is a random variable, and in principle there might be dependency between the event $\stoptime=\tau$ and the event that the estimate is correct at time $\tau$. 
As for \cite{pietracaprina_mining_2010}, they consider a specific stopping condition and their proof technique does not seem to extend to other settings.
For a more thorough discussion of this issue, we defer the reader to \Cref{sec:adapshort}.

\subparagraph{Bidirectional BFS.}
The possibility of speeding up a breadth-first search for the shortest-path problem by performing, at the same time, a BFS from the final end-point, has been considered since the Seventies \cite{pohl1969bi}.
Unfortunately, because of the lack of theoretical results dealing with its efficiency, the bidirectional BFS has apparently not been considered a fundamental heuristic improvement \cite{Kaindl1997BidirectionalHS}. 
However, in \cite{Riondato2015} (and in some public talks by M. Riondato), the bidirectional BFS was proposed as a possible way to improve the performance of betweenness centrality approximation algorithms.

\subsubsection*{Structure of the Paper}

In \Cref{sec:algoshort}, we describe our algorithm, and in \Cref{sec:adapshort} we discuss the main difficulty of the adaptive sampling, and the reasons why our techniques are not affected. In \Cref{sec:bfsshort}, we define the balanced bidirectional BFS, and we sketch the proof of its efficiency on random graphs. In \Cref{sec:topkshort}, we show that our algorithm can be adapted to compute the $k$ most central nodes. In \Cref{sec:experimentsshort} we experimentally show the effectiveness of our new algorithm. Finally, all our proofs are in the appendix.

\section{Algorithm Overview}
\label{sec:algoshort}

To simplify notation, we always consider the \emph{normalized} betweenness centrality of a node $v$, which is defined by:
\[
    \bc(v)=\frac{1}{n(n-1)}\sum_{s \neq v \neq t} \frac{\sigv stv}{\sig st}
\]
where $\sig st$ is the number of shortest paths between $s$ and $t$, and $\sigv stv$ is the number of shortest paths between $s$ and $t$ that pass through $v$. Furthermore, to simplify the exposition, we use bold symbols to denote random variables, and light symbols to denote deterministic quantities.
On the same line of previous works, our algorithm samples random paths $\p_1,\dots,\p_\fixtime$, where $\p_i$ is chosen by selecting uniformly at random two nodes $s,t$, and then selecting uniformly at random one of the shortest paths from $s$ to $t$. Then, it estimates $\betv v$ with $\btildev v:=\frac{1}{\fixtime}\sum_{i=1}^\fixtime \X_i(v)$, where $\X_i(v)=1$ if $v \in \p_i$, $0$ otherwise. By definition of $\p_i$, $\E\left[\btildev v\right]=\betv v$.

The tricky part is to bound the distance between $\btildev v$ and its expected value. With a straightforward application of Hoeffding's inequality (Lemma~\ref{lem:hoeff} in the appendix), it is possible to prove that $\Pr\left(\left|\btildev v-\betv v\right|\geq \lambda\right) \leq 2e^{-2\fixtime\lambda^2}$. A direct application of this inequality considers a union bound on all possible nodes $v$, obtaining $\Pr(\exists v \in V, |\btildev v-\betv v|\geq \lambda) \leq 2ne^{-2\fixtime\lambda^2}$. This means that the algorithm can safely stop as soon as $2ne^{-2\fixtime\lambda^2} \leq \delta$, that is, after $\fixtime= \frac{1}{2\lambda^2} \log(\frac{2n}{\delta})$ steps. 

In order to improve this idea, we can start from Lemma \ref{lem:cb} in the appendix, instead of Hoeffding inequality, obtaining that $\Pr\left(\left|\btildev v-\betv v\right|\geq \lambda\right) \leq 2\exp(-\frac{\fixtime\lambda^2}{2(\betv v+\lambda/3)})$.

If we assume the error $\lambda$ to be small, this inequality is stronger than the previous one for all values of $\betv v<\frac{1}{4}$ (a condition which holds for almost all nodes, in almost all graphs considered). However, in order to apply this inequality, we have to deal with the fact that we do not know $\betv v$ in advance, and hence we do not know when to stop. Intuitively, to solve this problem, we make a ``change of variable'', and we rewrite the previous inequality as 
\begin{equation}\label{eq:chernoffmodshort}
\begin{aligned} 
\Pr\left(\betv v \leq \btildev v-f  \right)\leq \deltalv v \quad \text{and} \quad
\Pr\left(\betv v \geq \btildev v+g\right) \leq \deltauv v,
\end{aligned}
\end{equation}
for some functions $f=f(\btildev v,\deltalv v, \fixtime),g=g(\btildev v,\deltauv v, \fixtime)$. Our algorithm fixes at the beginning the values $\deltalv v, \deltauv v$ for each node $v$, and, at each step, it tests if $f(\btildev v,\deltalv v,\tau)$ and $g(\btildev v,\deltauv v,\tau)$ are small enough. If this condition is satisfied, the algorithm stops. Note that this approach lets us define very general stopping conditions, that might depend on the centralities computed until now, on the single nodes, and so on.

\begin{remark}
Instead of fixing the values $\deltalv v,\deltauv v$ at the beginning, one might want to decide them during the algorithm, depending on the outcome. However, this is not formally correct, because of dependency issues (for example, \eqref{eq:chernoffmodshort} does not even make sense, if $\deltalv v,\deltauv v$ are random). Finding a way to overcome this issue is left as a challenging open problem (more details are provided in \Cref{sec:adapshort}).
\end{remark}

In order to implement this idea, we still need to solve an issue: \eqref{eq:chernoffmodshort} holds for each \emph{fixed} time $\tau$, but the stopping time of our algorithm is a random variable $\stoptime$, and there might be dependency between the value of $\stoptime$ and the probability in \eqref{eq:chernoffmodshort}. To this purpose, we use a stronger inequality (\Cref{thm:mcdiarmid} in the appendix), that holds even if $\stoptime$ is a random variable. However, to use this inequality, we need to assume that $\stoptime < \omega$ for some deterministic $\omega$: in our algorithm, we choose $\omega=\frac{c}{\lambda^2}\left(\lfloor\log_2(\vd-2)\rfloor+1+\log\left(\frac{2}{\delta}\right)\right)$, because, by the results in \cite{Riondato2015}, after $\omega$ samples, the maximum error is at most $\lambda$, with probability $1-\frac{\delta}{2}$. Furthermore, also $f$ and $g$ should be modified, since they now depend on the value of $\omega$. The pseudocode of the algorithm obtained is available in Algorithm~\ref{alg:mainshort} (as was done in previous approaches, we can easily parallelize the while loop in Line~\ref{line:stopcondshort}).

\begin{algorithm2e}
 \begin{footnotesize}
\LinesNumbered
\SetKwFunction{testStoppingCondition}{haveToStop}
\SetKwFunction{computeDeltaV}{computeDelta}
\SetKwFunction{enqueue}{enqueue}
\SetKwFunction{extractMin}{extractMin}
\SetKwFunction{samplePath}{samplePath}
 \SetKwInOut{Input}{Input}
 \SetKwInOut{Output}{Output}
\Input{a graph $G=(V,E)$}
\Output{for each $v\in V$, an approximation $\btildev v$ of $\betv v$ such that $\Pr\left(\forall v,|\btildev v-\betv v|\leq\lambda\right)\geq 1-\delta$}
$\omega \gets \frac{c}{\lambda^2}\left(\lfloor\log_2(\vd-2)\rfloor+1+\log\left(\frac{2}{\delta}\right)\right)$\;

$(\deltalv v,\deltauv v) \gets \computeDeltaV()$\; \label{line:computedeltavshort}
$\tau \gets 0$\;

\lForEach{$v \in V$}{$\btildev v \gets 0$}

\While{$\tau < \omega$ and not \testStoppingCondition$(\btilde, \deltal, \deltau ,\omega,\fixtime)$}{\label{line:stopcondshort} 
	$\p = \samplePath()$\;
	\lForEach{$v \in \p$}{$\btildev v \gets \btildev v+1$}
    $\tau \gets \tau+1$\;
}

\lForEach{$v \in V$}{$\btildev v \gets \btildev v/\tau$}
\Return{$\btilde$}
\end{footnotesize}
\caption{our algorithm for approximating betweenness centrality.}
\label{alg:mainshort}
\end{algorithm2e}

The correctness of the algorithm follows from the following theorem, which is the base of our adaptive sampling, and which we prove in \Cref{sec:adaptive} (where we also define the functions $f$ and $g$).

\begin{thm} \label{thm:mainshort}
Let $\btildev v$ be the output of Algorithm~\ref{alg:mainshort}, and let $\stoptime$ be the number of samples at the end of the algorithm. Then, with probability $1-\delta$, the following conditions hold:
\begin{itemize}
    \item if $\stoptime=\omega$, $|\btildev v-\betv v| < \lambda$ for all $v$;
    \item if $\stoptime<\omega$, $ -f(\stoptime,\btildev v,\deltalv v,\omega) \leq \betv v - \btildev v \leq  g(\stoptime,\btildev v,\deltauv v,\omega)$ for all $v$.
\end{itemize}
\end{thm}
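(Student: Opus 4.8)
The plan is to prove \Cref{thm:mainshort} by combining two ingredients: the fixed-time Chernoff-type bounds of \eqref{eq:chernoffmodshort}, and a martingale argument (\Cref{thm:mcdiarmid}) that converts these fixed-time guarantees into a guarantee that holds at the random stopping time $\stoptime$. First I would unpack the definitions of $f$ and $g$: these are obtained by starting from the inequality $\Pr(|\btildev v-\betv v|\geq\lambda)\leq 2\exp(-\frac{\fixtime\lambda^2}{2(\betv v+\lambda/3)})$ (\Cref{lem:cb}), performing the ``change of variable'' described in the overview so that, for each fixed $\tau$, the deviation events $\{\betv v\leq\btildev v-f\}$ and $\{\betv v\geq\btildev v+g\}$ each have probability at most $\deltalv v$ and $\deltauv v$ respectively. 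The subtlety flagged in the introduction and in \Cref{sec:adapshort} is precisely that these bounds are pointwise in $\tau$, whereas the algorithm halts at a data-dependent random time $\stoptime$.

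The key step is therefore to control the two bad events uniformly over time. For a fixed node $v$, consider the lower-deviation event and define the process $\Xvp{i}{} := \X_i(v)-\betv v$, whose partial sums form a martingale (with respect to the natural filtration generated by the sampled paths) since $\E[\X_i(v)]=\betv v$. The function $f$ is engineered so that the probability that the running estimate $\btildekv\tau v$ ever crosses the lower boundary at \emph{some} time $\tau\le\omega$ is still at most $\deltalv v$; this is exactly the content of the maximal-type inequality \Cref{thm:mcdiarmid}, which I would invoke with the deterministic horizon $\omega$. The role of the a-priori bound $\stoptime<\omega$ is crucial here: the martingale inequality requires a deterministic time horizon, and $\omega=\frac{c}{\lambda^2}(\lfloor\log_2(\vd-2)\rfloor+1+\log(\frac2\delta))$ supplies it, while the \cite{Riondato2015} VC-dimension bound simultaneously guarantees that, conditioned on reaching $\tau=\omega$, all errors are below $\lambda$ except on an event of probability $\frac\delta2$ (handling the first bullet of the theorem).

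Having established the per-node, all-time bounds, I would finish with a union bound. Each node $v$ contributes a lower-deviation failure probability at most $\deltalv v$ and an upper-deviation failure at most $\deltauv v$; the subroutine \computeDeltaV is assumed to have chosen these so that $\sum_v(\deltalv v+\deltauv v)\leq\frac\delta2$, reserving the other $\frac\delta2$ for the $\stoptime=\omega$ case. Summing over all $2|V|$ deviation events and the single $\omega$-event gives total failure probability at most $\delta$, and on the complementary event both displayed conclusions hold simultaneously for all $v$. The two cases of the theorem correspond exactly to whether the stopping condition \testStoppingCondition fired early ($\stoptime<\omega$, giving the two-sided $f,g$ bound) or the horizon was reached ($\stoptime=\omega$, giving the uniform $\lambda$ bound).

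The main obstacle I anticipate is the martingale step: one must verify that \Cref{thm:mcdiarmid} applies to the specific submartingale/supermartingale structure induced by $\btildekv\tau v$ with the exact boundary functions $f,g$, and in particular that these functions — which depend on the \emph{observed} value $\btildekv\tau v$ rather than on the unknown $\betv v$ — define a legitimate stopping boundary whose crossing probability is still controlled. Making the change of variable rigorous (inverting the Chernoff tail to express $f,g$ in terms of $\btildekv\tau v$, $\deltalv v$, $\deltauv v$, and $\omega$, and checking monotonicity so that the boundary-crossing event coincides with the tail event) is the delicate part; the union bound and the horizon argument are then routine.
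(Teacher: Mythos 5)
Your overall architecture --- a martingale argument, the deterministic horizon $\omega$ from \cite{Riondato2015} handling the $\stoptime=\omega$ case with probability $\frac{\delta}{2}$, and a union bound over nodes with $\sum_{v}(\deltalv v+\deltauv v)\leq\frac{\delta}{2}$ --- matches the paper's proof. But the step that actually resolves the stochastic-dependence problem is missing, and the way you propose to carry it out would fail. You invoke \Cref{thm:mcdiarmid} as a ``maximal-type inequality'' bounding the probability that the running estimate \emph{ever} crosses the boundary at some $\tau\leq\omega$; as stated, \Cref{thm:mcdiarmid} is an endpoint inequality: it bounds the deviation of a martingale at one fixed index $\ell$, not the supremum of the deviations over all indices up to $\ell$. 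So it cannot be applied, as you describe, to the uniform-in-time crossing event. The paper's trick is different, and it is exactly the ingredient you defer as your ``main obstacle'': since $\stoptime$ is a stopping time for the martingale $\Y^\tau=\sum_{i=1}^{\tau}(\X_i(v)-\betv v)$, the stopped process $\Z^\tau=\Y^{\min(\stoptime,\tau)}$ is again a martingale; because $\stoptime\leq\omega$, its value at the \emph{deterministic} time $\omega$ is $\Z^{\omega}=\Y^{\stoptime}=\stoptime(\btildev v-\betv v)$. Applying \Cref{thm:mcdiarmid} to $\Z$ and $-\Z$ at time $\omega$ (increments bounded by $1$, conditional variances summing to at most $\omega\betv v$) therefore yields a concentration bound that holds \emph{at the stopping time}, with no maximal inequality and no conditioning on $\stoptime$. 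The functions $f$ and $g$ are then obtained by inverting that bound: first solving a quadratic to express $\lambdal$ in terms of $\deltal$, then solving a second quadratic in $\betv v$ so that the tail event becomes $\{\betv v\leq\btildev v-f\}$; this inversion is also why $f$ and $g$ depend on both $\stoptime$ and $\omega$, with variance budget $\omega\betv v$ rather than $\stoptime\betv v$.

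A second, smaller slip: for the first bullet you say the VC-dimension bound guarantees small errors ``conditioned on reaching $\tau=\omega$''. Conditioning is precisely the trap the paper warns against in \Cref{sec:adapshort}: the bound of \cite{Riondato2015} controls the unconditional probability $\Pr(\exists v,\,|\btildekv{\omega}{v}-\betv v|>\lambda)\leq\frac{\delta}{2}$, and the proof concludes by event containment, namely $\Pr(\stoptime=\omega\wedge\exists v,\,|\btildev v-\betv v|>\lambda)\leq\Pr(\exists v,\,|\btildekv{\omega}{v}-\betv v|>\lambda)$, not by conditioning on $\{\stoptime=\omega\}$ (the conditional probability could a priori be much larger). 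With the stopped-martingale step supplied and the conditioning replaced by containment, your union bound closes the proof exactly as in the paper.
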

\begin{remark}
    This theorem says that, at the beginning of the algorithm, we know that, with probability $1-\delta$, one of the two conditions will hold when the algorithm stops, independently of the final value of $\stoptime$. This is essential to avoid the stochastic dependence that we discuss in \Cref{sec:adapshort}.
\end{remark}

In order to apply this theorem, we choose $\lambda$ such that our goal is reached if all centralities are known with error at most $\lambda$. Then, we choose the function \testStoppingCondition\ in a way that our goal is reached if the stopping condition is satisfied. This way, our algorithm is correct, both if $\stoptime=\omega$ and if $\stoptime<\omega$.
For example, if we want to compute all centralities with bounded absolute error, we simply choose $\lambda$ as the bound we want to achieve, and we plug the stopping condition $f,g\leq \lambda$ in the function \testStoppingCondition. 
Instead, if we want to compute an approximation of the $k$ most central nodes, we need a different definition of $f$ and $g$, which is provided in \Cref{sec:topkshort}.

To complete the description of this algorithm, we need to specify the following functions.
\begin{description}
\item[\computeDeltaV] The algorithm works for any choice of the $\deltalv v,\deltauv v$s, but a good choice yields better running times. We propose a heuristic way to choose them in \Cref{sec:deltav}.

\item[\samplePath] In order to sample a path between two random nodes $s$ and $t$, we use a balanced bidirectional BFS, which is defined in \Cref{sec:bbbfs}.
\end{description}

\section{Adaptive Sampling}
\label{sec:adapshort}

In this section, we highlight the main technical difficulty in the formalization of adaptive sampling, which previous works claiming analogous results did not address. Furthermore, we sketch the way we overcome this difficulty: our argument is quite general, and it could be easily adapted to formalize these claims.

As already said, the problem is the stochastic dependence between the time $\stoptime$ in which the algorithm terminates and the event $\good_\fixtime=$ ``at time $\fixtime$, the estimate is within the required distance from the true value'', since both $\stoptime$ and $\good_\fixtime$ are functions of the same random sample. Since it is typically possible to prove that $\Pr(\neg \good_\fixtime) \leq \delta$ for every fixed $\fixtime$, one may be tempted to argue that also $\Pr(\neg \good_{\stoptime}) \leq \delta$, by applying these inequalities at time $\stoptime$. However, this is not correct: indeed, if we have no assumptions on $\stoptime$, $\stoptime$ could even be defined as the smallest $\fixtime$ such that $\good_\fixtime$ does not hold!

More formally, if we want to link $\Pr(\neg \good_{\stoptime})$ to $\Pr(\neg \good_\fixtime)$, we have to use the law of total probability, that says that:
\begin{align}
    \Pr(\neg \good_{\stoptime}) 
    &= \sum_{\tau=1}^{\infty} \Pr(\neg \good_{\stoptime}\,|\, \stoptime=\fixtime)\Pr(\stoptime=\fixtime)
    \label{eq:first_tryshort}\\
    &= \Pr(\neg \good_{\stoptime}\,|\, \stoptime < \fixtime)\Pr(\stoptime < \fixtime) + \Pr(\neg \good_{\stoptime}\,|\, \stoptime\geq \fixtime)\Pr(\stoptime \geq \fixtime).
    \label{eq:second_tryshort}
\end{align}
Then, if we want to bound $\Pr(\neg \good_{\stoptime})$, we need to assume that 
\begin{equation}
    \Pr(\neg A_{\stoptime}\,|\, \stoptime=\fixtime) \leq \Pr(\neg A_\fixtime) \quad\text{or that} \quad
    \Pr(\neg A_{\tau}\,|\, \stoptime\geq \fixtime) \leq \Pr(\neg A_\fixtime),
    \label{eq:first_try_failshort}
\end{equation}
which would allow to bound \eqref{eq:first_tryshort} or \eqref{eq:second_tryshort} from above. The equations in \eqref{eq:first_try_failshort} are implicitly assumed to be true in previous works adopting adaptive sampling techniques. Unfortunately, because of the stochastic dependence, it is quite difficult to prove such inequalities, even if some approaches managed to overcome these difficulties \cite{pietracaprina_mining_2010}. 

For this reason, our proofs avoid dealing with such relations: in the proof of \Cref{thm:mainshort}, we fix a deterministic time $\omega$, we impose that $\stoptime\leq\omega$, and we apply the inequalities with $\fixtime=\omega$. Then, using martingale theory, we convert results that hold at time $\omega$ to results that hold at the stopping time $\stoptime$ (see \Cref{sec:adaptive}).

\section{Balanced Bidirectional BFS}
\label{sec:bfsshort}

A major improvement of our algorithm, with respect to previous counterparts, is that we sample shortest paths through a balanced bidirectional BFS, instead of a standard BFS. In this section, we describe this technique, and we bound its running time on realistic models of random graphs, with high probability.
The idea behind this technique is very simple: if we need to sample a uniformly random shortest path from $s$ to $t$, instead of performing a full BFS from $s$ until we reach $t$, we perform at the same time a BFS from $s$ and a BFS from $t$, until the two BFSs touch each other (if the graph is directed, we perform a ``forward'' BFS from $s$ and a ``backward'' BFS from $t$).

More formally, assume that we have visited up to level $l_s$ from $s$ and to level $l_t$ from $t$, let $\Gamma^{l_s}(s)$ be the set of nodes at distance $l_s$ from $s$, and similarly let $\Gamma^{l_t}(t)$ be the set of nodes at distance $l_t$ from $t$. If $\sum_{v \in \Gamma^{l_s}(s)} \deg(v) \leq \sum_{v \in \Gamma^{l_t}(t)} \deg(v)$, we process all nodes in $\Gamma^{l_s}(s)$, otherwise we process all nodes in $\Gamma^{l_t}(t)$ (since the time needed to process level $l_s$ is proportional to $\sum_{v \in \Gamma^{l_s}(s)} \deg(v)$, this choice minimizes the time needed to visit the next level). Assume that we are processing the node $v \in \Gamma^{l_s}(s)$ (the other case is analogous). For each neighbor $w$ of $v$ we do the following:
\begin{itemize}
\item if $w$ was never visited, we add $w$ to $\Gamma^{l_s+1}(s)$;
\item if $w$ was already visited in the BFS from $s$, we do not do anything;
\item if $w$ was visited in the BFS from $t$, we add the edge $(v,w)$ to the set $\Pi$ of candidate edges in the shortest path.
\end{itemize}

After we have processed a level, we stop if $\Gamma^{l_s}(s)$ or $\Gamma^{l_t}(t)$ is empty (in this case, $s$ and $t$ are not connected), or if $\Pi$ is not empty. In the latter case, we select an edge from $\Pi$, so that the probability of choosing the edge $(v,w)$ is proportional to $\sigma_{sv}\sigma_{wt}$ (we recall that $\sigma_{xy}$ is the number of shortest paths from $x$ to $y$, and it can be computed during the BFS as in \cite{Brandes2007}). Then, the path is selected by considering the concatenation of a random path from $s$ to $v$, the edge $(v,w)$, and a random path from $w$ to $t$. These random paths can be easily chosen by backtracking, as shown in \cite{Riondato2015} (since the number of paths might be exponential in the input size, in order to avoid pathological cases, we assume that we can perform arithmetic operations in $\O(1)$ time).

\subsection{Analysis on Random Graph}

In order to show the effectiveness of the balanced bidirectional BFS, we bound its running time in several models of random graphs: the Configuration Model (CM, \cite{Bollobas1980}), and Rank-1 Inhomogeneous Random Graph models (IRG, \cite[Chapter 3]{Hofstad2014}), such as the Chung-Lu model \cite{Chung2006}, the Norros-Reittu model \cite{Norros2006}, and the Generalized Random Graph \cite[Chapter 3]{Hofstad2014}. In these models, we fix the number $n$ of nodes, and we give a weight $\w u$ to each node. In the CM, we create edges by giving $\w u$ half-edges to each node $u$, and pairing these half-edges uniformly at random; in IRG we connect each pair of nodes $(u,v)$ independently with probability close to ${\w u \w v}/{\sum_{w \in V} \w w}$. With some technical assumptions discussed in \Cref{sec:bbbfs}, we prove the following theorem.
\begin{thm} \label{thm:bidirectionalshort}
Let $G$ be a graph generated through the aforementioned models. Then, for each fixed $\epsilon>0$, and for each pair of nodes $s,t$, \whp, the time needed to compute an $st$-shortest path through a bidirectional BFS is $\O(n^{\frac{1}{2}+\epsilon})$ if the degree distribution $\lambda$ has finite second moment, $\O(n^{\frac{4-\beta}{2}+\epsilon})$ if $\lambda$ is a power law distribution with $2<\beta<3$.
\end{thm}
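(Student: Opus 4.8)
The plan is to analyze the balanced bidirectional BFS by tracking the growth of the two BFS frontiers and identifying the moment when they are forced to meet. Write $\Gamma^{l}(s)$ and $\Gamma^{l}(t)$ for the frontiers at levels $l_s$ and $l_t$, and let $N^{l}(s)=\sum_{v\in\Gamma^{l}(s)}\deg(v)$ be the \emph{volume} of the $s$-frontier (analogously for $t$). Since the algorithm always expands the side with the smaller frontier volume, the total work is (up to constants) $\sum$ of the frontier volumes processed before the two searches collide. The core idea is a birthday-paradox / collision argument: the two searches must touch once the \emph{product} of the explored volumes is comparable to the total volume $\sum_{w}\w w=\Theta(n)$ of the graph, because at that point a random pairing of half-edges (CM) or an independent edge-presence event (IRG) connecting the two explored sets occurs with constant probability. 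Thus I expect the searches to meet after each side has explored volume roughly $\sqrt{n}$, giving the $n^{1/2+\epsilon}$ bound in the finite-second-moment case.

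First I would set up the branching-process approximation that controls frontier growth. In all the models considered, the local neighbourhood of a vertex converges to a branching process whose offspring distribution is the size-biased version of the weight/degree distribution $\lambda$; this is exactly the kind of statement recorded in \cite[Chapter 3]{Hofstad2014} and the metric-property results of \cite{Borassi2016}. When $\lambda$ has finite second moment, the size-biased mean is finite, so the frontier volumes grow by a bounded multiplicative factor per level and one can control the volume explored up to the collision level with concentration inequalities, keeping it at $n^{1/2+o(1)}$ \whp. When $\lambda$ is power-law with $2<\beta<3$, the size-biased distribution has infinite variance and the frontier volumes grow doubly-exponentially, driven by the highest-weight vertices; here the relevant scaling of the explored volume at collision becomes $n^{(4-\beta)/2+o(1)}$, which is where the exponent $\tfrac{4-\beta}{2}$ comes from. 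I would handle both regimes through a single parametrised estimate on the growth rate of $N^{l}(s)$, matching the paper's stated goal of a unified treatment.

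Second, I would make the collision argument precise. Condition on the two explored sets after the balanced expansion has pushed each side to volume $V$. In the CM, the unpaired half-edges on the $s$-side are matched uniformly against the global pool of free half-edges, so the probability that none of them lands on the $t$-side is roughly $\bigl(1-\Theta(V/n)\bigr)^{V}$; this drops below any constant precisely when $V=\Theta(\sqrt n)$. In the IRG, the analogous bound follows from the independence of edges together with the connection probability $\w u\w v/\sum_w\w w$. Balancing the \emph{stopping rule} (expand the smaller side) guarantees the two explored volumes stay within a constant factor of each other, so neither side ever overshoots the $\sqrt n$ threshold by more than a subpolynomial factor; summing the geometric-like series of processed frontier volumes then yields the claimed running time, and the extra $\epsilon$ in the exponent absorbs the polylogarithmic correction factors and the failure probability via a union bound over levels.

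The main obstacle I anticipate is the power-law regime. With infinite-variance offspring the frontier volumes are \emph{not} concentrated around their means—they are dominated by a few extremal high-degree vertices—so the clean multiplicative-growth estimate used in the finite-second-moment case breaks down. There I would instead control the explored volume by a truncation argument: separate the contribution of vertices of weight above and below a threshold $n^{\gamma}$, show that the heavy vertices are reached within $o(\log n)$ levels and dictate the $n^{(4-\beta)/2}$ scaling, and bound the light-vertex contribution deterministically. Reconciling this heavy-tailed growth with the balancing rule—ensuring the side-selection does not accidentally force a long walk through many low-degree vertices before hitting a hub—is the delicate point, and it is precisely where the technical assumptions deferred to \Cref{sec:bbbfs} will be needed.
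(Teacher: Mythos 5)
Your proposal follows the same skeleton as the paper's proof: a birthday-paradox collision argument once both frontiers reach volume $n^{1/2+\epsilon}$ (the paper's \Cref{lem:touchbid}), an $\O(\log n)$ bound on the number of levels processed before that, and, for the power-law regime, a heavy/light truncation of the vertex weights. However, one of your intermediate claims is false, and it is false exactly at the point where the exponent $\frac{4-\beta}{2}$ is generated. In your third paragraph you assert that the balanced expansion rule guarantees that ``neither side ever overshoots the $\sqrt n$ threshold by more than a subpolynomial factor.'' For $2<\beta<3$ this is wrong: the balancing rule only controls \emph{which} frontier you expand, not the size of the level that the expansion produces. Conditioned on the explored set having volume just below $n^{1/2+\epsilon}$, the next level has expected volume roughly $n^{1/2+\epsilon}\cdot\frac{\sum_{v}\w v^2}{n}$ (restricted to light vertices) plus the total weight of the heavy vertices it can hit, and both terms are of order $n^{\frac{4-\beta}{2}+O(\epsilon)}$, which is polynomially larger than $\sqrt{n}$ whenever $\beta<3$. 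This single-level overshoot is not an error term to be absorbed by the $\epsilon$ in the exponent: it dominates the running time and is the entire reason the theorem claims $n^{\frac{4-\beta}{2}+\epsilon}$ rather than $n^{1/2+\epsilon}$ in the power-law case. The paper isolates precisely this statement as \Cref{lem:lastlevel}: if the cumulative explored volume is below $n^{1/2+\epsilon}$, then \whp\ the next level has volume at most $n^{1/2+3\epsilon}$ (finite second moment), resp.\ $n^{\frac{4-\beta}{2}+3\epsilon}$ (power law with $2<\beta<3$).

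Your fourth paragraph gestures at the right fix (truncation at weight about $n^{1/2}$), but the two concrete steps you propose would not work as stated. First, the light-vertex contribution cannot be bounded ``deterministically'': a frontier of volume $V$ could in the worst case attach to distinct light vertices of total weight $V\cdot n^{1/2-2\epsilon}$, which is far too large; what is needed is a concentration bound conditioned on the structure of the explored set (the paper uses a supermartingale/Azuma argument in the CM, where pairings are without replacement, and Hoeffding in IRG). Second, arguing that ``heavy vertices are reached within $o(\log n)$ levels'' is neither needed nor sufficient; the correct statement is global and independent of when they are reached: by the tail assumption $\Pr(\lambda_i\geq d)\leq C d^{1-\beta}$ and Abel summation, the total weight of \emph{all} vertices of weight at least $n^{1/2-\epsilon}$ is $\O(n^{\frac{4-\beta}{2}+\epsilon\beta})$, so heavy vertices can contribute at most that amount to any level whatsoever. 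With this one-step overshoot lemma in place, the rest of your outline (collision once both sides exceed the threshold, $\O(\log n)$ levels each of volume at most the threshold before that, plus the two overshooting last levels) assembles into the paper's proof; note also that collision at volume exactly $\Theta(\sqrt n)$ holds only with constant probability, so the threshold must indeed be $n^{1/2+\epsilon}$ to obtain the \whp\ guarantee, as you correctly anticipated.
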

\begin{proof}[Sketch of proof]
The idea of the proof is that the time needed by a bidirectional BFS is proportional to the number of visited edges, which is close to the sum of the degrees of the visited nodes, which are very close to their weights. Hence, we have to analyze the weights of the visited edges: for this reason, if $V'$ is a subset of $V$, we define the volume of $V'$ as $\w{V'}=\sum_{v \in V'} \w v$.

Our visit proceeds by ``levels'' in the BFS trees from $s$ and $t$: if we never process a level with total weight at least $n^{\frac{1}{2}+\epsilon}$, since the diameter is $\O(\log n)$, the volume of the set of processed vertices is $\O(n^{\frac{1}{2}+\epsilon}\log n)$, and the number of visited edges cannot be much bigger (for example, this happens if $s$ and $t$ are not connected). Otherwise, assume that, at some point, we process a level $l_s$ in the BFS from $s$ with total weight $n^{\frac{1}{2}+\epsilon}$: then, the corresponding level $l_t$ in the BFS from $t$ has also weight $n^{\frac{1}{2}+\epsilon}$ (otherwise, we would have expanded from $t$, because weights and degrees are strongly correlated). We use the ``birthday paradox'': levels $l_s+1$ in the BFS from $s$, and level $l_t+1$ in the BFS from $t$ are random sets of nodes with size close to $n^{\frac{1}{2}+\epsilon}$, and hence there is a node that is common to both, \whp. This means that the time needed by the bidirectional BFS is proportional to the volume of all levels in the BFS tree from $s$, until $l_s$, plus the volume of all levels in the BFS tree from $t$, until $l_t$ (note that we do not expand levels $l_s+1$ and $l_t+1$). All levels except the last have volume at most $n^{\frac{1}{2}+\epsilon}$, and there are $\O(\log n)$ such levels because the diameter is $\O(\log n)$: it only remains to estimate the volume of the last level. 

By definition of the models, the probability that a node $v$ with weight $\w v$ belongs to the last level is about $\frac{\w v \w{\G{l_s-1}s}}{M} \leq \w v n^{-\frac{1}{2}+\epsilon}$: hence, the expected volume of $\G{l_s}s$ is at most $\sum_{v \in V} \w v \Pr(v \in \G{l_s-1}s) \leq \sum_{v \in V} \w v^2 n^{-\frac{1}{2}+\epsilon}$. Through standard concentration inequalities, we prove that this random variable is concentrated: hence, we only need to compute this expected value. If the degree distribution has finite second moment, then $\sum_{v \in V} \w v^2=\O(n)$, concluding the proof. If the degree distribution is power law with $2<\beta<3$, then we have to consider separately nodes $v$ such that $\w v<n^{\frac{1}{2}}$ and such that $\w v>n^{\frac{1}{2}}$. In the first case, $\sum_{\w v<n^{\frac{1}{2}}} \w v^2 \approx \sum_{d=0}^{n^{\frac{1}{2}}} nd^2\lambda(d) \approx \sum_{d=0}^{n^{\frac{1}{2}}} nd^{2-\beta}\approx n^{1+\frac{3-\beta}{2}}$. In the second case, we prove that the volume of the set of nodes with weight bigger than $n^{\frac{1}{2}}$ is at most $n^{\frac{4-\beta}{2}}$. Hence, the total volume of $\G{l_s}s$ is at most $n^{-\frac{1}{2}+\epsilon}n^{1+\frac{3-\beta}{2}}+n^{\frac{4-\beta}{2}} \approx n^{\frac{4-\beta}{2}}$.

\end{proof}

\section{Computing the \texorpdfstring{$k$}{k} Most Central Nodes} \label{sec:topkshort}

Differently from previous works, our algorithm is more flexible, making it possible to compute the betweenness centrality of different nodes with different precision. This feature can be exploited if we only want to rank the nodes: for instance, if $v$ is much more central than all the other nodes, we do not need a very precise estimation on the centrality of $v$ to say that it is the top node. Following this idea, in this section we adapt our approach to the approximation of the ranking of the $k$ most central nodes: as far as we know, this is the first approach which computes the ranking without computing a $\lambda$-approximation of all betweenness centralities, allowing significant speedups. 
Clearly, we cannot expect our ranking to be always correct, otherwise the algorithm does not terminate if two of the $k$ most central nodes have the same centrality. For this reason, the user fixes a parameter $\lambda$, and, for each node $v$, the algorithm does one of the following:
\begin{itemize}
\item it provides the exact position of $v$ in the ranking;
\item it guarantees that $v$ is not in the top-$k$;
\item it provides a value $\btildev v$ such that $|\betv v-\btildev v|\leq\lambda$.
\end{itemize}

In other words, similarly to what is done in \cite{Riondato2015}, the algorithm provides a set of $k'\geq k$ nodes containing the top-$k$ nodes, and for each pair of nodes $v,w$ in this subset, either we can rank correctly $v$ and $w$, or $v$ and $w$ are almost even, that is, $|\betv v-\betv w|\leq 2\lambda$. In order to obtain this result, we plug into Algorithm~\ref{alg:mainshort} the aforementioned conditions in the function \testStoppingCondition\ (see Algorithm~\ref{alg:stopcondtopk} in the appendix).

Then, we have to adapt the function \computeDeltaV\ to optimize the $\deltalv v$s and the $\deltauv v$s to the new stopping condition: in other words, we have to choose the values of $\lambdalv v$ and $\lambdauv v$ that should be plugged into the function \computeDeltaV\ (we recall that the heuristic \computeDeltaV\ chooses the $\deltalv v$s so that we can guarantee as fast as possible that $\btildev v-\lambdalv v \leq \bet (v) \leq \btildev v+\lambdauv v$). To this purpose, we estimate the betweenness of all nodes with few samples and we sort all nodes according to these approximate values $\tilde{b}(v)$, obtaining $v_1,\dots,v_n$. The basic idea is that, for the first $k$ nodes, we set $\lambdauv {v_i}=\frac{\tilde{b}(v_{i-1})-\tilde{b}(v_i)}{2}$, and $\lambdalv{v_i}=\frac{\tilde{b}(v_{i})-\tilde{b}(v_{i+1})}{2}$ (the goal is to find confidence intervals that separate the betweenness of $v_i$ from the betweenness of $v_{i+1}$ and $v_{i-1}$). For nodes that are not in the top-$k$, we choose $\lambdalv v=1$ and $\lambdauv v=\tilde{b}(v_{k})-\lambdalv{v_k}-\tilde{b}(v_{i})$ (the goal is to prove that $v_i$ is not in the top-$k$). Finally, if $\tilde{b}(v_{i})-\tilde{b}(v_{i+1})$ is small, we simply set $\lambdalv{v_i}=\lambdauv{v_i}=\lambdalv{v_{i+1}}=\lambdauv{v_{i+1}}=\lambda$, because we do not know if $\betv{v_{i+1}}>\betv{v_i}$, or viceversa. 

\section{Experimental Results} \label{sec:experimentsshort}

In this section, we test the four variations of our algorithm on several real-world networks, in order to evaluate their performances. The platform for our tests is a server with 1515 GB RAM and 48 Intel(R) Xeon(R) CPU E7-8857 v2 cores at 3.00GHz, running Debian GNU Linux 8. The algorithms are implemented in C++, and they are compiled using gcc 5.3.1. The source code of our algorithm is available at \url{https://sites.google.com/a/imtlucca.it/borassi/publications}.

\subsubsection*{Comparison with the State of the Art}

The first experiment compares the performances of our algorithm \cad\ with the state of the art.
The first competitor is the \rk\ algorithm \cite{Riondato2015}, available in the open-source \textit{NetworKit}
framework~\cite{Staudt2014}. This algorithm uses the same estimator as our algorithm, but the stopping condition is different: it simply stops after sampling $k=\frac{c}{\epsilon^2}\left(\left\lfloor\log_2(\vd-2)\right\rfloor+1+\log\left(\frac{1}{\delta}\right)\right)$, and it uses a heuristic to upper bound the vertex diameter. Following suggestions by the author of the \textit{NetworKit} implementation, we set to $20$ the number of samples used in the latter heuristic \cite{elisabetta_personal}.

The second competitor is the \abra\ algorithm \cite{Riondato2016}, available at \url{http://matteo.rionda.to/software/ABRA-radebetw.
tbz2}. This algorithm samples pairs of nodes $(s,t)$, and it adds the fraction of $st$-paths passing from $v$ to the approximation of the betweenness of $v$, for each node $v$. The stopping condition is based on a key result in statistical learning theory, and there is a scheduler that decides when it should be tested. Following the suggestions by the authors, we use both the automatic scheduler \abraaut, which uses a heuristic approach to decide when the stopping condition should be tested, and the geometric scheduler \abrag, which tests the stopping condition after $(1.2)^ik$ iterations, for each integer $i$.

The test is performed on a dataset made by $15$ undirected and $15$ directed real-world networks, taken from the datasets SNAP (\url{snap.stanford.edu/}), LASAGNE (\url{piluc.dsi.unifi.it/lasagne}), and KONECT (\url{http://konect.uni-koblenz.de/networks/}). As in \cite{Riondato2016}, we have considered all values of $\lambda \in\{0.03, 0.025, 0.02, 0.015, 0.01, 0.005\}$, and $\delta=0.1$. All the algorithms have to provide an approximation $\tilde{\boldsymbol{b}}(v)$ of $\betv v$ for each $v$ such that $\Pr\left(\forall v, \left|\tilde{\boldsymbol{b}}(v)-\betv v\right| \leq \lambda\right) \geq 1-\delta$. In \Cref{fig:running_timeshort}, we report the time needed by the different algorithms on every graph for $\lambda=0.005$ (the behavior with different values of $\lambda$ is very similar). More detailed results are reported in \Cref{app:detailedresults}.

\begin{figure}
    \includegraphics[width=\textwidth]{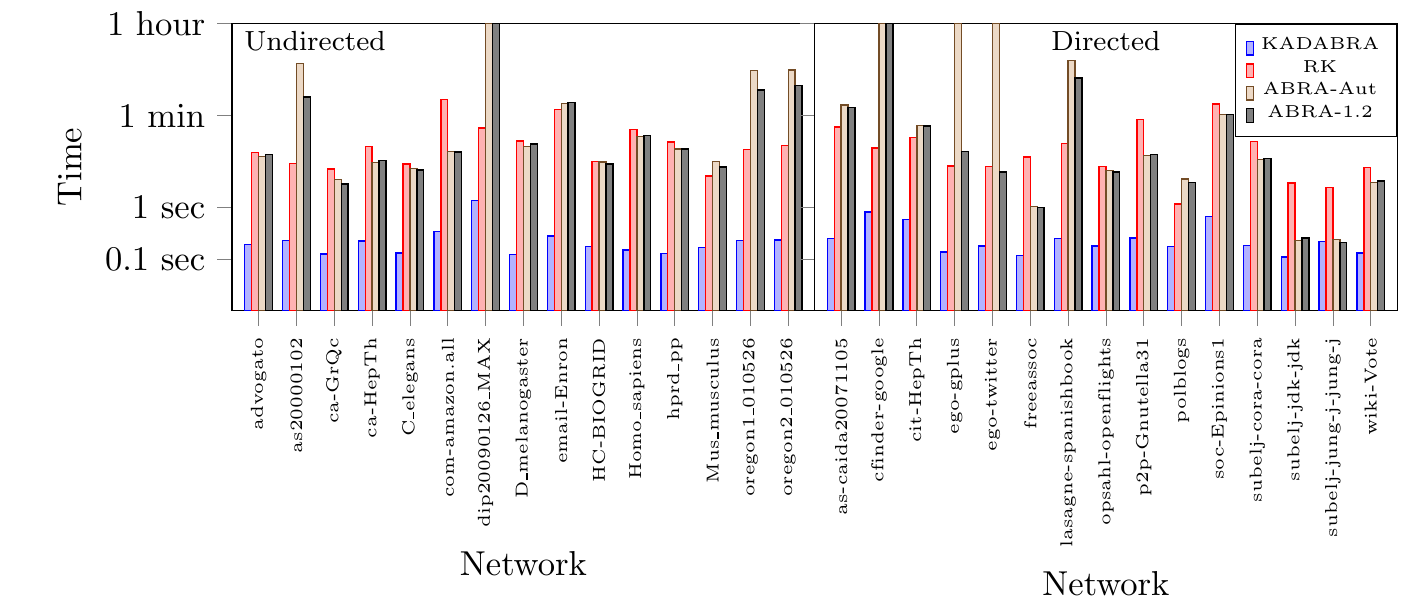}
    \caption{The time needed by the different algorithms, on all the graphs of our dataset.} \label{fig:running_timeshort}
\end{figure}

From the figure, we see that \cad\ is much faster than all the other algorithms, on all graphs: on average, our algorithm is about $100$ times faster than \rk\ in undirected graphs, and about $70$ times faster in directed graphs; it is also more than $1\,000$ times faster than \abra. The latter value is due to the fact that the \abra\ algorithm has large running times on few networks: in some cases, it did not even conclude its computation within one hour. The authors confirmed that this behavior might be due to some bugs in the code, which seems to affect it only on specific graphs: indeed, in most networks, the performances of \abra\ are better than those of the \rk\ algorithm (but, still, not better than \cad). 

\begin{figure}[t]
        \includegraphics[width=\textwidth]{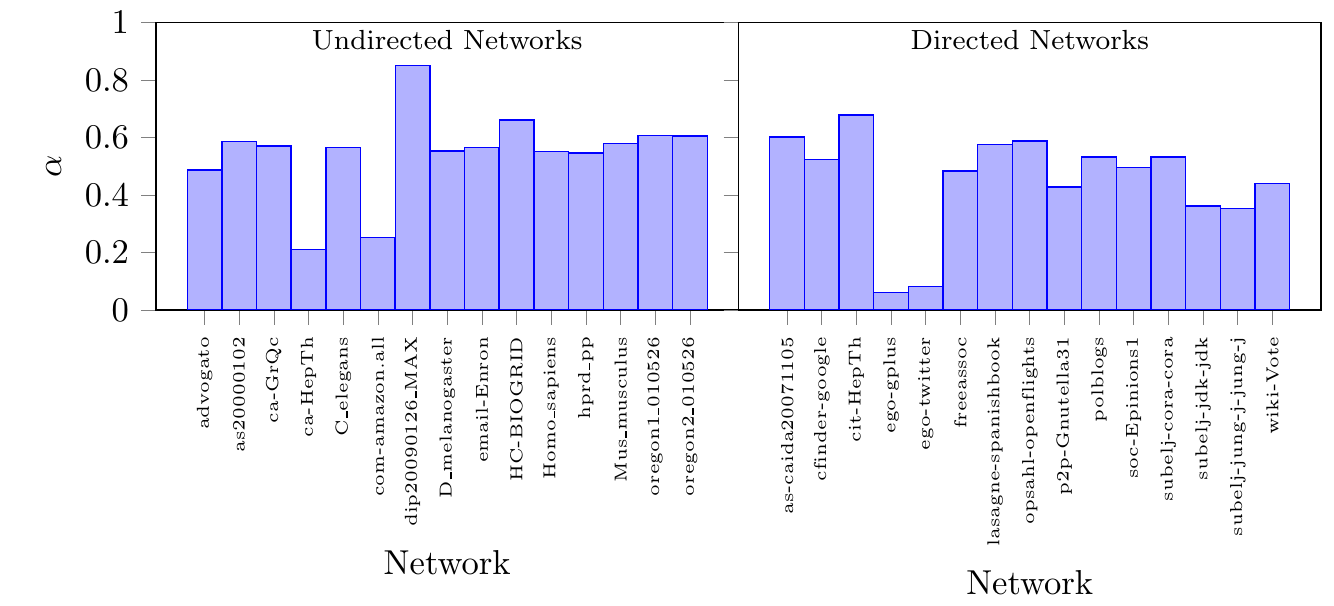}
    \caption{The exponent $\alpha$ such that the average number of edges visited during a bidirectional BFS is $n^\alpha$.} \label{fig:bidbfsshort}
\end{figure}

In order to explain these data, we take a closer look at the improvements obtained through the bidirectional BFS, by considering the average number of edges $m_{\avg}$ that the algorithm visits in order to sample a shortest path (for all our competitors, $m_{\avg}=m$, since they perform a full BFS). In \Cref{fig:bidbfsshort}, for each graph in our dataset, we plot $\alpha=\frac{\log(m_{\avg})}{\log(m)}$ (intuitively, this means that the average number of edges visited is $m^{\alpha}$).

The figure shows that, apart from few cases, the number of edges visited is close to $n^{\frac{1}{2}}$, confirming the results in \Cref{sec:bfsshort}. This means that, since many of our networks have approximately $10\,000$ edges, the bidirectional BFS is about $100$ times faster than the standard BFS. Finally, for each value of $\lambda$, we report in \Cref{fig:nsamplesshort} the number of samples needed by all the algorithms, averaged over all the graphs in the dataset.

\begin{figure}[t]
    \includegraphics[width=\textwidth]{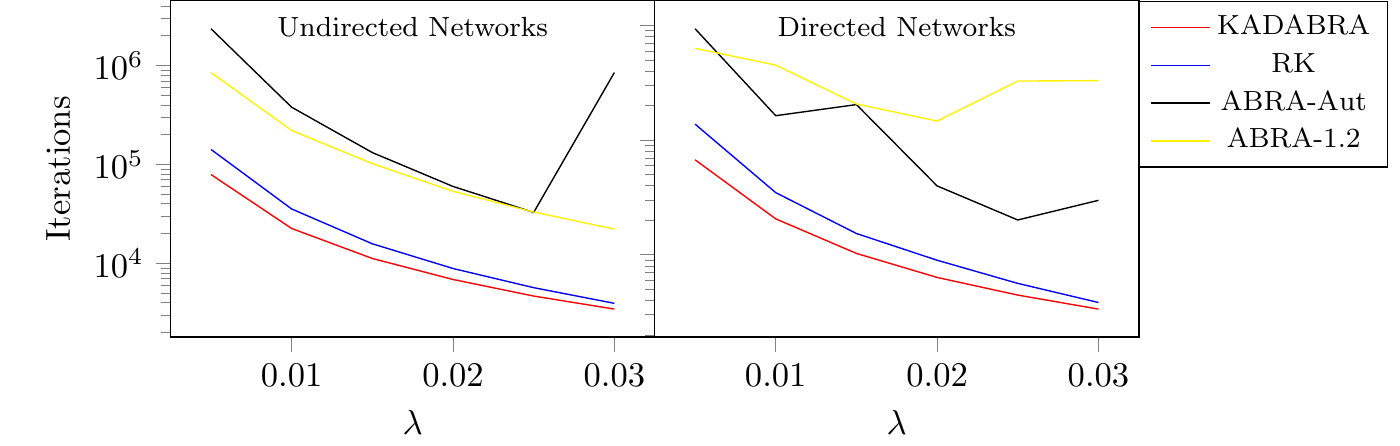}
    \caption{The average number of samples needed by the different algorithms.} \label{fig:nsamplesshort}
\end{figure}

From the figure, \cad\ needs to sample the smallest amount of shortest paths, and the average improvement over \rk\ grows when $\lambda$ tends to $0$, from a factor $1.14$ (resp., $1.14$) if $\lambda=0.03$, to a factor $1.79$ (resp., $2.05$) if $\lambda=0.005$ in the case of undirected (resp., directed) networks. Again, the behavior of \abra\ is highly influenced by the behavior on few networks, and as a consequence the average number of samples is higher. In any case, also in the graphs where \abra\ has good performances, \cad\ still needs a smaller number of samples.

\subsubsection*{Computing Top-\texorpdfstring{$k$}{k} Centralities}

In the second experiment, we let \cad compute the top-$k$ betweenness centralities of large graphs, 
which were unfeasible to handle with the previous algorithms.

The first set of graph is a series of temporal snapshots of the IMDB actor collaboration network, in which two actors are connected if they played together in a movie. The snapshots are taken every 5 years from 1940 to 2010, including a last snapshot in 2014, with $1\,797\,446$ nodes and $145\,760\,312$ edges. The graphs are extracted from the IMDB website (\url{http://www.imdb.com}), and they do not consider TV-series, awards-shows, documentaries, game-shows, news, realities and talk-shows, in accordance to what was done in \url{http://oracleofbacon.org}.

The other graph considered is the Wikipedia citation network, whose nodes are Wikipedia pages, and which contains an edge from page $p_1$ to page $p_2$ if the text of page $p_1$ contains a link to page $p_2$. The graph is extracted from DBPedia 3.7 (\url{http://wiki.dbpedia.org/}), and it consists of $4\,229\,697$ nodes and $102\,165\,832$ edges.

\begin{figure}
    \centering
    \includegraphics[width=\textwidth]{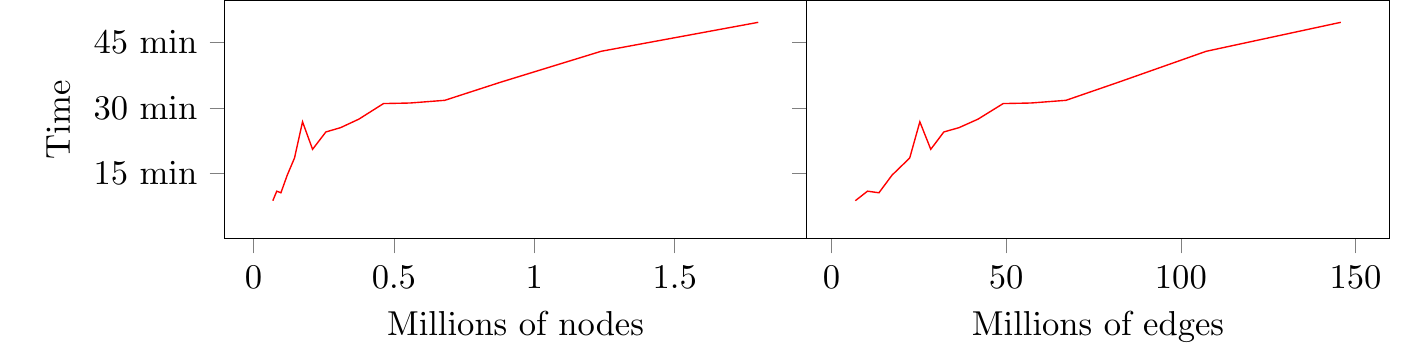}
    \caption{The total time of computation of \cad on increasing snapshots of the IMDB graph.}
    \label{fig:imdb_asymshort}
\end{figure}

We have run our algorithm with $\lambda=0.0002$ and $\delta=0.1$: as discussed in \Cref{sec:topkshort}, this means that either two nodes are ranked correctly, or their centrality is known with precision at most $\lambda$. As a consequence, if two nodes are not ranked correctly, the difference between their real betweenness is at most $2\lambda$. The full results are available in \Cref{sec:wiki_results}.

All the graphs were processed in less than one hour, apart from the Wikipedia graph, which was processed in approximately $1$ hour and $38$ minutes. In \Cref{fig:imdb_asymshort}, we plot the running times for the actor graphs: from the figure, it seems that the time needed by our algorithm scales slightly sublinearly with respect to the size of the graph. This result respects the results in \Cref{sec:bfsshort}, because the degrees in the actor collaboration network are power law distributed with exponent $\beta \approx 2.13$ (\url{http://konect.uni-koblenz.de/networks/actor-collaboration}). Finally, we observe that the ranking is quite precise: indeed, most of the times, there are very few nodes in the top-$5$ with the same ranking, and the ranking rarely contains significantly more than $10$ nodes.

\vfill

\subparagraph{Acknowledgements.} The authors would like to thank Matteo Riondato for several constructive comments on an earlier version of this work. We also thank Elisabetta Bergamini, Richard Lipton, and Sebastiano Vigna for helpful discussions and Holger Dell for his help with the experiments.

\newpage

\bibliography{full-version}

\begin{thebibliography}{10}

\bibitem{grand2015}
Amir Abboud, Fabrizio Grandoni, and Virginia~Vassilevska Williams.
\newblock Subcubic equivalences between graph centrality problems, apsp and
  diameter.
\newblock In {\em Proceedings of the Twenty-Sixth Annual ACM-SIAM Symposium on
  Discrete Algorithms}, pages 1681--1697. SIAM, 2015.

\bibitem{Abboud2016}
Amir Abboud, Virginia~V. Williams, and Joshua Wang.
\newblock {Approximation and fixed parameter subquadratic algorithms for radius
  and diameter}.
\newblock In {\em Proceedings of the 26th ACM-SIAM Symposium on Discrete
  Algorithms (SODA)}, pages 377--391, may 2016.
\newblock URL: \url{http://arxiv.org/abs/1506.0179}, \href
  {http://arxiv.org/abs/1506.0179} {\path{arXiv:1506.0179}}.

\bibitem{aggarwal_adaptive_2009}
Ankit Aggarwal, Amit Deshpande, and Ravi Kannan.
\newblock Adaptive {Sampling} for k-{Means} {Clustering}.
\newblock In Irit Dinur, Klaus Jansen, Joseph Naor, and José Rolim, editors,
  {\em Approximation, {Randomization}, and {Combinatorial} {Optimization}.
  {Algorithms} and {Techniques}}, number 5687 in Lecture {Notes} in {Computer}
  {Science}. Springer Berlin Heidelberg, 2009.

\bibitem{anthonisse1971rush}
Jac~M Anthonisse.
\newblock The rush in a directed graph.
\newblock {\em Stichting Mathematisch Centrum. Mathematische Besliskunde},
  BN(9/71):1--10, 1971.

\bibitem{Bader2007}
David~A. Bader, Shiva Kintali, Kamesh Madduri, and Milena Mihail.
\newblock {Approximating betweenness centrality}.
\newblock {\em The 5th Workshop on Algorithms and Models for the Web-Graph},
  2007.

\bibitem{bavelas1948mathematical}
Alex Bavelas.
\newblock A mathematical model for group structures.
\newblock {\em Human organization}, 7(3):16--30, 1948.

\bibitem{elisabetta_personal}
Elisabetta Bergamini.
\newblock private communication, 2016.

\bibitem{Bergamini2016ComputingTC}
Elisabetta Bergamini, Michele Borassi, Pierluigi Crescenzi, Andrea Marino, and
  Henning Meyerhenke.
\newblock Computing top-k closeness centrality faster in unweighted graphs.
\newblock In {\em ALENEX}, 2016.

\bibitem{Bergamini2015FullyDynamicAO}
Elisabetta Bergamini and Henning Meyerhenke.
\newblock Fully-dynamic approximation of betweenness centrality.
\newblock In {\em ESA}, 2015.

\bibitem{sebagraph}
Paolo Boldi, Andrea Marino, Massimo Santini, and Sebastiano Vigna.
\newblock {BUbiNG}: Massive crawling for the masses.
\newblock In {\em Proceedings of the Companion Publication of the 23rd
  International Conference on World Wide Web}, pages 227--228, 2014.

\bibitem{Bollobas1980}
B{\'{e}}la Bollob{\'{a}}s.
\newblock {A probabilistic proof of an asymptotic formula for the number of
  labelled regular graphs}.
\newblock {\em European Journal of Combinatorics}, 1(4):311--316, 1980.
\newblock \href {http://dx.doi.org/10.1016/S0195-6698(80)80030-8}
  {\path{doi:10.1016/S0195-6698(80)80030-8}}.

\bibitem{Bollobas2007}
B{\'{e}}la Bollob{\'{a}}s, Svante Janson, and Oliver Riordan.
\newblock {The phase transition in inhomogeneous random graphs}.
\newblock {\em Random Structures and Algorithms}, 31(1):3--122, 2007.

\bibitem{Borassi2015}
Michele Borassi, Pierluig Crescenzi, and Michel Habib.
\newblock {Into the square - On the complexity of some quadratic-time solvable
  problems}.
\newblock In {\em Proceedings of the 16th Italian Conference on Theoretical
  Computer Science (ICTCS)}, pages 1--17, 2015.

\bibitem{Borassi2016}
Michele Borassi, Pierluigi Crescenzi, and Luca Trevisan.
\newblock An {Axiomatic} and an {Average}-{Case} {Analysis} of {Algorithms} and
  {Heuristics} for {Metric} {Properties} of {Graphs}.
\newblock {\em arXiv:1604.01445 [cs]}, April 2016.
\newblock arXiv: 1604.01445.

\bibitem{Borgatti2006AGP}
Stephen~P. Borgatti and Martin~G. Everett.
\newblock A graph-theoretic perspective on centrality.
\newblock {\em Social Networks}, 28:466--484, 2006.

\bibitem{Brandes2001}
Ulrik Brandes.
\newblock {A faster algorithm for betweenness centrality}.
\newblock {\em The Journal of Mathematical Sociology}, 25(2):163--177, jun
  2001.
\newblock \href {http://dx.doi.org/10.1080/0022250X.2001.9990249}
  {\path{doi:10.1080/0022250X.2001.9990249}}.

\bibitem{Brandes2008OnVO}
Ulrik Brandes.
\newblock On variants of shortest-path betweenness centrality and their generic
  computation.
\newblock {\em Social Networks}, 30:136--145, 2008.

\bibitem{Brandes2007}
Ulrik Brandes and Christian Pich.
\newblock {Centrality Estimation in Large Networks}.
\newblock {\em International Journal of Bifurcation and Chaos},
  17(07):2303--2318, 2007.
\newblock \href {http://dx.doi.org/10.1142/S0218127407018403}
  {\path{doi:10.1142/S0218127407018403}}.

\bibitem{cohn1958networks}
Bernard~S Cohn and McKim Marriott.
\newblock Networks and centres of integration in indian civilization.
\newblock {\em Journal of social Research}, 1(1):1--9, 1958.

\bibitem{Dolev2010RoutingBC}
Shlomi Dolev, Yuval Elovici, and Rami Puzis.
\newblock Routing betweenness centrality.
\newblock {\em J. ACM}, 57, 2010.

\bibitem{Easley2010NetworksCA}
David~A. Easley and Jon~M. Kleinberg.
\newblock Networks, crowds, and markets - reasoning about a highly connected
  world.
\newblock In {\em DAGLIB}, 2010.

\bibitem{Eppstein2001FastAO}
David Eppstein and Joseph Wang.
\newblock Fast approximation of centrality.
\newblock {\em J. Graph Algorithms Appl.}, 8:39--45, 2001.

\bibitem{Erds2015ADA}
Dóra Erdős, Vatche Ishakian, Azer Bestavros, and Evimaria Terzi.
\newblock A divide-and-conquer algorithm for betweenness centrality.
\newblock In {\em Proceedings of the 2015 SIAM International Conference on Data
  Mining}, pages 433--441, 2015.

\bibitem{Fernholz2007}
Daniel Fernholz and Vijaya Ramachandran.
\newblock {The diameter of sparse random graphs}.
\newblock {\em Random Structures and Algorithms}, 31(4):482--516, 2007.
\newblock \href {http://dx.doi.org/10.1002/rsa} {\path{doi:10.1002/rsa}}.

\bibitem{Geisberger2008ContractionHF}
Robert Geisberger, Peter Sanders, Dominik Schultes, and Daniel Delling.
\newblock Contraction hierarchies: Faster and simpler hierarchical routing in
  road networks.
\newblock In Catherine~C. McGeoch, editor, {\em Experimental Algorithms: 7th
  International Workshop, WEA 2008}, pages 319--333. Springer Berlin
  Heidelberg, 2008.

\bibitem{Impagliazzo2001}
Russell Impagliazzo, Ramamohan Paturi, and Francis Zane.
\newblock {Which Problems Have Strongly Exponential Complexity?}
\newblock {\em Journal of Computer and System Sciences}, 63(4):512--530, dec
  2001.
\newblock \href {http://dx.doi.org/10.1006/jcss.2001.1774}
  {\path{doi:10.1006/jcss.2001.1774}}.

\bibitem{Jacob2004AlgorithmsFC}
Riko Jacob, Dirk Kosch\"{u}tzki, Katharina~Anna Lehmann, Leon Peeters, and
  Dagmar Tenfelde-Podehl.
\newblock Algorithms for centrality indices.
\newblock In {\em DAGSTUHL}, 2004.

\bibitem{Kaindl1997BidirectionalHS}
Hermann Kaindl and Gerhard Kainz.
\newblock Bidirectional heuristic search reconsidered.
\newblock {\em J. Artif. Intell. Res. (JAIR)}, 7:283--317, 1997.

\bibitem{lim2011online}
Yeon-sup Lim, Daniel~S Menasch{\'e}, Bruno Ribeiro, Don Towsley, and Prithwish
  Basu.
\newblock Online estimating the k central nodes of a network.
\newblock {\em Proceedings of IEEE NSW}, pages 118--122, 2011.

\bibitem{lipton_query_1995}
Richard~J. Lipton and Jeffrey~F. Naughton.
\newblock Query {Size} {Estimation} by {Adaptive} {Sampling}.
\newblock {\em Journal of Computer and System Sciences}, 51(1):18--25, August
  1995.
\newblock \href {http://dx.doi.org/10.1006/jcss.1995.1050}
  {\path{doi:10.1006/jcss.1995.1050}}.

\bibitem{lipton_estimating_1989}
Richard~J. Lipton and {Naughton, Jeffrey F.}
\newblock Estimating the size of generalized transitive closures.
\newblock In {\em Proceedings of the 15th {Int}. {Conf}. on {Very} {Large}
  {Data} {Bases}}, 1989.

\bibitem{Chung2006}
Linyuan Lu and Fan R.~K. Chung.
\newblock {\em Complex graphs and networks}.
\newblock Number no. 107 in {CBMS} regional conference series in mathematics.
  American Mathematical Society, 2006.

\bibitem{newman2010networks}
Mark Newman.
\newblock {\em Networks: an introduction}.
\newblock OUP Oxford, 2010.

\bibitem{newman2001scientific}
Mark~EJ Newman.
\newblock Scientific collaboration networks. ii. shortest paths, weighted
  networks, and centrality.
\newblock {\em Physical review E}, 64(1):016132, 2001.

\bibitem{Norros2006}
Ilkka Norros and Hannu Reittu.
\newblock {On a conditionally Poissonian graph process}.
\newblock {\em Advances in Applied Probability}, 38(1):59--75, 2006.

\bibitem{pfeffer2012k}
J{\"u}rgen Pfeffer and Kathleen~M Carley.
\newblock k-centralities: local approximations of global measures based on
  shortest paths.
\newblock In {\em Proceedings of the 21st international conference companion on
  World Wide Web}, pages 1043--1050. ACM, 2012.

\bibitem{pietracaprina_mining_2010}
Andrea Pietracaprina, Matteo Riondato, Eli Upfal, and Fabio Vandin.
\newblock Mining {Top}-{K} {Frequent} {Itemsets} {Through} {Progressive}
  {Sampling}.
\newblock {\em Data Mining and Knowledge Discovery}, 21(2):310--326, September
  2010.
\newblock \href {http://dx.doi.org/10.1007/s10618-010-0185-7}
  {\path{doi:10.1007/s10618-010-0185-7}}.

\bibitem{pohl1969bi}
Ira Pohl.
\newblock {\em Bi-directional and heuristic search in path problems}.
\newblock PhD thesis, Dept. of Computer Science, Stanford University., 1969.

\bibitem{Riondato2015}
Matteo Riondato and Evgenios~M Kornaropoulos.
\newblock {Fast approximation of betweenness centrality through sampling}.
\newblock {\em Data Mining and Knowledge Discovery}, 30(2):438--475, 2015.

\bibitem{Riondato2016}
Matteo Riondato and Eli Upfal.
\newblock {ABRA: Approximating Betweenness Centrality in Static and Dynamic
  Graphs with Rademacher Averages}.
\newblock {\em arXiv preprint 1602.05866}, pages 1--27, 2016.
\newblock \href {http://arxiv.org/abs/1602.05866} {\path{arXiv:1602.05866}}.

\bibitem{atalyrek2013ShatteringAC}
Ahmet~Erdem Sariy{\"u}ce, Erik Saule, Kamer Kaya, and {\"U}mit~V
  {\c{C}}ataly{\"u}rek.
\newblock Shattering and compressing networks for betweenness centrality.
\newblock In {\em SIAM Data Mining Conference (SDM). SIAM}, 2013.

\bibitem{shaw1954group}
Marvin~E Shaw.
\newblock Group structure and the behavior of individuals in small groups.
\newblock {\em The Journal of psychology}, 38(1):139--149, 1954.

\bibitem{shimbel1953structural}
Alfonso Shimbel.
\newblock Structural parameters of communication networks.
\newblock {\em The bulletin of mathematical biophysics}, 15(4):501--507, 1953.

\bibitem{Staudt2014}
Christian~L. Staudt, Aleksejs Sazonovs, and Henning Meyerhenke.
\newblock {Networkit: an interactive tool suite for high-performance network
  analysis}.
\newblock {\em arXiv preprint 1403.3005}, pages 1--25, 2014.

\bibitem{Hofstad2014}
Remco van~der Hofstad.
\newblock {Random graphs and complex networks. Vol. II}.
\newblock Manuscript, 2014.

\bibitem{Vella2016AlgorithmsAH}
Flavio Vella, Giancarlo Carbone, and Massimo Bernaschi.
\newblock Algorithms and heuristics for scalable betweenness centrality
  computation on multi-gpu systems.
\newblock {\em CoRR}, abs/1602.00963, 2016.

\bibitem{seba}
Sebastiano Vigna.
\newblock private communication, 2016.

\bibitem{wasserman1994social}
Stanley Wasserman and Katherine Faust.
\newblock {\em Social network analysis: Methods and applications}, volume~8.
\newblock Cambridge university press, 1994.

\bibitem{Williams2014}
Ryan Williams and Huacheng Yu.
\newblock {Finding orthogonal vectors in discrete structures}.
\newblock In {\em Proceedings of the 24th ACM-SIAM Symposium on Discrete
  Algorithms (SODA)}, pages 1867--1877, 2014.
\newblock URL: \url{http://epubs.siam.org/doi/abs/10.1137/1.9781611973402.135},
  \href {http://dx.doi.org/10.1137/1.9781611973402.135}
  {\path{doi:10.1137/1.9781611973402.135}}.

\end{thebibliography}

\appendix

\newpage

\section{Pseudocode}

\begin{algorithm2e}[H]
 \begin{footnotesize}
\LinesNumbered
\SetKwFunction{enqueue}{enqueue}
\SetKwFunction{extractMin}{extractMin}
\SetKwFunction{samplePath}{samplePath}
 \SetKwInOut{Input}{Input}
 \SetKwInOut{Output}{Output}
 
\Input{a graph $G=(V,E)$, and two values $\lambdalv v, \lambdauv v$ for each $v \in V$}
\Output{for each $v\in V$, two values $\deltalv v,\deltauv v$}
$\alpha \gets \frac{\omega}{100}$\;
$\epsilon \gets 0.0001$\;

\ForEach {$i \in [1,\alpha]$}{
	$\p = \samplePath()$\;
	\lForEach{$v \in \p$}{$\btildev v \gets \btildev v+1$}
}

\ForEach{$v \in V$}{$\btildev v \gets \btildev v/\alpha$\;
$c_L(v) \gets \frac{2\tilde{b}(v)\omega}{(\lambdalv v)^2}$\;
$c_U(v) \gets \frac{2\tilde{b}(v)\omega}{(\lambdauv v)^2}$\;
}

Binary search to find $C$ such that $\sum_{v \in V} \exp\left(-\frac{C}{c_L(v)}\right)+\exp\left(-\frac{C}{c_U(v)}\right)=\frac{\delta}{2}-\epsilon\delta$\;
\ForEach{$v \in V$}{
    $\deltalv v \gets \exp\left(-\frac{C}{c_L(v)}\right)+\frac{\epsilon\delta}{2n}$\; 
    $\deltauv v\gets \exp\left(-\frac{C}{c_U(v)}\right)+\frac{\epsilon\delta}{2n}$\;
}
\Return{$\boldsymbol{b}$}\;
\end{footnotesize}
\caption{the function \protect\computeDeltaV.}
\label{alg:choosedeltav}
\end{algorithm2e}

\begin{algorithm2e}[H]
 \begin{footnotesize}
\LinesNumbered
 \SetKwInOut{Input}{Input}
 \SetKwInOut{Output}{Output}
 
\Input{for each node $v$, the values of $\btildev v, \deltalv v,\deltauv v$, and the values of $\omega$ and $\fixtime$}
\Output{True if the algorithm should stop, False otherwise}
Sort nodes in decreasing order of $\btildev v$, obtaining $v_1,\dots,v_n$\;

\For {$i\in [1,\dots,k]$}{
	\If{$f(\btildev {v_{i}},\deltalv {v_{i}},\omega,\fixtime)>\lambda$ or $g(\btildev {v_{i}},\deltauv {v_{i}},\omega,\fixtime)>\lambda$}{
	    \If{$\btildev{v_{i-1}}-f(\btildev {v_{i-1}},\deltalv {v_{i-1}},\omega,\fixtime)<\btildev{v_i}+g(\btildev {v_{i}},\deltauv {v_{i}},\omega,\fixtime)$ or $\btildev{v_{i}}-f(\btildev {v_{i}},\deltalv {v_{i}},\omega,\fixtime)<\btildev{v_{i+1}}+g(\btildev {v_{i+1}},\deltauv {v_{i+1}},\omega,\fixtime)$}{
            \Return{False}\;
	    }
	}
}
\For {$i\in [k+1,\dots,n]$}{
	\If{$f(\btildev {v_{i}},\deltalv {v_{i}},\omega,\fixtime)>\lambda$ or $g(\btildev {v_{i}},\deltauv {v_{i}},\omega,\fixtime)>\lambda$}{
	    \If{$\btildev{v_{k}}-f(\btildev {v_{k}},\deltalv {v_{k}},\omega,\fixtime)<\btildev{v_i}+g(\btildev {v_{i}},\deltauv {v_{i}},\omega,\fixtime)$}{
            \Return{False}\;
	    }
	}
}

\Return{True}\;

\end{footnotesize}
\caption{the function \protect\testStoppingCondition\ to compute the top-$k$ nodes.}
\label{alg:stopcondtopk}
\end{algorithm2e}

\section{Concentration Inequalities} \label{app:concentration}

\begin{lem}[Hoeffding's inequality]
    \label{lem:hoeff}
    Let $\X_1,\dots,\X_k$ be independent random variables such that $a_i<\X_i<b_i$, and let $\X=\sum_{i=1}^k \X_i$. Then, 
    \[
        \Pr\left(|\X-\E[X]| \geq \lambda\right) \leq \exp\left\{-\frac{2\lambda^2}{\sum_{i=1}^k(b_i-a_i)^2}\right\}.
    \]
\end{lem}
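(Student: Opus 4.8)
The plan is to use the standard Chernoff exponential-moment method. Fix any $s>0$. Since $x\mapsto e^{sx}$ is increasing, Markov's inequality applied to $e^{s(\X-\E[\X])}$ gives
\[
\Pr\left(\X-\E[\X]\geq\lambda\right)\leq e^{-s\lambda}\,\E\!\left[e^{s(\X-\E[\X])}\right].
\]
Writing $Y_i:=\X_i-\E[\X_i]$, the assumed independence of the $\X_i$ factorizes the moment generating function, so
\[
\E\!\left[e^{s(\X-\E[\X])}\right]=\prod_{i=1}^k\E\!\left[e^{sY_i}\right].
\]
Thus the whole problem reduces to bounding the moment generating function of a single centered, bounded random variable.

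The key step (\emph{Hoeffding's lemma}) is to show that if $Y$ satisfies $\E[Y]=0$ and $\alpha\leq Y\leq\beta$, then $\E[e^{sY}]\leq e^{s^2(\beta-\alpha)^2/8}$. I would prove this by convexity of the exponential: for every $y\in[\alpha,\beta]$,
\[
e^{sy}\leq\frac{\beta-y}{\beta-\alpha}\,e^{s\alpha}+\frac{y-\alpha}{\beta-\alpha}\,e^{s\beta}.
\]
Taking expectations and using $\E[Y]=0$ gives an upper bound of the form $e^{\phi(u)}$, where $u=s(\beta-\alpha)$, $p=-\alpha/(\beta-\alpha)\in[0,1]$, and $\phi(u)=-pu+\log\!\left(1-p+p\,e^{u}\right)$. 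A direct computation yields $\phi(0)=0$, $\phi'(0)=0$, and
\[
\phi''(u)=\frac{p(1-p)e^u}{(1-p+p\,e^u)^2}\leq\frac14,
\]
the last inequality following from $t(1-t)\leq\frac14$ applied to $t=\tfrac{p\,e^u}{1-p+p\,e^u}$. By Taylor's theorem with remainder, $\phi(u)\leq u^2/8=s^2(\beta-\alpha)^2/8$, which is the claim.

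Combining the two steps with $\alpha=a_i$, $\beta=b_i$ gives
\[
\Pr\left(\X-\E[\X]\geq\lambda\right)\leq\exp\!\left(-s\lambda+\frac{s^2}{8}\sum_{i=1}^k(b_i-a_i)^2\right).
\]
I would then minimize the exponent over $s>0$, which is attained at $s=4\lambda/\sum_i(b_i-a_i)^2$ and yields the upper-tail bound $\exp\!\left(-2\lambda^2/\sum_i(b_i-a_i)^2\right)$. Applying the identical argument to $-\X$ controls the lower tail with the same bound, and a union bound over the two tails produces the two-sided statement for $|\X-\E[X]|$ (with the customary factor of $2$ in front of the exponential).

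The main obstacle is the moment generating function estimate, i.e.\ \emph{Hoeffding's lemma}: the convexity reduction is routine, but verifying $\phi''(u)\leq\frac14$ uniformly in $u$ and invoking Taylor's theorem correctly is the delicate part, since any cruder bound here would degrade the sharp constant $2$ in the final exponent on which the usefulness of the inequality rests.
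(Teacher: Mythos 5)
Your proof is correct, and it is the classical argument: Chernoff's exponential-moment method, Hoeffding's lemma for the moment generating function of a bounded centered variable (via convexity of $e^{sx}$ and the bound $\phi''\leq\frac14$), and optimization over $s$. There is nothing in the paper to compare it against: the paper states this lemma in its appendix on concentration inequalities as a standard quoted result and gives no proof at all, so your write-up supplies exactly the argument the paper treats as a black box. All the steps check out, including the optimal choice $s=4\lambda/\sum_i(b_i-a_i)^2$ and the verification $\phi''(u)=t(1-t)\leq\frac14$.

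One discrepancy worth flagging: your (correct) two-sided bound carries the factor $2$ in front of the exponential, whereas the lemma as printed in the paper omits it. The paper's statement is in fact false without that factor (take a single variable equal to $\pm\frac12$ with probability $\frac12$ each and $\lambda=\frac12$: the left-hand side is $1$ while the right-hand side is strictly less than $1$ for any valid $a_1<-\frac12$, $b_1>\frac12$). The paper itself implicitly acknowledges this, since the remark immediately following the lemma applies it in the form $\Pr\left(\left|\boldsymbol{b}(v)-\bc(v)\right|>\lambda\right)<2e^{-2k\lambda^2}$, with the factor $2$ restored. So your version is the right one, and the printed statement should be read as having a typo.
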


\begin{remark}
    If we apply Hoeffding's inequality with $\X_i=\Xvp v\Path$, $\X=k\ba v = \sum_{i=1}^k \Xvp v\Path$, $a_i=0$, $b_i=1$, we obtain that $\P\left(\left|\ba v- \betv v\right|>\lambda\right)<2e^{-2k\lambda^2}$. 
    Then, if we fix $\delta=2e^{-2k\lambda^2}$, the error is $\lambda=\sqrt{\frac{\log(2/\delta)}{2k}}$, and the minimum $k$ needed to obtain an error $\lambda$ on the betweenness of a single node is $\frac{1}{2\lambda^2}\log(2/\delta)$.
\end{remark}

\begin{lem}[Chernoff bound (\cite{Chung2006})]
    \label{lem:cb}
    Let $\X_1,\dots,\X_k$ be independent random variables such that $\X_i\leq M$ for each $1\leq i\leq n$, and let $\X=\sum_{i=1}^k \X_i$. Then, 
    \[
        \Pr\left(\X \geq \E[\X] + \lambda \right) \leq     \exp\left\{- \frac{\lambda^2}{2(\sum_{i=1}^n \E[\X_i^2] + M \lambda /3)}\right\}.
    \]
\end{lem}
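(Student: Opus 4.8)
This is a Bernstein-type tail bound, so the natural approach is the exponential-moment (Chernoff) method, and I would not look for anything more exotic. The plan is: (i) bound the upper tail of the centred sum by a moment generating function, (ii) factor that MGF over the independent summands, (iii) control each factor with a single scalar inequality, and (iv) optimise the free parameter and simplify.

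First I would fix $t>0$ and apply Markov's inequality to $e^{t\X}$, which after centring and using independence gives
\[
    \Pr\left(\X \geq \E[\X]+\lambda\right) \;\leq\; e^{-t\lambda}\,\E\!\left[e^{t(\X-\E[\X])}\right] \;=\; e^{-t\lambda}\prod_{i=1}^{k}\E\!\left[e^{t(\X_i-\E[\X_i])}\right].
\]
The heart of the argument is a clean bound on each factor. Setting $\Y_i:=\X_i-\E[\X_i]$ (which satisfies $\Y_i\le M$, since in the intended application the summands are nonnegative, so $\E[\X_i]\ge 0$ and $\Y_i\le\X_i\le M$), I would use the elementary fact that $\phi(y):=(e^{ty}-1-ty)/y^2$ is nondecreasing on $\mathbb{R}$. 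Since $\Y_i\le M$ gives $\phi(\Y_i)\le\phi(M)$, multiplying through by $\Y_i^2\ge 0$ yields the pointwise bound $e^{t\Y_i}-1-t\Y_i\le \frac{\Y_i^2}{M^2}\big(e^{tM}-1-tM\big)$, valid for either sign of $\Y_i$. Taking expectations, using $\E[\Y_i]=0$, $\var(\X_i)\le\E[\X_i^2]$, and $1+x\le e^x$, gives
\[
    \E\!\left[e^{t\Y_i}\right] \;\leq\; \exp\!\left(\frac{\E[\X_i^2]}{M^2}\big(e^{tM}-1-tM\big)\right).
\]

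Multiplying the factors and writing $\nu:=\sum_{i=1}^{k}\E[\X_i^2]$, the tail is at most $\exp\!\big(-t\lambda+\tfrac{\nu}{M^2}(e^{tM}-1-tM)\big)$. Choosing the minimiser $t=\tfrac{1}{M}\log\!\big(1+\tfrac{M\lambda}{\nu}\big)$ collapses this to the exact Bernstein bound $\exp\!\big(-\tfrac{\nu}{M^2}\,h(M\lambda/\nu)\big)$, where $h(u)=(1+u)\log(1+u)-u$.

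The final step, and the piece I expect to be the fiddliest, is the purely one-variable inequality $h(u)\ge \frac{u^2}{2(1+u/3)}$ for $u\ge 0$, which replaces the transcendental $h$ by the advertised denominator and produces exactly $\exp\!\big(-\tfrac{\lambda^2}{2(\nu+M\lambda/3)}\big)$. Both nonroutine ingredients, the monotonicity of $\phi$ and the lower bound on $h$, are single-variable calculus facts that I would establish by differentiating and checking signs at the origin (for $\phi$, the relevant numerator and its first derivative vanish at $0$ while its second derivative has a single sign change, forcing $\phi'\ge 0$; for $h$, the difference $h(u)-\frac{u^2}{2(1+u/3)}$ vanishes with its low-order derivatives at $0$ and stays nonnegative). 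Everything else is routine bookkeeping.
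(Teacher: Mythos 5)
Your overall route is the right one, and in fact it is the standard proof of this inequality (note the paper itself does not prove the lemma; it quotes it from Chung and Lu, whose proof is precisely this Bennett--Bernstein exponential-moment argument): Markov's inequality applied to $e^{t\X}$, factorization by independence, the monotonicity of $\phi(y)=(e^{ty}-1-ty)/y^2$ to control each factor, the optimizing choice $t=\frac{1}{M}\log(1+M\lambda/\nu)$, and the one-variable inequality $h(u)\ge u^2/(2(1+u/3))$. All of those steps are correct as you state them, and your calculus sketches for the two nonroutine facts are sound.

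The one genuine flaw is the centring step. From the hypothesis $\X_i\le M$ you cannot conclude $\Y_i=\X_i-\E[\X_i]\le M$; that needs $\E[\X_i]\ge 0$, and the lemma does not assume the $\X_i$ are nonnegative. Your parenthetical appeal to ``the intended application'' means that, as written, you have proved only a special case of the stated lemma rather than the lemma itself. The fix is to not centre at all: apply the monotonicity bound directly to $\X_i$, giving the pointwise inequality $e^{t\X_i}\le 1+t\X_i+\frac{\X_i^2}{M^2}\left(e^{tM}-1-tM\right)$, hence $\E\left[e^{t\X_i}\right]\le\exp\left(t\E[\X_i]+\frac{\E[\X_i^2]}{M^2}\left(e^{tM}-1-tM\right)\right)$; the resulting factor $e^{t\E[\X]}$ is cancelled by the $e^{-t(\E[\X]+\lambda)}$ coming from Markov's inequality, and the rest of your argument goes through verbatim with $\nu=\sum_{i}\E[\X_i^2]$. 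This uncentred version is exactly why the lemma is stated with second moments $\E[\X_i^2]$ rather than variances, and with a one-sided bound on $\X_i$ itself rather than on $\X_i-\E[\X_i]$: your detour through $\var(\X_i)\le\E[\X_i^2]$ gives away precisely the generality that the uncentred argument keeps.
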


\begin{thm}[McDiarmid '98 (\cite{Chung2006})]
    \label{thm:mcdiarmid}
    Let $X$ be a martingale associated with a filter $\mathcal{F}$, satisfying 
    \begin{itemize}
        \item $\mathrm{Var}\left(X_{i}\middle|\ \mathcal{F}_{i}\right)\leq\sigma_{i}$
        for $1\leq i\leq\ell$, 
        \item $\left|X_{i}-X_{i-1}\right|\leq M$, for $1\leq i\leq\ell$.
    \end{itemize}
    Then, we have
    \[
        \Pr\left(X-\mathbb{E}\left(X\right)\geq\lambda\right)\leq\exp\left(-\frac{\lambda^{2}}{2\left(\sum_{i=1}^{\ell}\sigma_{i}^{2}+M\lambda/3\right)}\right).
    \]
\end{thm}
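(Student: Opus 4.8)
The plan is to prove this by the standard exponential supermartingale (Bernstein--Freedman) method, which is the martingale analogue of the Chernoff-type argument underlying \Cref{lem:cb}. First I would pass to the martingale increments $Y_i := X_i - X_{i-1}$, so that $X - \E[X] = \sum_{i=1}^{\ell} Y_i$ with $\E[Y_i \mid \mathcal{F}_{i-1}] = 0$, $|Y_i| \le M$, and $\E[Y_i^2 \mid \mathcal{F}_{i-1}] \le \sigma_i^2$ (the conditional variance hypothesis, since the zero-mean increments have $\E[Y_i^2\mid\mathcal{F}_{i-1}]=\var(X_i\mid\mathcal{F}_{i-1})$). Fixing $t>0$ to be optimized later, the goal is to control the moment generating function $\E\bigl[e^{t(X-\E[X])}\bigr]$ and then apply Markov's inequality.

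The crux -- and the main obstacle -- is a sharp one-step conditional moment bound, namely $\E\bigl[e^{tY_i}\mid\mathcal{F}_{i-1}\bigr] \le \exp\bigl(\tfrac{t^2\sigma_i^2}{2(1-tM/3)}\bigr)$, valid for $0<t<3/M$. I would derive this from the fact that $g(x):=(e^x-1-x)/x^2$ is nondecreasing: since $Y_i \le M$, we get the pointwise inequality $e^{tY_i} \le 1 + tY_i + t^2 Y_i^2\, g(tM)$, and taking conditional expectation annihilates the linear term, leaving $\E[e^{tY_i}\mid\mathcal{F}_{i-1}] \le 1 + t^2 \sigma_i^2\, g(tM) \le \exp\bigl(t^2\sigma_i^2\, g(tM)\bigr)$. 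The required form then follows from the elementary estimate $g(x) \le \frac{1}{2(1-x/3)}$ for $0\le x<3$, obtained by comparing $e^x-1-x=\sum_{k\ge 2} x^k/k!$ term by term against a geometric series via $k! \ge 2\cdot 3^{\,k-2}$. Getting this analytic comparison right, and using only the one-sided bound $Y_i\le M$ (the monotonicity of $g$ is what makes the upper bound suffice), is where the care lies.

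With the per-step bound in hand, I would chain it through the tower property, conditioning on the innermost step first: $\E\bigl[e^{t\sum_i Y_i}\bigr] = \E\bigl[e^{t\sum_{i<\ell} Y_i}\,\E[e^{tY_\ell}\mid\mathcal{F}_{\ell-1}]\bigr] \le e^{t^2\sigma_\ell^2 g(tM)}\,\E\bigl[e^{t\sum_{i<\ell}Y_i}\bigr]$, and by induction $\E\bigl[e^{t(X-\E[X])}\bigr] \le \exp\bigl(t^2 g(tM)\sum_{i=1}^\ell \sigma_i^2\bigr)$. Combining this with Markov's inequality $\Pr(X-\E[X]\ge\lambda)\le e^{-t\lambda}\,\E[e^{t(X-\E[X])}]$ and the bound on $g(tM)$ yields
\[
\Pr\bigl(X-\E[X]\ge\lambda\bigr)\le \exp\!\Bigl(-t\lambda + \frac{t^2\sum_{i}\sigma_i^2}{2(1-tM/3)}\Bigr).
\]

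Finally I would optimize in $t$. Writing $V=\sum_{i}\sigma_i^2$, the Bernstein choice $t=\lambda/(V+M\lambda/3)$ satisfies $t<3/M$ and gives $1-tM/3 = V/(V+M\lambda/3)$, so the quadratic term simplifies to exactly $t\lambda/2$ and the exponent collapses to $-t\lambda/2 = -\lambda^2/\bigl(2(V+M\lambda/3)\bigr)$, which is precisely the claimed inequality. This value of $t$ is not ad hoc: it is the minimizer of the exponent once the $g$-bound has been substituted, so it arises naturally from setting the derivative to zero.
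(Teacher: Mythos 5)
Your proof is correct: the one-step bound via the monotonicity of $g(x)=(e^x-1-x)/x^2$, the geometric-series estimate $g(x)\le \frac{1}{2(1-x/3)}$, the tower-property chaining, and the Bernstein choice $t=\lambda/(V+M\lambda/3)$ all check out, and you rightly read the hypothesis as $\var(X_i\mid\mathcal{F}_{i-1})\le\sigma_i^2$ (the statement's $\mathcal{F}_i$ and $\sigma_i$ are typos inherited from the source). Note that the paper itself gives no proof of this theorem --- it is imported verbatim from McDiarmid via \cite{Chung2006} --- and your argument is essentially the standard Bernstein--Freedman proof underlying that cited result, so there is nothing to contrast.
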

\section{Proof of Theorem~\ref{thm:mainshort}}
\label{sec:adaptive}

In our algorithm, we sample $\stoptime$ shortest paths $\p_i$, where $\stoptime$ is a random variable such that $\stoptime=\fixtime$ can be decided by looking at the first $\fixtime$ paths sampled (see Algorithm~\ref{alg:mainshort}). 
Furthermore, thanks to Eq. (3) in \cite{Riondato2015}, we assume that $\stoptime \leq \omega$ for some fixed $\omega \in \mathbb{R}^+$ such that, after $\omega$ steps, $\Pr(\forall v,|\btildev v-\betv v|\leq\lambda) \geq 1-\frac{\delta}{2}$.
When the algorithm stops, our estimate of the betweenness is $\btilde(v):=\frac{1}{\stoptime}\sum_{i=1}^{\stoptime} \X_i(v)$, where $\X_i(v)$ is $1$ if $v$ belongs to $\p_i$, $0$ otherwise.

To estimate the error, we use the following theorem.

\begin{thm}
\label{thm:azuma-borassi}
For each node $v$ and for every fixed real numbers $\deltal$, $\deltau$, it holds
\begin{align*}
    \Pr\left(\betv v\leq\btildev v-f\left(\btilde(v),\deltal, \omega, \stoptime\right)\right) & \leq\deltal\quad\mbox{and}\\
    \Pr\left(\betv v\geq\btildev v+g\left(\btildev v,\deltau, \omega, \stoptime\right)\right) & \leq\deltau,
\end{align*}
where 
\begin{align}
    f\left(\btildev v,\deltal, \omega, \stoptime\right)
        &=\frac{1}{\stoptime} \logdl \left(\frac{1}{3}-\frac{\omega}{\stoptime}+\sqrt{\left(\frac{1}{3}-\frac{\omega}{\stoptime}\right)^{2}+\frac{2\btildev v\omega}{\logdl}}\right) \quad \text{and}
        \label{eq:thm_first}\\
    g\left(\btildev v,\deltau, \omega, \stoptime\right)
        &=\frac{1}{\stoptime} \logdu 
        \left(\frac{1}{3}+\frac{\omega}{\stoptime}
        +\sqrt{\left(\frac{1}{3}+\frac{\omega}{\stoptime}\right)^{2}+\frac{2\btildev v\omega}{\logdu}}\right).
        \label{eq:thm_second}
\end{align}
\end{thm}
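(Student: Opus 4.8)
The plan is to reduce both inequalities to a single tail bound for a bounded–increment martingale evaluated at the bounded stopping time $\stoptime$, and then to perform an inversion that trades the unknown $\betv v$ appearing in the variance for the observable $\btildev v$. Concretely, I would fix $v$ and set $S_j=\sum_{i=1}^{j}(\X_i(v)-\betv v)$, which is a mean–zero martingale for the natural filtration whose samples $\X_i(v)$ are $\{0,1\}$–valued. Since $\stoptime\le\omega$ is a stopping time, the stopped process $Z_j=S_{j\wedge\stoptime}$ is a martingale of fixed length $\omega$ with $Z_\omega=S_{\stoptime}=\stoptime(\btildev v-\betv v)$. Its increments satisfy $|Z_j-Z_{j-1}|\le 1$, and its conditional second moments are bounded by $\E[\X_i(v)^2]=\betv v$ (using $\X_i(v)^2=\X_i(v)$), so $\sum_{i=1}^{\omega}\sigma_i^2\le\omega\betv v$. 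Applying \Cref{thm:mcdiarmid} to $Z$ and to $-Z$ at the fixed horizon $\omega$ then yields, for every \emph{deterministic} threshold $\ell$, the true–mean tail bound $q(\ell):=\exp(-\ell^2/(2(\omega\betv v+\ell/3)))$ for both $\Pr(S_{\stoptime}\ge\ell)$ and $\Pr(-S_{\stoptime}\ge\ell)$.

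The next step is to observe that the events in the statement are exactly tail events for $S_{\stoptime}$ with a \emph{random, data–dependent} threshold: $\{\betv v\le\btildev v-f\}=\{S_{\stoptime}\ge\stoptime f\}$ and $\{\betv v\ge\btildev v+g\}=\{-S_{\stoptime}\ge\stoptime g\}$. I would then \emph{define} $f$ and $g$ as the solutions of the quadratics obtained by inverting $q(\cdot)=\deltal$ and $q(\cdot)=\deltau$ after substituting for the variance the confidence–bound proxies $\omega(\btildev v-f)$ and $\omega(\btildev v+g)$ in place of $\omega\betv v$; a direct computation shows that these solutions are precisely \eqref{eq:thm_first}--\eqref{eq:thm_second} and that they satisfy the deterministic identities $\deltal=\exp(-(\stoptime f)^2/(2(\omega(\btildev v-f)+\stoptime f/3)))$ and $\deltau=\exp(-(\stoptime g)^2/(2(\omega(\btildev v+g)+\stoptime g/3)))$. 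The crucial point for $f$ is that on the event $\{\betv v\le\btildev v-f\}$ the proxy $\omega(\btildev v-f)$ \emph{over}estimates the true variance $\omega\betv v$, so the identity forces $q(\stoptime f)\le\deltal$; since $q$ is strictly decreasing this is equivalent to $\stoptime f\ge\ell^{\ast}:=q^{-1}(\deltal)$, whence the random–threshold event is contained in the deterministic event $\{S_{\stoptime}\ge\ell^{\ast}\}$, of probability at most $q(\ell^{\ast})=\deltal$ by the first paragraph. For $g$ I would instead read the stopping inequality as a condition on the observed deviation $-S_{\stoptime}$: for each fixed value of $\stoptime$, the map $d\mapsto \stoptime\, g(\betv v-d/\stoptime,\stoptime)$ is decreasing while $d\mapsto d$ is increasing, so there is a single crossing, and at that crossing $\btildev v+g=\betv v$, which via the identity pins the crossing at the value $m^{\ast}:=q^{-1}(\deltau)$ \emph{independent of $\stoptime$}. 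Hence $\{\betv v\ge\btildev v+g\}=\{-S_{\stoptime}\ge m^{\ast}\}$, again a deterministic threshold, and its probability is at most $q(m^{\ast})=\deltau$.

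I expect the inversion to be the main obstacle, for two intertwined reasons. First, the threshold $\stoptime f$ is not deterministic—it depends both on the random stopping time and on the observed estimate $\btildev v$—so one cannot simply substitute a random $\ell$ into the fixed–horizon McDiarmid bound; the monotonicity of $q$ together with the self–referential definition of $f,g$ is exactly what converts the data–dependent threshold into a genuine deterministic one. Second, the two bounds are not symmetric: for the lower confidence bound the bad event makes the true mean, and hence the variance, \emph{small}, so the variance substitution is conservative and the containment is immediate, whereas for the upper confidence bound the bad event makes the true mean \emph{large}, and one must argue through the single–crossing structure of the stopping condition in the deviation variable to see that the effective deterministic threshold $m^{\ast}$ is free of $\stoptime$; this asymmetry is the reason for the opposite signs of the $\omega/\stoptime$ terms in \eqref{eq:thm_first} and \eqref{eq:thm_second}. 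A secondary technical point I would treat carefully is the application of \Cref{thm:mcdiarmid} to the stopped martingale of length $\omega$, bounding the conditional second moments by the deterministic constant $\betv v$ rather than by the random, $\stoptime$–dependent sum—which is precisely where the deterministic upper bound $\stoptime\le\omega$ from \cite{Riondato2015} is needed.
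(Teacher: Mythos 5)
Your proposal is correct, and its engine is the same as the paper's: you build the identical stopped martingale $Z_j=S_{j\wedge\stoptime}$, exploit the deterministic bound $\stoptime\le\omega$ to view it as a martingale of fixed length $\omega$, and apply \Cref{thm:mcdiarmid} to $\pm Z$ with variance proxy $\omega\betv v$; your deterministic thresholds $\ell^{\ast}$ and $m^{\ast}$ are exactly the paper's $\lambdal$ and $\lambdau$, namely the positive roots of $\lambda^{2}=2\log(1/\delta)\left(\omega\betv v+\lambda/3\right)$. The only genuine difference is how the inversion is organized. The paper plugs that root into the tail event and, after squaring, solves the resulting quadratic inequality in the unknown $\betv v$, thereby rewriting $\{\stoptime\btildev v-\stoptime\betv v\geq\lambdal\}$ verbatim as $\{\betv v\leq\btildev v-f\}$, and dismisses the $g$-bound as ``analogous''. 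You instead characterize $f$ and $g$ by the self-referential identities $(\stoptime f)^{2}=2\logdl\left(\omega(\btildev v-f)+\stoptime f/3\right)$ and $(\stoptime g)^{2}=2\logdu\left(\omega(\btildev v+g)+\stoptime g/3\right)$ (which indeed have \eqref{eq:thm_first} and \eqref{eq:thm_second} as their positive solutions), and then argue by monotonicity: for $f$, variance over-estimation on the bad event yields the containment $\{\betv v\leq\btildev v-f\}\subseteq\{S_{\stoptime}\geq\ell^{\ast}\}$, and for $g$, the single-crossing argument in the deviation variable identifies the bad event with $\{-S_{\stoptime}\geq m^{\ast}\}$, with $m^{\ast}$ free of $\stoptime$. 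Since both routes rest on the same quadratic identity, this is a reorganization of the same proof rather than a new method; what your version buys is an explicit treatment of the asymmetry between the two bounds---the point the paper compresses into the word ``analogous'', and the real reason for the opposite signs of the $\omega/\stoptime$ terms---whereas the paper's version derives the closed forms by direct computation instead of verifying them a posteriori. Both arguments are valid.
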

We prove Theorem \ref{thm:azuma-borassi} in Section \ref{sec:azuma-borassi}. In the rest of this section, we show how this theorem implies \Cref{thm:mainshort}. To simplify notation, we often omit the arguments of the function $f$ and $g$.

\begin{proof}[Proof of \Cref{thm:mainshort}]
 
Let $\boldsymbol{E}_1$ be the event $(\stoptime=\omega \wedge \exists v \in V, |\btildev v-\betv v| > \lambda)$, and let $\boldsymbol{E}_2$ be the event $(\stoptime<\omega \wedge (\exists v \in V, -f \geq \betv v - \btildev v \vee \betv v - \btildev v \geq g))$. Let us also denote $\btildekv \fixtime{v}=\frac{1}{\fixtime}\sum_{i=1}^{\fixtime} \X_i(v)$ (note that $\btildekv \stoptime v=\btildev v$).

By our choice of $\omega$ and Eq. (3) in \cite{Riondato2015},
\[
    \Pr(\boldsymbol{E}_1) \leq \Pr(\exists v \in V,|\btildekv \omega v-\betv v| > \lambda)\leq \frac{\delta}{2}
\]
where $\btildekv{\omega}v$ is the approximate betweenness of $v$ after $\omega$ samples. 
Furthermore, by \Cref{thm:azuma-borassi}, 
\begin{align*}
    \Pr(\boldsymbol{E}_2) &\leq \sum_{v \in V}\Pr(\stoptime<\omega \wedge -f \geq \betv v - \btildev v)+\Pr(\stoptime<\omega \wedge \betv v - \btildev v \leq g)\\ 
    &\leq \sum_{v \in V} \deltalv v+\deltauv v \leq \frac{\delta}{2}.
\end{align*}
By a union bound, $\Pr(\boldsymbol{E}_1 \vee \boldsymbol{E}_2) \leq \Pr(\boldsymbol{E}_1)+\Pr(\boldsymbol{E}_1) \leq \delta$, concluding the proof of \Cref{thm:mainshort}.
\end{proof}

\subsection{Proof of Theorem \ref{thm:azuma-borassi}}
\label{sec:azuma-borassi}

Since this theorem deals with a single node $v$, let us simply write $\bet = \betv v, \btilde = \btildev v, \X_i=\X_i(v)$. 
Let us consider $\Y^\tau=\sum_{i=1}^\tau\left(\X_i-\bet\right)$ (we recall that $\X_i=1$ if $v$ is in the $i$-th path sampled, $\X_i=0$ otherwise). 
Clearly, $\Y^\tau$ is a martingale, and $\stoptime$ is a stopping time for $\Y^\tau$: this means that also $\Z^\tau=\Y^{\min(\stoptime,\tau)}$ is a martingale. 

Let us apply Theorem \ref{thm:mcdiarmid} to the martingales $\Z$ and $-\Z$: 
for each fixed $\lambdal,\lambdau>0$ we have 
\begin{align}
    \Pr\left(\Z^{\omega}\geq\lambdal\right) 
        & =\Pr\left(\stoptime\btilde-\stoptime \bet\geq\lambdal\right)\leq\exp\left(-\frac{\lambdal^{2}}{2\left(\omega \bet+\lambdal/3\right)}\right)=\deltal\quad\mbox{and}
        \label{eq:azuma_apply}\\
    \Pr\left(-\Z^{\omega}\geq\lambdau\right) 
        & =\Pr\left(\stoptime\btilde-\stoptime \bet\leq-\lambdau\right)\leq\exp\left(-\frac{\lambdau^{2}}{2\left(\omega \bet+\lambdau/3\right)}\right)=\deltau.      \label{eq:azuma_apply_second}
\end{align}
We now show how to prove \eqref{eq:thm_first} from \eqref{eq:azuma_apply}. 
The way to derive \eqref{eq:thm_second} from \eqref{eq:azuma_apply_second} is analogous. 

If we express $\lambdal$ as a function of $\deltal$ we get
\[
    \lambdal^{2}=2\logdl\left(\omega \bet+\frac{\lambdal}{3}\right)\iff\lambdal^{2}-\frac{2}{3}\lambdal\logdl-2\omega \bet\logdl=0,
\]
which implies that 
\[
    \lambdal=\frac{1}{3}\logdl\pm\sqrt{\frac{1}{9}\left(\logdl\right)^{2}+2\omega \bet\logdl}.
\]
Since \eqref{eq:azuma_apply} holds for any positive value $\lambdal$, it also holds for the value corresponding to the positive solution of this equation, that is,
\[
    \lambdal=\frac{1}{3}\logdl+\sqrt{\frac{1}{9}\left(\logdl\right)^{2}+2\omega \bet\logdl}.
\]
Plugging this value into \eqref{eq:azuma_apply}, we obtain
\begin{equation}
    \Pr\left(\stoptime\btilde-\stoptime \bet\geq\frac{1}{3}\logdl+\sqrt{\frac{1}{9}\left(\logdl\right)^{2}+2\omega \bet\logdl}\right)\leq\deltal.
    \label{eq:first_azume}
\end{equation}
By assuming $\btilde-\bet\geq \frac 1{3\stoptime} \log (\frac{1}{\deltal})$, the event in \eqref{eq:first_azume} can be rewritten as
\[
    \left(\stoptime \bet\right)^{2}-2\bet\left(\stoptime^{2}\btilde+\omega\logdl-\frac{1}{3}\stoptime\logdl\right)-\frac{2}{3}\logdl\stoptime\btilde+\left(\stoptime\btilde\right)^{2}\geq0.
\]
By solving the previous quadratic equation w.r.t. $\bet$ we get
\begin{equation*}
    \bet \leq
    \btilde+\logdl \left( 
        \frac{\omega}{\stoptime^{2}}
        -\frac{1}{3\stoptime}
        -\sqrt{
        \left(\frac{\btilde}{\logdl}+\frac{\omega}{\stoptime^{2}}-\frac{1}{3\stoptime}\right)^{2}
        -\left(\frac{\btilde}{\logdl}\right)^{2}+\frac{2}{3\stoptime}\frac{\btilde}{\logdl}}
    \right),
\end{equation*}
where we only considered the solution which upper bounds $\bet$, since we assumed $\btilde-\bet\geq \frac 1{3\tau} \log (\frac{1}{\deltal})$.
After simplifying the terms under the square root in the previous expression, we get 
\begin{equation*}
    \bet \leq
    \btilde+\logdl \left( 
    \frac{\omega}{\stoptime^{2}}-\frac{1}{3\stoptime}-\sqrt{\left(\frac{\omega}{\stoptime^{2}}-\frac{1}{3\stoptime}\right)^{2}+\frac{2\btilde\omega}{\stoptime^2\logdl}}
    \right),
\end{equation*}
which means that
\[
    \Pr\left(\bet\leq\btilde-f\left(\btilde, \deltal, \omega, \stoptime\right)\right)\leq\deltal,
\]
concluding the proof.

\section{How to Choose \texorpdfstring{$\deltalv v, \deltauv v$}{delta(v)}} 
\label{sec:deltav}

In \Cref{sec:adaptive}, we proved that our algorithm works for any choice of the values $\deltalv v, \deltauv v$. In this section, we show how we can heuristically compute such values, in order to obtain the best performances. 

For each node $v$, let $\lambdalv v, \lambdauv v$ be the lower and the upper maximum error that we want to obtain on the betweenness of $v$: if we simply want all errors to be smaller than $\lambda$, we choose $\lambdalv v, \lambdauv v = \lambda$, but for other purposes different values might be needed. We want to minimize the time $\tau$ such that the approximation of the betweenness at time $\tau$ is in the confidence interval required. In formula, we want to minimize
\begin{equation} \label{eq:conddeltav}
    \min\left\{\tau \in \mathbb{N}:\forall v \in V, \left(f\left(\btildekv{\tau}v,\deltalv v,\omega,\fixtime\right) \leq \lambdalv v \wedge g\left(\btildekv{\tau}v,\deltauv v,\omega,\fixtime\right) \leq \lambdauv v\right)\right\}
\end{equation}

where $\btildekv{\tau}v$ is the approximation of $\betv{v}$ obtained at time $\tau$, and
\begin{align*}
    f\left(\tau,\btilde_\tau,\deltal, \omega\right)&=\frac{1}{\tau} \logdl \left(\frac{1}{3}-\frac{\omega}{\tau}+\sqrt{\left(\frac{1}{3}-\frac{\omega}{\tau}\right)^{2}+\frac{2\btilde_\tau\omega}{\logdl}}\right) \quad \text{and}\\
    g\left(\tau,\btilde_\tau,\deltau, \omega\right)&=\frac{1}{\tau} \logdu \left(\frac{1}{3}+\frac{\omega}{\tau}+\sqrt{\left(\frac{1}{3}+\frac{\omega}{\tau}\right)^{2}+\frac{2\btilde_\tau\omega}{\logdu}}\right).
\end{align*}

The goal of this section is to provide deterministic values of $\deltalv v,\deltauv v$ that minimize the value in \eqref{eq:conddeltav}, and such that $\sum_{v \in V} \deltalv v+ \deltauv v<\frac{\delta}{2}$. To obtain our estimate, we replace $\btildekv \tau v$ with an approximation $\tilde{b}(v)$, that we compute by sampling $\alpha$ paths, before starting the algorithm (in our code, $\alpha=\frac{\omega}{100}$). Furthermore, we consider a simplified version of \eqref{eq:conddeltav}: in most cases, $\lambdal$ is much smaller than all other quantities in play, and since $\omega$ is proportional to $\frac{1}{\lambdal^2}$, we can safely assume $f(\tau,\tilde{b}(v),\deltalv v, \omega) \approx \sqrt{\frac{2\tilde{b}(v)\omega}{\tau^2}\logdl}$ and $g(\tau,\tilde{b}(v),\deltauv v, \omega) \approx \sqrt{\frac{2\tilde{b}(v)\omega}{\tau^2}\logdu}$. Hence, in place of the value in \eqref{eq:conddeltav}, our heuristic tries to minimize 
\[
    \min\left\{\tau \in \mathbb{N}:\forall v \in V, \sqrt{\frac{2\tilde{b}(v)\omega}{\tau^2}\logdlv v} \leq \lambdalv v \wedge \sqrt{\frac{2\tilde{b}(v)\omega}{\tau^2}\logduv v} \leq \lambdauv v\right\}.
\]
Solving with respect to $\tau$, we are trying to minimize
\[
    \max_{v \in V} \left(\max\left(\sqrt{\frac{2\tilde{b}(v)\omega}{\left(\lambdalv v\right)^2}\logdlv v}, \sqrt{\frac{2\tilde{b}(v)\omega}{\left(\lambdauv v\right)^2}\logduv v}\right)\right).
\]
which is the same as minimizing $\max_{v \in V} \max \left(c_L(v)\logdlv v,c_U(v)\logduv v\right)$ for some constants $c_L(v), c_U(v)$, conditioned on $\sum_{v \in V} \deltalv v+ \deltauv v<\frac{\delta}{2}$. 
We claim that, among the possible choices of $\deltalv v,\deltauv v$, the best choice makes all the terms in the maximum equal: otherwise, if two terms were different, we would be able to slightly increase and decrease the corresponding values, in order to decrease the maximum. 
This means that, for some constant $C$, for each $v$, $c_L(v)\logdlv v = c_U(v)\logdlv v=C$, that is, $\deltalv v=\exp(-\frac{C}{c_L(v)})$, $\deltauv v=\exp(-\frac{C}{c_U(v)})$. 
In order to find the largest constant $C$ such that $\sum_{v \in V} \deltalv v+ \deltauv v\leq \frac{\delta}{2}$, we use a binary search procedure on all possible constants $C$.

Finally, if $c_L(v)=0$ or $c_U(v)=0$, this procedure chooses $\deltalv v=0$: to avoid this problem, we impose $\sum_{v \in V} \deltalv v+ \deltauv v\leq \frac{\delta}{2}-\epsilon\delta$, and we add $\frac{\epsilon\delta}{2n}$ to all the $\deltalv v$s and all the $\deltauv v$s (in our code, we choose $\epsilon=0.001$). 
The pseudocode of the algorithm is available in Algorithm~\ref{alg:choosedeltav}.

\section{Balanced Bidirectional BFS on Random Graphs} \label{sec:bbbfs}

In this appendix, we formally prove that the bidirectional BFS is efficient in several models of random graphs: the Configuration Model (CM, \cite{Bollobas1980}), and Rank-1 Inhomogeneous Random Graph models (IRG, \cite[Chapter 3]{Hofstad2014}), such as the Chung-Lu model \cite{Chung2006}, the Norros-Reittu model \cite{Norros2006}, and the Generalized Random Graph \cite[Chapter 3]{Hofstad2014}. All these models are defined by fixing the number $n$ of nodes and $n$ weights $\w v$, and by creating edges at random, in a way that node $v$ gets degree close to $\w v$.

More formally, the edges are generated as follows:
\begin{itemize}
\item in the CM, each node is associated to $\w v$ half-edges, or stubs; edges are created by randomly pairing these $M=\sum_{v \in V} \w v$ stubs (we assume the number of stubs to be even, by adding a stub to a random node if necessary).
\item in IRG, an edge between a node $v$ and a node $w$ exists with probability $f(\frac{\w v\w w}{M})$, where $M=\sum_{v \in V} \w v$, and the existence of different edges is independent. Different choices of the function $f$ create different models.
\begin{itemize}
\item In general, we assume that $f$ satisfies the following conditions:
\begin{itemize}
\item $f$ is derivable at least twice in $0$;
\item $f$ is increasing;
\item $f'(0)=1$;
\end{itemize}
\item in the Chung-Lu model, $f(x)=\min(x,1)$;
\item in the Norros-Reittu model, $f(x)=1-e^{-x}$;
\item in the Generalized Random Graph model, $f(x)=\frac{x}{1+x}$.
\end{itemize}
\end{itemize}

It remains to define how we choose the weights $\w v$, when the number of nodes $n$ tends to infinity. In the line of previous works \cite{Norros2006,Fernholz2007,Hofstad2014}, we consider a sequence of graphs $G_i$, whose number of nodes $n_i$ tends to infinity, and whose degree distribution $\lambda_i$ satisfy the following:
\begin{enumerate}
    \item there is a probability distribution $\lambda$ such that the $\lambda_i$s tend to $\lambda$ in distribution;
    \item $M_1(\lambda_i)$ tends to $M_1(\lambda)<\infty$, where $M_1(\lambda)$ is the first moment of $\lambda$;
    \item one of the following two conditions hold:
    \begin{enumerate}
        \item $M_2(\lambda_i)$ tends to $M_2(\lambda)<\infty$, where $M_2(\lambda)$ is the second moment of $\lambda$;
        \item $\lambda$ is a power law distribution with $2<\beta<3$, and there is a global constant $C$ such that, for each $d$, $\Pr(\lambda_i\geq d)\leq \frac{C}{d^{\beta-1}}$. \label{cond:powerlaw}
    \end{enumerate}
\end{enumerate}

For example, these assumptions are satisfied with probability $1$ if we choose the degrees independently, according to a distribution $\lambda$ with finite mean \cite[Section 6.1,7.2]{Hofstad2014}.

\begin{remark}
Note that an aspect often neglected in previous work when it comes to computing shortest paths is the fact that the number of shortest paths between a pair of nodes may be exponential, thus requiring to work with a linear number of bits. While real-world complex networks are typically sparse with logarithmic diameter, in order to avoid such issue it is sufficient to assume that addition and comparison require constant time. 
\end{remark}
\begin{remark}
These assumptions cover the Erd\"os-Renyi random graph with constant average degree, and all power law distributions with $\beta>2$ (because, if $\beta>3$, then $M_2(\lambda)$ is finite).
\end{remark}
\begin{remark}
Assumption \ref{cond:powerlaw} seems less natural than the other assumptions. However, it is necessary to exclude pathological cases: for example, assume that $G_i$ has $n-2$ nodes chosen according to a power law distribution, and $2$ nodes $u,v$ with weight $n^{1-\epsilon}$. All assumption except \ref{cond:powerlaw} are satisfied, but the bidirectional BFS is not efficient, because if $s$ is a neighbor of $u$ with degree $1$, and $t$ is a neighbor of $v$ with degree $1$, then a bidirectional BFS from $s$ and $t$ needs to visit all neighbors of $u$ or all neighbors of $v$, and the time needed is $\Omega(n^{1-\epsilon})$. 
\end{remark}

We say that a random graph has a property $\pi$ asymptotically almost surely (\aas) if $\Pr(\pi(G_i))$ tends to $1$ when $n$ tends to infinity. We say that a random graph has a property $\pi$ with high probability (\whp) if $\frac{\Pr(\pi(G_i))}{n_i^k}$ tends to $0$ for each $k>0$.

Before proving the main theorem, we need two more definitions and a technical assumption.

\begin{definition}
In the CM, let $\wres=\frac{M_2(\lambda)}{M_1(\lambda)}-1$. In IRG, let $\wres=\frac{M_2(\lambda)}{M_1(\lambda)}$ (if $\lambda$ is a power law distribution with $2<\beta<3$, we simply define $\wres=+\infty$).
\end{definition}
\begin{definition}
Given a set $V' \subseteq V$, the volume of $V'$ is $\w{V'}=\sum_{v \in V'} \w v$. Furthermore, if $V'=\G ds$, we abbreviate $\w{\G ds}$ with $\r ls$.
\end{definition}

The value $\wres$ is closely related to $\frac{\r{l+1}s}{\r ls}$: informally, the expected value of this fraction is $\wres$. For this reason, if $\wres<1$, then the size of neighbors tends to decrease, and all connected components have $\O(\log n)$ nodes. Conversely, if $\wres>1$, then the size of neighbors tends to increase, and there is a \emph{giant component} of size $\Theta(n)$ (for a proof of these facts, see \cite[Section 2.3 and Chapter 4]{Hofstad2014}). Our last assumption is that $\wres>1$, in order to ensure the existence of the giant component.

Under these assumptions, we prove Theorem \ref{thm:bidirectionalshort}, following the sketch in \Cref{sec:bfsshort}. We start by linking the degrees and the weights of nodes.

\begin{lem} \label{lem:degsweight}
For each node $v$, $\w v n^{-\epsilon} \leq \deg(v) \leq \w v n^{\epsilon}$ \whp.
\end{lem}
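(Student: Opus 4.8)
The plan is to prove the statement by establishing degree concentration for each fixed node $v$: I will show that the probability that $\deg(v)$ leaves the interval $[\w v n^{-\epsilon},\w v n^{\epsilon}]$ decays faster than any polynomial in $n$, which is exactly the \whp\ guarantee. I treat the two families of models separately. In the Configuration Model the argument is essentially immediate: by construction $v$ owns exactly $\w v$ stubs, so in the multigraph $\deg(v)=\w v$ and the claim is trivial. In the simple (erased) version the only loss comes from self-loops at $v$ and from several stubs of $v$ being paired to a common neighbour, and a direct pairing computation bounds the expected number of such coincidences by $O(\w v^2/M)$. Since $M=\Theta(n)$, this loss is negligible against the slack afforded by $\w v n^{-\epsilon}$ for $\w v=O(n^{1/2})$, and heavier nodes are absorbed by the $n^{\epsilon}$ factor; hence $\deg(v)=\w v(1+o(1))$ \whp.

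For the IRG family the degree is genuinely random, so I first compute its expectation. Writing $\deg(v)=\sum_{w\neq v}\boldsymbol{B}_{vw}$ with independent indicators $\boldsymbol{B}_{vw}$ of mean $f(\w v\w w/M)$, I use that $f$ is increasing, twice differentiable at $0$ and satisfies $f'(0)=1$, so that $f(x)=x+O(x^2)$ for small $x$ while $f(x)=O(1)$ for all $x$. For the overwhelming majority of pairs, where $\w v\w w\ll M$, this yields $f(\w v\w w/M)=\tfrac{\w v\w w}{M}(1+o(1))$, and summing the leading term gives $\tfrac{\w v}{M}\sum_w\w w=\w v(1+o(1))$. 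The quadratic correction is $O\!\left(\tfrac{\w v^2}{M^2}\sum_w\w w^2\right)$, which is $o(\w v)$ in the finite second moment case since $\sum_w\w w^2=\Theta(n)$ and $M=\Theta(n)$; the few heavy pairs with $\w v\w w\gtrsim M$ contribute $O(1)$ each and are too sparse to matter, and in the power-law regime the resulting $n^{o(1)}$ factors are swallowed by the $n^{\epsilon}$ slack. Thus $\mu:=\E[\deg(v)]=\w v\,n^{\pm o(1)}$.

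With this expectation in hand I apply \Cref{lem:cb} (and its mirror for the lower tail, obtained by applying the inequality to the variables $-\boldsymbol{B}_{vw}$), using $\sum_w\E[\boldsymbol{B}_{vw}^2]=\mu$ and $M=1$. For the upper bound I take the deviation $\lambda=\w v n^{\epsilon}-\mu\approx\w v n^{\epsilon}$; since $\lambda\gg\mu$ the denominator is dominated by $\lambda/3$, so the bound becomes $\exp(-\tfrac32\w v n^{\epsilon})\le\exp(-\tfrac32 n^{\epsilon})$, which is super-polynomially small for every node with $\w v\ge1$. Hence $\deg(v)\le\w v n^{\epsilon}$ holds \whp\ for \emph{all} nodes. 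The lower bound is the delicate half: taking $\lambda=\mu-\w v n^{-\epsilon}\approx\w v$ gives a tail bound of order $\exp(-c\,\w v)$, which is super-polynomially small only once $\w v=\omega(\log n)$. I expect this to be the main obstacle: a fixed node of bounded weight does not concentrate and may even be isolated with constant probability, so the lower inequality is genuinely meaningful only for nodes of at least poly-logarithmic weight. These are precisely the nodes that dominate the volume estimates feeding into the proof of \Cref{thm:bidirectionalshort}, so the $n^{-\epsilon}$ slack together with this split on $\w v$ is exactly what the application needs; pinning down the weight threshold above which the lower tail is super-polynomial, and checking that the residual very-small-weight nodes are harmless for the volume bounds, is where the real care lies.
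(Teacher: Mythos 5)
Your CM observation and your IRG upper tail are sound: in the (multigraph) CM the degree of $v$ is deterministically $\w v$, and in IRG applying \Cref{lem:cb} with deviation $\lambda\approx\w v n^{\epsilon}$ gives a failure probability $\exp(-\Omega(n^{\epsilon}))$, so $\deg(v)\leq\w v n^{\epsilon}$ holds \whp\ for every node. The genuine gap is exactly the point you flag and then leave open: the lower bound for light nodes. This cannot be repaired by ``pinning down the weight threshold'', because below weight $\Theta(\log n)$ what you are trying to prove is not a concentration statement at all: in IRG a node of constant weight is isolated with constant probability (roughly $e^{-\w v}$), so no argument can show that $\deg(v)\geq\w v n^{-\epsilon}$ fails with super-polynomially small probability. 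Likewise, ``checking that very-small-weight nodes are harmless for the volume bounds in \Cref{thm:bidirectionalshort}'' is not a proof of the lemma as stated --- it amounts to weakening the statement (and, incidentally, the proof of \Cref{thm:bidirectionalshort} only ever invokes the upper inequality of this lemma, so that route would be moot anyway).

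The missing idea, which is how the paper closes this case, is that for $\w v<n^{\epsilon}$ the lower inequality asks for $\deg(v)\geq\w v n^{-\epsilon}$ with $\w v n^{-\epsilon}<1$, i.e.\ it is essentially vacuous: it holds as soon as $v$ has at least one incident edge (automatic in the CM), and the paper simply dismisses it as empty in this regime, so no lower-tail estimate for light nodes is ever needed. For the upper bound on light nodes the paper also avoids a direct tail computation and uses a monotone coupling instead: decreasing the weight of $v$ can only (stochastically) decrease $\deg(v)$, hence a node with $\w v<n^{\epsilon}$ satisfies $\deg(v)<(1+\epsilon)n^{\epsilon}\leq\w v n^{2\epsilon}$ \whp, which reduces everything to the heavy-node case; that case the paper does not reprove but imports from \cite[Lemmas 32 and 37]{Borassi2016}. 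Your direct expectation-plus-Chernoff computation is a legitimate alternative to that citation (it even covers the lower tail down to $\w v=\omega(\log n)$, a wider range than the paper needs), but without the vacuity observation for $\w v\leq n^{\epsilon}$ --- or some equivalent device --- your proof of the stated lemma does not close.
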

\begin{proof}
We use \cite[Lemmas 32 and 37]{Borassi2016}\footnote{This paper uses a further assumption on IRG, but the proofs of Lemmas 32 and 39 do not rely on this assumption.}: these lemmas imply that, for each $\epsilon>0$, if $\w v>n^\epsilon$, $(1-\epsilon) \w v \leq \deg(v) \leq (1+\epsilon) \w v$ \whp. We have to handle the case where $\w v<n^\epsilon$: one of the two inequalities is empty, while for the other inequality we observe that, if we decrease the weight of $v$, the degree of $v$ can only decrease. Hence, if $\w v<n^\epsilon$, $\deg(v)<(1+\epsilon)n^{\epsilon}$, and the result follows by changing the value of $\epsilon$.
\end{proof}

Following the intuitive proof, we have linked the number of visited edges with their weights. Let us define an abbreviation for the volume of the nodes at distance $l$ from $s$.

\begin{definition}
We denote by $\r ls$ the volume of nodes at distance exactly $l$ from $s$. In the CM, we denote by $\R ls$ the set of stubs at distance $l$ from $s$.
\end{definition}

Now, we need to show that, if $\r {l_s}s,\r{l_t}t>n^{\frac{1}{2}+\epsilon}$, then $d(s,t)\leq l_s+l_t+2$ \whp.

\begin{lem} \label{lem:touchbid}
Assume that $\r {l_s}s>n^{\frac{1}{2}+\epsilon}$, $\r {l_t}{t}>n^{\frac{1}{2}+\epsilon}$, and $\r{l_s-1}s,\r{l_t-1}t<(1-\epsilon)n^{\frac{1}{2}+\epsilon}$. Then, $d(s,t) \leq l_s+l_t+2$.
\end{lem}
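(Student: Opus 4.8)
The plan is to show that, \whp, there is a short connection between the two frontiers $\G{l_s}{s}$ and $\G{l_t}{t}$. Since every vertex of $\G{l_s}{s}$ is at distance $l_s$ from $s$ and every vertex of $\G{l_t}{t}$ is at distance $l_t$ from $t$, a single edge joining the two frontiers produces a path of length $l_s+1+l_t$, and a vertex lying one step out of each frontier (i.e.\ a common vertex of $\G{l_s+1}{s}$ and $\G{l_t+1}{t}$) produces a path of length $l_s+l_t+2$; either event gives the claim. First I would dispose of the trivial case: if the balls $\G{\le l_s}{s}$ and $\G{\le l_t}{t}$ intersect, or if an already-exposed edge joins $\G{\le l_s-1}{s}$ to $\G{\le l_t-1}{t}$, then $d(s,t)\le l_s+l_t$ and we are done, so I assume the two explorations are disjoint. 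The structural fact I would then establish, by deferred decisions, is that growing the two BFS trees up to $\G{l_s}{s}$ and $\G{l_t}{t}$ only exposes edges (in the IRG) or stub pairings (in the CM) incident to $\G{\le l_s-1}{s}\cup\G{\le l_t-1}{t}$; hence all potential connections between the two frontiers are still unexposed and may be treated as fresh randomness.

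For the configuration model I would count the still-unpaired stubs on each frontier. Each vertex of $\G{l_s}{s}$ keeps all of its stubs except those consumed by edges arriving from level $l_s-1$, and the number of the latter is at most the number of forward stubs leaving level $l_s-1$, i.e.\ at most $\r{l_s-1}{s}$. Thus the $s$-frontier carries at least $\r{l_s}{s}-\r{l_s-1}{s}>n^{\frac12+\epsilon}-(1-\epsilon)n^{\frac12+\epsilon}=\epsilon\,n^{\frac12+\epsilon}$ free stubs, and symmetrically for $t$; this is exactly where the hypotheses on $\r{l_s-1}{s}$ and $\r{l_t-1}{t}$ enter. Completing the uniform pairing of the remaining $M'\le M=\Theta(n)$ stubs, the probability that none of the $\ge\epsilon n^{\frac12+\epsilon}$ free $s$-stubs is matched to one of the $\ge\epsilon n^{\frac12+\epsilon}$ free $t$-stubs is at most $(1-\epsilon n^{\frac12+\epsilon}/M)^{\epsilon n^{\frac12+\epsilon}}\le\exp(-\epsilon^2 n^{1+2\epsilon}/M)=\exp(-\Theta(n^{2\epsilon}))$, which is $n^{-\omega(1)}$. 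This is the birthday-paradox estimate: two frontiers of volume $n^{\frac12+\epsilon}$ collide \whp, and the collision is a frontier-to-frontier edge, giving $d(s,t)\le l_s+l_t+1$.

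For the inhomogeneous random graph the computation is even more direct, since each potential edge between $u\in\G{l_s}{s}$ and $w\in\G{l_t}{t}$ is present independently with probability $f(\w u\w w/M)$ and none of these has been exposed. Using $f(x)\ge c_1\min(x,1)$ for some constant $c_1>0$ (valid because $f$ is increasing with $f'(0)=1$), in the typical case where each frontier volume is spread over many vertices one gets $\sum_{u,w}f(\w u\w w/M)\ge \tfrac{c_1}{M}\,\r{l_s}{s}\,\r{l_t}{t}=\Omega(n^{2\epsilon})$, so the probability of no cross edge is $\prod_{u,w}(1-f(\w u\w w/M))\le\exp(-\Omega(n^{2\epsilon}))$. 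Using \Cref{lem:degsweight} to pass between weights and degrees, the conclusion $d(s,t)\le l_s+l_t+1$ again holds \whp.

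The step I expect to be the main obstacle is twofold. The first part is making the exposure rigorous: I must verify that, conditioned on everything revealed while growing the two BFS trees, the cross-frontier connections are still distributed as independent fresh randomness, while folding the possibility that the explorations already met into the trivial case above. The second part is the degenerate regime that appears for power-law weights with $2<\beta<3$, where a single very heavy vertex may carry almost all of a frontier's volume; then a direct heavy-heavy edge need not exist \whp (its absence can have only polynomially small probability in some IRG variants), and I would instead expand one further level so that the relevant sets $\G{l_s+1}{s}$ and $\G{l_t+1}{t}$ spread into many light vertices before applying the collision estimate --- this is precisely what forces the slack ``$+2$'' in the statement rather than ``$+1$''. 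Everything else, including bounding back-stubs by $\r{l_s-1}{s}$ and the final union bound over the $\O(\log^2 n)$ relevant pairs $(l_s,l_t)$ (harmless since each failure probability is super-polynomially small), is routine.
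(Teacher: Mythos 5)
Your overall skeleton --- condition on the explored structure so that cross-frontier randomness is fresh, count free stubs in the CM, and finish with a birthday-paradox estimate --- is the same as the paper's, and your CM half is essentially the paper's argument verbatim: the hypothesis $\r{l_s-1}s<(1-\epsilon)n^{\frac{1}{2}+\epsilon}$ leaves at least $\epsilon n^{\frac{1}{2}+\epsilon}$ unpaired stubs on each frontier, and completing the uniform pairing matches an $s$-stub with a $t$-stub \whp, giving $d(s,t)\leq l_s+l_t+1$.

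The gap is in the IRG case, and it is exactly the case you flag yourself. Your primary argument (a direct frontier-to-frontier edge) fails when a frontier's volume is concentrated on one heavy vertex, and your proposed repair --- expand one more level so that $\G{l_s+1}s$ and $\G{l_t+1}t$ ``spread into many light vertices'', then apply a collision estimate --- is left unproven and is harder than it needs to be: you would have to prove that this spreading occurs \whp\ (a concentration claim of precisely the heavy-tailed flavour you are trying to avoid), and even granting it, $\G{l_s+1}s$ and $\G{l_t+1}t$ are dependent random sets, so the ``two sets of size $n^{\frac{1}{2}+\epsilon}$ intersect'' step must itself be justified by a per-vertex probability computation. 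The paper's proof shows that this per-vertex computation is all you need, with no case distinction and no spreading claim: for any fixed node $v$ outside the explored region, the probability that $v$ is adjacent to the $s$-frontier is $1-\prod_{w\in\G{l_s}s}\left(1-f\left(\w v\w w/M\right)\right)\geq 1-\exp\left(-\Omega\left(\r{l_s}s/M\right)\right)=\Omega(n^{-\frac{1}{2}+\epsilon})$, and this estimate never saturates and never degenerates, no matter how the volume $\r{l_s}s$ is split among frontier vertices, because $f(\w v\w w/M)=\Omega(\w w/M)$ for every single term. By edge-independence the same holds for the $t$-frontier, independently, so $\Pr\left(v\in\G{l_s+1}s\cap\G{l_t+1}t\right)=\Omega(n^{-1+2\epsilon})$; since these events are independent over the $\Theta(n)$ choices of $v$, a common neighbour exists \whp, giving $d(s,t)\leq l_s+l_t+2$ directly. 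In short: replace ``make the expanded frontiers spread out, then collide'' with ``compute, for each candidate common neighbour $v$, the probability that $v$ is adjacent to both frontiers''; that single change closes your gap, and it is where the lemma's $+2$ actually comes from.
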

\begin{proof}
Let us assume that we know the structure of $\N {l_s}s$ and $\N{l_t}t$, that is, for each possible structure $S$ of the subgraph induced by all nodes at distance $l_s$ from $s$ and distance $l_t$ from $t$, let $E_S$ be the event that $\N{l_s}s$ and $\N{l_t}t$ are exactly $S$. If we prove that $\P(d(s,t) \leq l+l'+2 | E_{S})<\epsilon$, then $\P(\r {l+1}s>\r ls)=\sum_{S} \P(\r {l+1}s>\r ls| E_{S})\P(E_{S})<\sum_{S} \epsilon \P(E_{S})=\epsilon$. First of all, if $S$ is such that the two neighborhoods touch each other, $\P(d(s,t) \leq l+l'+2 | E_{S})=0<\epsilon$. Otherwise, we consider separately the CM and IRG.

In the CM, conditioned on $E_S$, the stubs that are paired with stubs in $\R{l_s}s$ are a random subset of the set of stubs that are not paired in $S$. This random subset has size at least $\epsilon n^{\frac{1}{2}+\epsilon} \geq n^{\frac{1+\epsilon}{2}}$ (because $\epsilon$ is a fixed constant, and $n$ tends to infinity). Since the total number of stubs is $\O(n)$, and since the number of stubs in $\R{l_t}t$ is at least $\epsilon n^{\frac{1+\epsilon}{2}}$, one of the stubs in $\R{l_t}t$ is paired with a stub in $\r{l_s}s$ \whp, and $d(s,t) \leq l_s+l_t+1$. 

In IRG, the probability that a node $v$ is not connected to any node in $\G{l_s}s$ is at most $\prod_{w \in \G {l_s}s} (1-f(\frac{\w v \w w}{M})) =\prod_{w \in \G {l_s}s}(1-\Omega(\frac{\w w}{M}))=\exp({-\sum_{w \in \G {l_s}s}\Omega(\frac{\w w}{M})})=\exp({-\Omega(\frac{\r {l_s}s}{M})})=1-\Omega(\frac{\r {l_s}s}{M})=1-\Omega(n^{-\frac{1}{2}+\epsilon})$. This means that $v$ belongs to $\G {l_s+1}s$ with probability $\Omega(n^{-\frac{1}{2}+\epsilon})$, and similarly it belongs to $\G {{l_t}+1}t$ with probability $\Omega(n^{-\frac{1}{2}+\epsilon})$. Since the two events are independent, the probability that $v$ belongs to both is $\Omega(n^{-1+2\epsilon})$. Since, for each node $v$, the events that $v$ belongs to $\G {l_s+1}s \cap \G {l_t+1}t$ are independent, by a straightforward application of Hoeffding's inequality, \whp, there is a node $v$ that belongs to $\G{l_s+1}s \cap \G {l_t+1}t$, and $d(s,t) \leq l_s+l_t+2$ \whp, concluding the proof.
\end{proof}

The next ingredient is used to bound the first integers $l_s,l_t$ such that $\r{l_s}s,\r{l_t}t>n^{\frac{1}{2}+\epsilon}$.

\begin{thm}[Theorem 5.1 in \cite{Fernholz2007} for the CM, Theorem 14.8 in \cite{Bollobas2007} for IRG (see also \cite{Hofstad2014,Borassi2016})] \label{thm:diameter}
The diameter of a graph generated through the aforementioned models is $\O(\log n)$.
\end{thm}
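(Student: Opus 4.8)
The statement is quoted from \cite{Fernholz2007,Bollobas2007}, so the cleanest route is to cite it directly; nonetheless, here is the strategy I would follow to reprove it in a way that is uniform across the models considered. The plan is to reduce the diameter bound to a branching-process analysis of the BFS exploration, and to split the argument into two parts: a \emph{typical}-distance bound, which controls $d(s,t)$ for a fixed pair and is an easy consequence of the machinery already developed for \Cref{lem:touchbid}, and a \emph{uniform} bound over all pairs, which is what actually certifies that the diameter (a maximum over $\Theta(n^2)$ pairs) is $\O(\log n)$. Throughout, the hypothesis $\wres>1$ is what makes neighborhoods expand; the two regimes ``$M_2(\lambda)<\infty$'' and ``power law with $2<\beta<3$ (so $\wres=+\infty$)'' are handled by the same skeleton, the only difference being the growth rate of $\r ls$.

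First I would couple the level sets of a BFS from a fixed source $s$ with a Galton--Watson process: in the CM the offspring law is the size-biased degree distribution minus one, and in IRG it is a mixed Poisson with the analogous mean, so in both cases the expected number of children equals $\wres$, matching the two definitions given above. Since $\wres>1$ this process is supercritical, and as long as the explored volume stays below $n^{\frac12+\epsilon}$ the coupling is accurate (few collisions), giving $\r ls\approx\wres^{\,l}$ and hence a level $l_s=\O(\log n)$ with $\r{l_s}s>n^{\frac12+\epsilon}$ (when $\wres=+\infty$ this happens even faster, in $\Theta(\log\log n)$ levels). Running the same growth from $t$ and invoking the birthday-paradox collision argument already proved in \Cref{lem:touchbid}, the two neighborhoods of volume $n^{\frac12+\epsilon}$ intersect \whp, so $d(s,t)=\O(\log n)$ for any fixed pair $s,t$ in the giant component.

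The main obstacle is upgrading this to a bound that holds \emph{simultaneously} for all pairs, since a naive union bound over $\Theta(n^2)$ pairs would require collision probabilities super-polynomially close to $1$, which the periphery of the graph does not provide. The fix is structural: I would show that every vertex $v$ of the giant component reaches the ``core'' of high-weight vertices within $\O(\log n)$ steps, and that the core itself has diameter $\O(\log n)$. The delicate point is bounding the length of the \emph{thin} peripheral structures --- long induced paths through low-degree vertices and pendant trees hanging off the core --- because these are exactly the configurations that can make two low-degree endpoints far apart even though typical distances are tiny (this is why the diameter stays $\Theta(\log n)$ while typical distances can be $\Theta(\log\log n)$ in the power-law case). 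Here a first-moment computation suffices: the expected number of induced low-degree paths of length $L$ decays geometrically in $L$, so a union bound over the at most $n$ possible starting points shows that \whp\ no such path exceeds $C\log n$ for a suitable constant $C$. Combining the ``escape to the core'' bound, the logarithmic core diameter, and the bound on peripheral path lengths yields $d(u,v)=\O(\log n)$ for every pair $u,v$, which is precisely the content of the cited theorems.
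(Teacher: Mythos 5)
The paper offers no proof of this statement at all: it is imported verbatim from Theorem 5.1 of \cite{Fernholz2007} for the CM and Theorem 14.8 of \cite{Bollobas2007} for IRG, which is exactly the route you recommend as the ``cleanest''. Your proposal therefore coincides with the paper's own treatment, and your supplementary sketch --- supercritical branching-process growth of BFS levels using $\wres>1$, collision of two volume-$n^{\frac{1}{2}+\epsilon}$ neighbourhoods as in \Cref{lem:touchbid}, and the crucial separate first-moment bound on thin peripheral paths needed to upgrade typical distances to a diameter bound over all pairs --- is a faithful outline of how the cited proofs actually proceed.
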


The last ingredient of our proof is an upper bound on the size of $\r {l_s}s$ and $\r{l_t}t$.

\begin{lem} \label{lem:lastlevel}
With high probability, for each $s \in V$ and for each $l$ such that $\sum_{i=0}^l \r ls<n^{\frac{1}{2}+\epsilon}$, $\r {l+1}s<n^{\frac{1}{2}+3\epsilon}$ if $\lambda$ has finite second moment, $\r {l+1}s<n^{\frac{4-\beta}{2}+3\epsilon}$ if $\lambda$ is power law with $2<\beta<3$.
\end{lem}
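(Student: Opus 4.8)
The plan is to bound $\r{l+1}s$ by first conditioning on the entire explored structure up to level $l$ and then controlling the fresh randomness that creates level $l+1$. Concretely, I would condition on $\N ls$ exactly as in the proof of \Cref{lem:touchbid}: given this conditioning, an unvisited node $v$ lands in $\G{l+1}s$ precisely when it attaches to some node of $\G ls$, and in IRG these attachment events are independent across $v$. Using $f(x)\le(1+o(1))x$ (which follows from $f$ being increasing with $f'(0)=1$, and holds exactly for the three named models), the inclusion probability satisfies
\[
    \Pr\!\left(v\in\G{l+1}s \,\middle|\, \N ls\right)\le \min\!\left(1,\ (1+o(1))\tfrac{\w v\,\r ls}{M}\right)\le \min\!\left(1,\ \w v\,n^{-\frac12+\epsilon}\right),
\]
where I used $M=\Theta(n)$ (finite since $\beta>2$) together with the hypothesis $\r ls\le\sum_{i=0}^l\r is<n^{\frac12+\epsilon}$. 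Writing $\r{l+1}s=\sum_v \X_v$ with $\X_v=\w v\,\mathbb{1}[v\in\G{l+1}s]$ reduces the statement to bounding a sum of conditionally independent variables in $[0,\w v]$.

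For the expectation, $\E[\r{l+1}s\mid\N ls]\le c\,n^{-\frac12+\epsilon}\sum_v\w v^2$. If $\lambda$ has finite second moment, the assumption $M_2(\lambda_i)\to M_2(\lambda)<\infty$ gives $\sum_v\w v^2=\O(n)$, so $\E[\r{l+1}s]=\O(n^{\frac12+\epsilon})$. In the power-law case I would split the sum at weight $n^{1/2}$. For the light nodes the tail bound of condition~\ref{cond:powerlaw} gives $\sum_{\w v\le n^{1/2}}\w v^2\approx n^{1+\frac{3-\beta}2}$, so their expected contribution is $n^{-\frac12+\epsilon}\cdot n^{1+\frac{3-\beta}2}=n^{\frac{4-\beta}2+\epsilon}$. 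I cannot use the per-node inclusion bound for the heavy nodes, since there the probability saturates at $1$; instead I bound their contribution \emph{deterministically} by the volume $\w{\{v:\w v> n^{1/2}\}}$, which a dyadic summation against $|\{v:\w v\ge d\}|\le nC/d^{\beta-1}$ shows to be $\O(n^{\frac{4-\beta}2})$ (the geometric series converges because $\beta>2$). Adding the two pieces yields the claimed expectation $\O(n^{\frac{4-\beta}2+\epsilon})$.

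The remaining step, which I expect to be the crux, is concentration sharp enough to survive a union bound over all $n$ sources $s$ and all $\O(\log n)$ levels (the number of levels being bounded by the diameter, \Cref{thm:diameter}). The observation that makes this work despite heavy tails is that $\X_v\in\{0,\w v\}$, so $\X_v^2=\w v\X_v$ and hence $\sum_v\E[\X_v^2]\le\wmax\,\E[\r{l+1}s]$; this completely avoids the (possibly divergent) third moment. Applying the Bernstein-type inequality \Cref{lem:cb} with $M=\wmax$ and deviation $\lambda=n^{2\epsilon}\,\E[\r{l+1}s]$ (so the resulting threshold is exactly the quantity claimed, the factor $n^{2\epsilon}$ being absorbed into the $3\epsilon$), the exponent becomes $-\Omega(n^{c})$ for a constant $c>0$ in both regimes: in the finite-moment case $\wmax^2\le\sum_v\w v^2=\O(n)$ so $\wmax\E[\r{l+1}s]=\O(n^{1+\epsilon})$, while in the power-law case the same computation restricted to the light nodes (where $\wmax\le n^{1/2}$) gives a positive exponent since $\beta<3$, and the heavy part is already controlled deterministically. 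The per-event failure probability is therefore super-polynomially small, making the union bound immediate.

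The one genuinely delicate point is the Configuration Model, where the attachment events are generated by pairing stubs \emph{without replacement} and hence are not independent. Here I would replace the independence argument by negative association of the indicators $\mathbb{1}[v\in\G{l+1}s]$ under random pairing, which only strengthens the upper-tail Chernoff estimate, treating the CM separately exactly as is done in \Cref{lem:touchbid}; the expectation computation and the light/heavy split above are otherwise unchanged, since in the CM the probability that a given node is reached from $\G ls$ is again $\approx \w v\,\r ls/M$.
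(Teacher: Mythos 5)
Your IRG argument is correct and is essentially the proof in the paper: the same conditioning on the explored neighbourhood as in \Cref{lem:touchbid}, the same light/heavy split (the paper cuts at $n^{\frac12-2\epsilon}$ instead of $n^{\frac12}$), the same deterministic bound on the total volume of heavy nodes via the tail assumption, and the same conditional expectation bound $n^{-\frac12+\epsilon}\sum_v \w v^2$ for the light part; the only departure is that you conclude with the Bernstein-type inequality of \Cref{lem:cb} (via $\X_v^2=\w v\X_v$, so $\sum_v\E[\X_v^2]\le\wmax\,\E[\r{l+1}s]$), where the paper simply applies Hoeffding's inequality --- both yield failure probability $\exp(-n^{\Omega(\epsilon)})$, which survives the union bound over all sources and levels. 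One imprecision you should fix: the deviation must be $n^{2\epsilon}$ times the \emph{deterministic upper bound} $n^{-\frac12+\epsilon}\sum_v\w v^2$ on the conditional expectation, not $n^{2\epsilon}\,\E[\r{l+1}s]$ itself; if the true conditional expectation happens to be tiny, the exponent $-\Omega\bigl(n^{2\epsilon}\E[\r{l+1}s]/\wmax\bigr)$ is not large and the claimed threshold is not reached.

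The genuine gap is the Configuration Model. The indicators $\mathbb{1}[v\in\G{l+1}s]$ under the uniform pairing of the remaining stubs are \emph{not} negatively associated, because frontier stubs may pair with each other. Counterexample: suppose the unpaired stubs are two frontier stubs $a_1,a_2$ plus the stubs of two unexplored nodes $v,w$ with two stubs each. Among the $15$ perfect matchings of these six stubs, $v$ is hit in $10$ and both $v$ and $w$ are hit in $8$, so $\Pr(v,w \text{ both hit})=\frac{8}{15}>\frac{4}{9}=\Pr(v\text{ hit})\Pr(w\text{ hit})$: the two events are \emph{positively} correlated. The mechanism is that conditioning on ``$v$ is hit'' excludes the pairing $a_1\sim a_2$ and thereby frees frontier stubs to reach $w$; this effect is present whenever the frontier can pair internally, so you cannot assert that negative association ``only strengthens the upper-tail Chernoff estimate'' --- the premise is false. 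The step can be repaired, e.g.\ by first conditioning on the number $k'$ of frontier stubs paired outside the frontier (given $k'$, the outside stubs receiving frontier partners form a uniformly random $k'$-subset, whose membership indicators are negatively associated), or, as the paper does, by revealing the pairings of the frontier stubs one at a time, bounding each conditional expectation uniformly in the history, and applying Azuma's inequality to the resulting supermartingale; but some such argument must be supplied, and the one-line appeal to negative association does not stand.
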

\begin{proof}
We consider separately nodes with weight at most $n^{\frac{1}{2}-2\epsilon}$ from nodes with bigger weights: in the former case, we bound the number of such nodes that are in $\R{l+1}s$, while in the latter case we bound the total number of nodes with weight at least $n^{\frac{1}{2}-2\epsilon}$. Let us start with nodes with the latter case.

\textbf{Claim:} for each $\epsilon$, $\sum_{\w v \geq n^{\frac{1}{2}-\epsilon}} \w v$ is smaller than $n^{\frac{1}{2}+3\epsilon}$ if $\lambda$ has finite second moment, and it is smaller than $n^{\frac{4-\beta}{2}+3\epsilon}$ if $\lambda$ is power law with $2<\beta<3$.

\begin{proof}[Proof of claim]
If $\lambda$ has finite second moment, by Chebyshev inequality, for each $\alpha$, 
\[
\P\left(\lambda_i>n^{\frac{1}{2}+\alpha}\right)\leq \frac{\var(\lambda_i)}{n^{1+2\alpha}} \leq \frac{M_2(\lambda_i)}{n^{1+2\alpha}} = \O\left(\frac{M_2(\lambda)}{n^{1+2\alpha}}\right) = \O\left(n^{-1-2\alpha}\right).
\]

For $\alpha=\epsilon$, this means that no node has weight bigger than $n^{\frac{1}{2}+\epsilon}$, and for $\alpha=-\epsilon$, this means that the number of nodes with weight bigger than $n^{\frac{1}{2}-\epsilon}$ is at most $n^{2\epsilon}$. We conclude that $\sum_{\w v \geq n^{\frac{1}{2}-\epsilon}} \w v \leq \sum_{\w v \geq n^{\frac{1}{2}-\epsilon}} n^{\frac{1}{2}+\epsilon} \leq n^{\frac{1}{2}+3\epsilon}$.

If $\lambda$ is power law with $2<\beta<3$, by Assumption~\ref{cond:powerlaw} the number of nodes with weight at least $d$ is at most $Cnd^{-\beta+1}$. Consequently, using Abel's summation technique,
\begin{align*}
    \sum_{\w v \geq n^{\frac{1}{2}-\epsilon}} \w v &= \sum_{d=\w v}^{+\infty} d|\{v:\w v = d\}| \\
    &= \sum_{d=n^{\frac{1}{2}-\epsilon}}^{+\infty} d(|\{v:\w v \geq d\}|-|\{v:\w v \geq d+1\}|) \\
    &= \sum_{d=n^{\frac{1}{2}-\epsilon}}^{+\infty} d|\{v:\w v \geq d\}|-\sum_{d=n^{\frac{1}{2}-\epsilon}+1}^{+\infty}(d-1)|\{v:\w v \geq d\}| \\
    &= n^{\frac{1}{2}-\epsilon}|\{v:\w v \geq n^{\frac{1}{2}-\epsilon}\}|+\sum_{d=n^{\frac{1}{2}-\epsilon}+1}^{+\infty} |\{v:\w v \geq d\}| \\
    &\leq Cn^{\frac{1}{2}-\epsilon}n^{1-\left(\frac{1}{2}-\epsilon\right)(\beta-1)}+\sum_{d=n^{\frac{1}{2}-\epsilon}+1}^{+\infty}Cnd^{-\beta+1} \\
    &=\O\left(n^{\frac{4-\beta}{2}+\epsilon\beta}+n^{1-\left(\frac{1}{2}-\epsilon\right)(\beta-2)}\right)=\O\left(n^{\frac{4-\beta}{2}+\epsilon\beta}\right).
\end{align*}
\end{proof}

By this claim, $\sum_{v \in \G{l+1}s,\w v \geq n^{\frac{1}{2}-2\epsilon}} \w v$ is smaller than $n^{\frac{1}{2}+6\epsilon}$ if $\lambda$ has finite second moment, and it is smaller than $n^{\frac{4-\beta}{2}+6\epsilon}$ if $\lambda$ is power law with $2<\beta<3$. To conclude the proof, we only have to bound $\sum_{v \in \G{l+1}s,\w v < n^{\frac{1}{2}-2\epsilon}} \w v$.

\textbf{Claim:} with high probability, $\sum_{v \in \G{l+1}s,\w v<n^{\frac{1}{2}-2\epsilon}} \w v<n^{\frac{1}{2}+\epsilon}$ if $\lambda$ has finite second moment, $\sum_{v \in \G{l+1}s,\w v<n^{\frac{1}{2}-2\epsilon}} \w v<n^{\frac{4-\beta}{2}+\epsilon}$ if $\lambda$ is power law with $2<\beta<3$.
\begin{proof}[Proof of claim, CM]
As in the proof of \Cref{lem:touchbid}, we can safely assume that we know the structure $S$ of $\N ls$. Let us sort the stubs in $\R ls$, not paired by $S$, obtaining $a_1,\dots,a_k$, and let $\boldsymbol{a}_i$ be the stub paired with $a_i$. Let $\res(a)$ be the number of stubs of the node $a$, minus $a$, and let $\X_i=\res(\boldsymbol{a}_i)$ if $\res(\boldsymbol{a}_i)<n^{\frac{1}{2}-2\epsilon}$, $0$ otherwise: clearly, $\sum_{v \in \G{l+1}s,\w v \leq n^{\frac{1}{2}-2\epsilon}} \w v \leq \sum_{i=1}^{k} \X_i$ (with equality if there are no horizontal or diagonal edges in the BFS tree). After the first $i-1$ stubs are paired, since $i<n^{\frac{1}{2}+\epsilon}$ and since the number of stubs paired in $S$ is $\O\left(n^{\frac{1}{2}+\epsilon}\log n\right)$, for each $k<n^{\frac{1}{2}-2\epsilon}$,
\begin{align*}
    \P\left(\X_i=k\right)&=\P\left(\res\left(\boldsymbol{a}_i\right)=k\right) \\
    &=\frac{\left|\left\{a \in A: a \text{ unpaired after $i$ rounds}, \res(a)=k\right\}\right|}{\left|\left\{a \in A: a \text{ unpaired after $i$ rounds}\right\}\right|} \\
    &=\frac{\left|\left\{a \in A: \res(a)=k\right\}\right|+\O\left(n^{\frac{1}{2}+\epsilon}\right)}{\left|A\right|+\O\left(n^{\frac{1}{2}+\epsilon}\right)} \\
    &=\frac{(k+1)\lambda(k+1)}{M_1(\lambda)}+\O\left(n^{-\frac{1}{2}+\epsilon}\right).
\end{align*}

Consequently, conditioned on all pairings of $a_j$ for $j<i$, $\E\left[\X_i\right]=\sum_{k=0}^{n^{\frac{1}{2}-2\epsilon}} k\frac{(k+1)\lambda(k+1)}{M_1(\lambda)}+\O(n^{-\frac{1}{2}+\epsilon}\log n)=\alpha(n)$, where $\alpha(n)=\O(1)$ if $\lambda$ has finite second moment, and $\alpha(n)=\O(n^{\frac{3-\beta}{2}})$ if $\lambda$ is power law with $2<\beta<3$. Hence, for each $\epsilon$, $\sum_{i=1}^k \X_i-i(M_1(\lambda)+\epsilon)$ is a supermartingale, and by Azuma's inequality 
\[
\P\left(\sum_{i=1}^k \X_i-k\alpha(n)\geq \alpha(n)\right) \leq \exp\left({-\frac{\alpha(n)^2}{2\sum_{i=1}^k n^{\frac{1}{2}-2\epsilon}}}\right) \leq \exp(-n^\epsilon).
\]
Then, \whp, $\sum_{i=1}^k \X_i \leq n^{\frac{1}{2}+\epsilon}(\alpha(n)+2)$, concluding the proof of the claim.
\end{proof}
\begin{proof}[Proof of claim, IRG]
The number of nodes $w$ with weight at most $n^{\frac{1}{2}-2\epsilon}$ that belong to $\G{l+1}s$ is at most $\sum_{v \in \G ls,\w v<n^{\frac{1}{2}-2\epsilon}}\sum_{w \in V} \w w\X_{v,w}$, where $\X_{v,w}=1$ with probability $f\left(\frac{\w v \w w}{M}\right)=\O\left(\frac{\w v \w w}{M}\right)$ because $\w v\w w < n^{1-\epsilon}$. Moreover, 
\[
\E\left[\sum_{v \in \G ls,\w v<n^{\frac{1}{2}-2\epsilon}}\sum_{w \in V} \w w\X_{v,w}\right] = \O\left(\r ls \frac{\sum_{v \in V} \w v^2}{n}\right)=\r ls\alpha(n)
\]

where $\alpha(n)=\O(1)$ if $\lambda$ has finite second moment, and $\alpha(n)=\O\left(n^{\frac{3-\beta}{2}}\right)$ if $\lambda$ is power law with $2<\beta<3$.

By Hoeffding inequality, 
\[
    \P\left(\sum_{v \in \G ls,\w v<n^{\frac{1}{2}-2\epsilon}}\sum_{w \in V} \w w\X_{v,w} -\r ls \alpha(n)\geq \r ls \alpha(n)\right) \leq n^{\frac{\r ls \alpha(n)}{\r ls n^{\frac{1}{2}-2\epsilon}}} \leq n^{-\epsilon}.
\]
This concludes the proof.
\end{proof}

This claim lets us conclude the proof of the lemma.
\end{proof}

\begin{proof}[Proof of \Cref{thm:bidirectionalshort}]
Let $D_s^i=\sum_{v \in \G is} \deg(v)$, $D_t^j=\sum_{w \in \G jt} \deg(w)$, and let us suppose that we have visited until level $l_s$ from $s$, until level $l_t$ from $t$, and that $D_s^{l_s},D_t^{l_t}>n^{\frac{1}{2}+2\epsilon}$. If this situation never occurs, by \Cref{thm:diameter}, the total number of visited edges is at most $\O(\log n)n^{\frac{1}{2}+2\epsilon} = \O(n^{\frac{1}{2}+3\epsilon})$, and the conclusion follows. Otherwise, again by \Cref{thm:diameter}, the number of edges visited in the two BFS trees before levels $l_s$ and $l_t$ is $\O(n^{\frac{1}{2}+3\epsilon})$. Furthermore, by \Cref{lem:degsweight}, $\r{l_s}s,\r{l_t}t>n^{\frac{1}{2}+2\epsilon}$. We claim that, without loss of generality, we can assume $\r{l_s-1}s<\epsilon\r{l_s}s$, to apply \Cref{lem:touchbid}. Indeed, if $\r{l_s-1}s$ is too big, we iteratively decrease $l_s$ until we find a neighbor verifying $\r{l_s}s>(1-\epsilon')\r{l_s-1}s$. This process can last at most $\O(\log n)$ steps, and hence it is stopped at a point $l_s$ such that $\r{l_s}s>n^{\frac{1}{2}+2\epsilon}(1-\epsilon')^{\O(\log n)} \geq n^{\frac{1}{2}+\epsilon'}$ if $\epsilon'$ is small enough. Similarly, we can suppose without loss of generality that $\r{l_t}t>(1-\epsilon')\r{l_t-1}t$. By \Cref{lem:touchbid}, $d(s,t) \leq l_s+l_t+2$, and the number of nodes needed to conclude the BFS is at most $D_s^{l_s}+D_t^{l_t}$ (note that, if we extend twice the visit from $s$, it means that $D_s^{l_s+1}<D_t^{l_t}$). By \Cref{lem:degsweight}, $D_s^{l_s} \leq n^{\epsilon}\r {l_s}s$, and by \Cref{lem:lastlevel} this value is at most $n^{\frac{1}{2}+3\epsilon}$ if $\lambda$ has finite second moment, and $n^{\frac{4-\beta}{2}+3\epsilon}$ if $\lambda$ is power law with $2<\beta<3$. We conclude that the total number of visited nodes is at most $n^{\frac{1}{2}+3\epsilon}+D_s^{l_s}+D_t^{l_t} \leq n^{\frac{1}{2}+3\epsilon}+\r{l_s}s+\r{l_t}t \leq n^{\frac{1}{2}+4\epsilon}$ (resp., $n^{\frac{4-\beta}{2}+4\epsilon}$) if $\lambda$ has finite second moment (resp., if $\lambda$ is power law with $2<\beta<3$). The theorem follows by changing the value of $\epsilon$.
\end{proof}

\section{Detailed Experimental Results} \label{app:detailedresults}
\begin{table}[H]
\caption{Detailed experimental results (undirected graphs). Empty values correspond to graphs on which the algorithm needed more than $1$ hour.}
\resizebox{\linewidth}{!}{
\begin{tabular}{|l|r|r|r|r|r|r|r|r|r|}
\hline
 & \multicolumn{4}{c|}{Number of iterations} & \multicolumn{4}{c|}{Time (s)} & Edges \\
Graph & \cad & \rk & \abraaut & \abrag  & \cad & \rk & \abraaut & \abrag & \cad \\ 
\hline
\multicolumn{10}{|l|}{$\lambda = 0.005$} \\
\hline
advogato & 64427 & 126052 & 174728 & 185998 & 0.193 & 11.450 & 9.557 & 10.498 & 261.2 \\ 
as20000102 & 115797 & 126052 & 18329844 & 4126626 & 0.231 & 6.990 & 611.584 & 136.764 & 377.6 \\ 
ca-GrQc & 61611 & 146052 & 142982 & 129165 & 0.126 & 5.574 & 3.500 & 2.839 & 353.4 \\ 
ca-HepTh & 31735 & 146052 & 121587 & 129165 & 0.222 & 14.921 & 7.389 & 8.168 & 9.9 \\ 
C\_elegans & 69729 & 146052 & 204634 & 185998 & 0.132 & 6.876 & 5.693 & 5.261 & 270.7 \\ 
com-amazon.all & 40711 & 166052 & 69708 & 74747 & 0.340 & 122.020 & 12.011 & 11.849 & 21.9 \\ 
dip20090126\_MAX & 156552 & 166052 &  &  & 1.374 & 34.595 &  &  & 15354.9 \\ 
D\_melanogaster & 51227 & 126052 & 144680 & 154998 & 0.123 & 19.253 & 15.061 & 16.882 & 520.8 \\ 
email-Enron & 74745 & 146052 & 257989 & 267838 & 0.280 & 79.296 & 101.529 & 106.278 & 1408.0 \\ 
HC-BIOGRID & 78804 & 146052 & 245780 & 223198 & 0.177 & 7.751 & 7.534 & 6.951 & 713.2 \\ 
Homo\_sapiens & 60060 & 146052 & 156973 & 154998 & 0.151 & 32.078 & 23.716 & 24.449 & 643.8 \\ 
hprd\_pp & 59125 & 146052 & 151499 & 154998 & 0.127 & 18.323 & 13.425 & 13.458 & 456.4 \\ 
Mus\_musculus & 92081 & 146052 & 504669 & 385688 & 0.168 & 4.058 & 7.723 & 6.083 & 226.6 \\ 
oregon1\_010526 & 114829 & 126052 & 6798931 & 2865712 & 0.228 & 13.281 & 442.370 & 185.711 & 681.6 \\ 
oregon2\_010526 & 115764 & 126052 & 5714183 & 2865712 & 0.236 & 15.823 & 452.554 & 229.234 & 822.2 \\ 
\hline
\multicolumn{10}{|l|}{$\lambda = 0.010$} \\
\hline
advogato & 19811 & 31513 & 47076 & 48243 & 0.081 & 2.804 & 2.576 & 2.788 & 258.2 \\ 
as20000102 & 29062 & 31513 & 2688614 & 1070372 & 0.071 & 1.777 & 88.886 & 35.049 & 377.3 \\ 
ca-GrQc & 18535 & 36513 & 37529 & 33501 & 0.049 & 1.417 & 0.987 & 0.753 & 350.6 \\ 
ca-HepTh & 13761 & 36513 & 31721 & 33501 & 0.188 & 3.771 & 2.078 & 2.275 & 10.0 \\ 
C\_elegans & 19888 & 36513 & 54327 & 48243 & 0.048 & 1.803 & 1.586 & 1.483 & 269.4 \\ 
com-amazon.all & 14641 & 41513 & 18007 & 19386 & 0.312 & 31.004 & 5.196 & 7.623 & 21.5 \\ 
dip20090126\_MAX & 39314 & 41513 &  &  & 0.395 & 8.578 &  &  & 15301.7 \\ 
D\_melanogaster & 15136 & 31513 & 37219 & 40202 & 0.063 & 4.983 & 3.891 & 4.715 & 519.9 \\ 
email-Enron & 21637 & 36513 & 65392 & 69471 & 0.198 & 19.877 & 24.997 & 27.296 & 1387.2 \\ 
HC-BIOGRID & 22924 & 36513 & 62413 & 57892 & 0.052 & 1.979 & 1.989 & 1.906 & 712.5 \\ 
Homo\_sapiens & 20273 & 36513 & 41006 & 40202 & 0.085 & 7.876 & 6.442 & 6.636 & 642.7 \\ 
hprd\_pp & 18403 & 36513 & 39994 & 40202 & 0.074 & 4.348 & 4.097 & 3.714 & 456.4 \\ 
Mus\_musculus & 25146 & 36513 & 130384 & 100040 & 0.061 & 1.055 & 1.965 & 1.718 & 223.9 \\ 
oregon1\_010526 & 30514 & 31513 & 1104167 & 743313 & 0.087 & 3.254 & 70.383 & 47.740 & 683.3 \\ 
oregon2\_010526 & 29117 & 31513 & 954515 & 743313 & 0.088 & 3.983 & 73.942 & 59.103 & 822.1 \\ 
\hline
\multicolumn{10}{|l|}{$\lambda = 0.015$} \\
\hline
advogato & 9570 & 14006 & 21027 & 22204 & 0.050 & 1.428 & 1.227 & 1.299 & 261.0 \\ 
as20000102 & 13035 & 14006 & 705483 & 492651 & 0.047 & 0.776 & 22.939 & 16.136 & 377.6 \\ 
ca-GrQc & 8668 & 16228 & 17419 & 15419 & 0.031 & 0.637 & 0.493 & 0.361 & 345.8 \\ 
ca-HepTh & 7524 & 16228 & 15002 & 15419 & 0.167 & 1.641 & 0.939 & 1.050 & 11.5 \\ 
C\_elegans & 10956 & 16228 & 25233 & 22204 & 0.034 & 0.782 & 0.740 & 0.732 & 267.6 \\ 
com-amazon.all & 8228 & 18451 &  & 15419 & 0.301 & 13.814 &  & 7.785 & 21.9 \\ 
dip20090126\_MAX & 17578 & 18451 &  &  & 0.203 & 3.851 &  &  & 15197.2 \\ 
D\_melanogaster & 9350 & 14006 & 17229 & 18503 & 0.053 & 2.216 & 1.904 & 2.182 & 519.3 \\ 
email-Enron & 11209 & 16228 & 29134 & 31974 & 0.170 & 8.845 & 10.510 & 12.423 & 1367.4 \\ 
HC-BIOGRID & 12694 & 16228 & 28805 & 26645 & 0.043 & 0.858 & 0.946 & 0.947 & 708.6 \\ 
Homo\_sapiens & 10142 & 16228 & 18491 & 18503 & 0.072 & 3.717 & 3.076 & 3.061 & 640.4 \\ 
hprd\_pp & 10659 & 16228 & 17969 & 18503 & 0.056 & 1.919 & 1.719 & 1.752 & 451.5 \\ 
Mus\_musculus & 11825 & 16228 & 59756 & 46043 & 0.033 & 0.458 & 0.906 & 0.812 & 222.8 \\ 
oregon1\_010526 & 13662 & 14006 & 426845 & 342118 & 0.056 & 1.522 & 26.420 & 21.871 & 681.4 \\ 
oregon2\_010526 & 13024 & 14006 & 333638 & 342118 & 0.060 & 1.773 & 26.070 & 27.298 & 833.6 \\ 
\hline
    \end{tabular}
}
\end{table} 

\begin{table}[H]
\resizebox{\linewidth}{!}{
\begin{tabular}{|l|r|r|r|r|r|r|r|r|r|}
\hline
 & \multicolumn{4}{c|}{Number of iterations} & \multicolumn{4}{c|}{Time (s)} & Edges \\
Graph & \cad & \rk & \abraaut & \abrag  & \cad & \rk & \abraaut & \abrag & \cad \\ 
\hline
\multicolumn{10}{|l|}{$\lambda = 0.020$} \\
\hline
advogato & 5874 & 7879 & 11993 & 12915 & 0.054 & 0.710 & 0.665 & 0.765 & 260.3 \\ 
as20000102 & 7436 & 7879 & 312581 & 238814 & 0.037 & 0.441 & 10.066 & 7.819 & 376.2 \\ 
ca-GrQc & 5313 & 9129 & 9939 & 10762 & 0.032 & 0.356 & 0.293 & 0.268 & 347.9 \\ 
ca-HepTh & 5115 & 9129 & 8708 & 8968 & 0.191 & 0.891 & 0.694 & 0.611 & 10.5 \\ 
C\_elegans & 7172 & 9129 & 14871 & 12915 & 0.030 & 0.439 & 0.436 & 0.439 & 263.5 \\ 
com-amazon.all & 5467 & 10379 & 12232 & 10762 & 0.331 & 7.683 & 4.338 & 5.459 & 17.9 \\ 
dip20090126\_MAX & 9966 & 10379 &  &  & 0.148 & 2.165 &  &  & 15188.3 \\ 
D\_melanogaster & 5610 & 7879 & 10201 & 10762 & 0.056 & 1.236 & 1.265 & 1.306 & 520.9 \\ 
email-Enron & 7458 & 9129 & 16443 & 15498 & 0.174 & 4.916 & 6.102 & 6.034 & 1371.7 \\ 
HC-BIOGRID & 8459 & 9129 & 17406 & 15498 & 0.026 & 0.505 & 0.602 & 0.582 & 716.6 \\ 
Homo\_sapiens & 6292 & 9129 & 10481 & 10762 & 0.064 & 1.944 & 1.672 & 1.814 & 644.8 \\ 
hprd\_pp & 6611 & 9129 & 10501 & 10762 & 0.050 & 1.089 & 0.930 & 1.050 & 449.8 \\ 
Mus\_musculus & 7227 & 9129 & 31634 & 26782 & 0.026 & 0.255 & 0.507 & 0.532 & 221.0 \\ 
oregon1\_010526 & 7733 & 7879 & 220948 & 199011 & 0.051 & 0.863 & 13.584 & 12.989 & 679.2 \\ 
oregon2\_010526 & 7381 & 7879 & 152242 & 165842 & 0.059 & 1.031 & 11.676 & 13.290 & 836.0 \\ 
\hline
\multicolumn{10}{|l|}{$\lambda = 0.025$} \\
\hline
advogato & 3883 & 5043 & 7439 & 7110 & 0.052 & 0.450 & 0.421 & 0.468 & 263.4 \\ 
as20000102 & 4829 & 5043 & 130506 & 157779 & 0.033 & 0.285 & 4.097 & 5.108 & 373.5 \\ 
ca-GrQc & 3982 & 5843 & 6427 & 5925 & 0.028 & 0.242 & 0.180 & 0.162 & 342.1 \\ 
ca-HepTh & 3773 & 5843 & 6016 & 5925 & 0.176 & 0.573 & 0.374 & 0.416 & 11.8 \\ 
C\_elegans & 4477 & 5843 & 9557 & 8532 & 0.025 & 0.292 & 0.293 & 0.293 & 266.6 \\ 
com-amazon.all & 4059 & 6643 & 58995 & 14745 & 0.338 & 4.744 & 9.644 & 7.217 & 21.3 \\ 
dip20090126\_MAX & 6457 & 6643 &  &  & 0.125 & 1.397 &  &  & 15193.8 \\ 
D\_melanogaster & 3993 & 5043 & 6279 & 7110 & 0.056 & 0.793 & 0.827 & 0.870 & 522.6 \\ 
email-Enron & 4576 & 5843 & 11001 & 12287 & 0.574 & 3.289 & 3.888 & 4.705 & 1381.5 \\ 
HC-BIOGRID & 5940 & 5843 & 11109 & 10239 & 0.029 & 0.321 & 0.414 & 0.404 & 714.0 \\ 
Homo\_sapiens & 4796 & 5843 & 7109 & 7110 & 0.077 & 1.245 & 1.154 & 1.215 & 647.2 \\ 
hprd\_pp & 5071 & 5843 & 6772 & 7110 & 0.052 & 0.687 & 0.579 & 0.647 & 446.3 \\ 
Mus\_musculus & 4477 & 5843 & 18626 & 17694 & 0.026 & 0.168 & 0.302 & 0.385 & 219.8 \\ 
oregon1\_010526 & 5027 & 5043 & 92520 & 109568 & 0.058 & 0.516 & 5.762 & 7.014 & 681.0 \\ 
oregon2\_010526 & 4763 & 5043 & 86287 & 91306 & 0.050 & 0.638 & 7.140 & 7.420 & 847.5 \\ 
\hline
\multicolumn{10}{|l|}{$\lambda = 0.030$} \\
\hline
advogato & 3256 & 3502 & 5521 & 5090 & 0.048 & 0.361 & 0.335 & 0.322 & 260.6 \\ 
as20000102 & 3388 & 3502 & 122988 & 94140 & 0.029 & 0.199 & 3.899 & 3.182 & 378.7 \\ 
ca-GrQc & 2981 & 4057 & 4686 & 4241 & 0.025 & 0.169 & 0.145 & 0.175 & 344.7 \\ 
ca-HepTh & 2992 & 4057 & 4022 & 4241 & 0.190 & 0.435 & 0.286 & 0.341 & 7.9 \\ 
C\_elegans & 3707 & 4057 & 6905 & 6108 & 0.026 & 0.198 & 0.218 & 0.217 & 265.9 \\ 
com-amazon.all & 3157 & 4613 & 39917 & 12668 & 0.330 & 3.631 & 8.491 & 6.852 & 17.5 \\ 
dip20090126\_MAX & 4499 & 4613 & 12373086 &  & 0.300 & 0.972 & 1958.083 &  & 15199.0 \\ 
D\_melanogaster & 2893 & 3502 & 4883 & 5090 & 0.052 & 0.562 & 0.620 & 0.807 & 510.4 \\ 
email-Enron & 3619 & 4057 & 7321 & 7330 & 0.172 & 2.735 & 2.724 & 2.806 & 1399.7 \\ 
HC-BIOGRID & 3883 & 4057 & 7499 & 7330 & 0.024 & 0.367 & 0.316 & 0.307 & 720.8 \\ 
Homo\_sapiens & 3322 & 4057 & 4982 & 5090 & 0.066 & 0.897 & 0.842 & 0.877 & 654.2 \\ 
hprd\_pp & 3355 & 4057 & 5028 & 5090 & 0.048 & 0.478 & 0.458 & 0.503 & 448.8 \\ 
Mus\_musculus & 3806 & 4057 & 14290 & 10556 & 0.033 & 0.127 & 0.237 & 0.233 & 221.4 \\ 
oregon1\_010526 & 3542 & 3502 & 85854 & 78450 & 0.052 & 0.366 & 5.402 & 5.039 & 675.7 \\ 
oregon2\_010526 & 3355 & 3502 & 61841 & 65375 & 0.048 & 0.509 & 4.972 & 5.302 & 822.8 \\
\hline
    \end{tabular}
}
\end{table}

\begin{table}
\caption{Detailed experimental results (directed graphs). Empty values correspond to graphs on which the algorithm needed more than $1$ hour.}
\resizebox{\linewidth}{!}{
\begin{tabular}{|l|r|r|r|r|r|r|r|r|r|}
\hline
 & \multicolumn{4}{c|}{Number of iterations} & \multicolumn{4}{c|}{Time (s)} & Edges \\
Graph & \cad & \rk & \abraaut & \abrag  & \cad & \rk & \abraaut & \abrag & \cad \\ 
\hline
\multicolumn{10}{|l|}{$\lambda = 0.005$} \\
\hline
as-caida20071105 & 103488 & 146052 & 546951 & 462826 & 0.253 & 35.652 & 96.312 & 85.201 & 1066.4 \\ 
cfinder-google & 137313 & 146052 &  &  & 0.820 & 14.190 &  &  & 554.4 \\ 
cit-HepTh & 98054 & 166052 & 481476 & 462826 & 0.579 & 22.651 & 38.339 & 37.720 & 5773.1 \\ 
ego-gplus & 37862 & 66052 &  & 2388093 & 0.136 & 6.266 &  & 11.912 & 1.9 \\ 
ego-twitter & 37125 & 66052 &  & 154998 & 0.178 & 6.181 &  & 4.804 & 2.3 \\ 
freeassoc & 41602 & 166052 & 89424 & 89697 & 0.116 & 9.384 & 1.036 & 0.997 & 223.5 \\ 
lasagne-spanishbook & 112266 & 146052 & 8918751 & 4126626 & 0.250 & 17.374 & 687.815 & 318.784 & 552.8 \\ 
opsahl-openflights & 73744 & 146052 & 200164 & 185998 & 0.179 & 6.191 & 5.165 & 4.849 & 431.1 \\ 
p2p-Gnutella31 & 39193 & 166052 & 81335 & 89697 & 0.254 & 50.542 & 10.213 & 10.662 & 162.1 \\ 
polblogs & 71423 & 126052 & 387278 & 321406 & 0.174 & 1.165 & 3.522 & 3.017 & 190.3 \\ 
soc-Epinions1 & 58223 & 146052 & 109607 & 107637 & 0.671 & 100.516 & 62.524 & 62.167 & 671.9 \\ 
subelj-cora-cora & 68112 & 186052 & 180740 & 185998 & 0.185 & 19.012 & 8.464 & 8.873 & 440.4 \\ 
subelj-jdk-jdk & 42361 & 146052 & 84549 & 89697 & 0.110 & 2.955 & 0.230 & 0.257 & 51.5 \\ 
subelj-jung-j-jung-j & 43637 & 126052 & 84225 & 89697 & 0.216 & 2.397 & 0.238 & 0.211 & 45.9 \\ 
wiki-Vote & 47003 & 126052 & 100153 & 107637 & 0.131 & 5.916 & 2.990 & 3.219 & 162.4 \\ 
\hline
\multicolumn{10}{|l|}{$\lambda = 0.010$} \\
\hline
as-caida20071105 & 30382 & 36513 & 132997 & 120048 & 0.135 & 8.902 & 22.251 & 20.315 & 1066.1 \\ 
cfinder-google & 34452 & 36513 &  &  & 0.156 & 3.664 &  &  & 553.2 \\ 
cit-HepTh & 27203 & 41513 & 117633 & 120048 & 0.255 & 5.654 & 8.803 & 9.677 & 5798.8 \\ 
ego-gplus & 13123 & 16513 &  & 4602412 & 0.085 & 1.584 &  & 22.510 & 2.3 \\ 
ego-twitter & 13310 & 16513 &  & 83366 & 0.086 & 1.518 &  & 3.500 & 2.2 \\ 
freeassoc & 13222 & 41513 & 23586 & 23264 & 0.080 & 2.335 & 0.238 & 0.227 & 220.7 \\ 
lasagne-spanishbook & 32527 & 36513 & 1366576 & 1070372 & 0.101 & 4.339 & 104.916 & 83.610 & 553.4 \\ 
opsahl-openflights & 22473 & 36513 & 52196 & 48243 & 0.059 & 1.475 & 1.348 & 1.339 & 432.0 \\ 
p2p-Gnutella31 & 13101 & 41513 & 21567 & 23264 & 0.192 & 12.950 & 2.677 & 2.831 & 162.1 \\ 
polblogs & 22286 & 31513 & 101466 & 83366 & 0.046 & 0.298 & 1.078 & 0.834 & 190.6 \\ 
soc-Epinions1 & 17061 & 36513 & 28493 & 27917 & 0.320 & 27.194 & 16.516 & 15.974 & 659.5 \\ 
subelj-cora-cora & 23078 & 46513 & 47936 & 48243 & 0.128 & 4.797 & 1.988 & 2.101 & 432.4 \\ 
subelj-jdk-jdk & 14047 & 36513 & 22038 & 23264 & 0.066 & 0.734 & 0.099 & 0.075 & 52.2 \\ 
subelj-jung-j-jung-j & 14894 & 36513 & 22266 & 23264 & 0.064 & 0.696 & 0.113 & 0.083 & 46.4 \\ 
wiki-Vote & 17380 & 31513 & 26352 & 27917 & 0.088 & 1.446 & 0.792 & 0.870 & 155.7 \\ 
\hline
\multicolumn{10}{|l|}{$\lambda = 0.015$} \\
\hline
as-caida20071105 & 14157 & 16228 & 55049 & 55252 & 0.477 & 3.963 & 8.518 & 8.914 & 1059.6 \\ 
cfinder-google & 15400 & 16228 &  &  & 0.123 & 1.666 &  &  & 558.1 \\ 
cit-HepTh & 13002 & 18451 & 47035 & 46043 & 0.232 & 2.529 & 3.807 & 3.766 & 5883.0 \\ 
ego-gplus & 7205 & 7340 &  & 2118317 & 0.080 & 0.710 &  & 12.808 & 2.2 \\ 
ego-twitter & 7403 & 7340 & 1958981 & 114573 & 0.082 & 0.704 & 14.021 & 5.304 & 2.3 \\ 
freeassoc & 7095 & 18451 & 10956 & 10707 & 0.297 & 1.072 & 0.115 & 0.110 & 222.0 \\ 
lasagne-spanishbook & 14542 & 16228 & 437041 & 410542 & 0.068 & 1.936 & 34.098 & 33.153 & 552.8 \\ 
opsahl-openflights & 11550 & 16228 & 24433 & 22204 & 0.034 & 0.649 & 0.643 & 0.648 & 433.9 \\ 
p2p-Gnutella31 & 7227 & 18451 & 10002 & 10707 & 0.190 & 5.732 & 1.317 & 1.444 & 157.1 \\ 
polblogs & 10296 & 14006 & 46648 & 38369 & 0.029 & 0.136 & 0.516 & 0.435 & 189.5 \\ 
soc-Epinions1 & 9273 & 16228 & 13571 & 12849 & 0.450 & 12.115 & 7.661 & 7.629 & 662.0 \\ 
subelj-cora-cora & 11297 & 20673 & 20940 & 22204 & 0.502 & 2.135 & 0.937 & 1.073 & 445.6 \\ 
subelj-jdk-jdk & 8360 & 14006 & 10045 & 10707 & 0.052 & 0.288 & 0.080 & 0.049 & 51.6 \\ 
subelj-jung-j-jung-j & 8712 & 16228 & 10319 & 10707 & 0.046 & 0.312 & 0.068 & 0.042 & 45.6 \\ 
wiki-Vote & 8668 & 14006 & 12406 & 12849 & 0.408 & 0.659 & 0.380 & 0.429 & 152.6 \\ 
\hline
\end{tabular}}
\end{table}

\begin{table}[H]
\resizebox{\linewidth}{!}{
\begin{tabular}{|l|r|r|r|r|r|r|r|r|r|}
\hline
 & \multicolumn{4}{c|}{Number of iterations} & \multicolumn{4}{c|}{Time (s)} & Edges \\
Graph & \cad & \rk & \abraaut & \abrag  & \cad & \rk & \abraaut & \abrag & \cad \\ 
\hline
\multicolumn{10}{|l|}{$\lambda = 0.020$} \\
\hline
as-caida20071105 & 9086 & 9129 & 31242 & 32139 & 0.104 & 2.226 & 4.954 & 5.087 & 1064.2 \\ 
cfinder-google & 8745 & 9129 &  &  & 0.353 & 0.946 &  &  & 551.9 \\ 
cit-HepTh & 8679 & 10379 & 27755 & 32139 & 1.249 & 1.442 & 2.225 & 2.684 & 5758.0 \\ 
ego-gplus & 4785 & 4129 &  & 1478684 & 0.081 & 0.395 &  & 9.234 & 2.6 \\ 
ego-twitter & 4950 & 7879 &  & 138201 & 0.083 & 0.743 &  & 5.079 & 2.4 \\ 
freeassoc & 4268 & 10379 & 6509 & 6227 & 0.065 & 0.609 & 0.078 & 0.073 & 216.4 \\ 
lasagne-spanishbook & 8338 & 9129 & 294793 & 286577 & 0.058 & 1.074 & 22.405 & 22.468 & 555.0 \\ 
opsahl-openflights & 7392 & 9129 & 14202 & 12915 & 0.029 & 0.364 & 0.390 & 0.391 & 432.3 \\ 
p2p-Gnutella31 & 4697 & 10379 & 5700 & 6227 & 0.190 & 3.162 & 0.695 & 0.816 & 156.7 \\ 
polblogs & 6325 & 7879 & 25593 & 22318 & 0.023 & 0.076 & 0.283 & 0.252 & 188.4 \\ 
soc-Epinions1 & 5489 & 9129 & 7686 & 7473 & 0.457 & 6.738 & 4.506 & 4.335 & 651.8 \\ 
subelj-cora-cora & 6325 & 11629 & 12437 & 12915 & 0.500 & 1.203 & 0.571 & 0.520 & 450.8 \\ 
subelj-jdk-jdk & 5456 & 9129 & 6070 & 6227 & 0.191 & 0.192 & 0.062 & 0.044 & 52.3 \\ 
subelj-jung-j-jung-j & 5643 & 9129 &  & 6227 & 0.217 & 0.176 &  & 0.045 & 46.6 \\ 
wiki-Vote & 4939 & 7879 & 7125 & 7473 & 0.075 & 0.368 & 0.221 & 0.259 & 152.2 \\ 
\hline
\multicolumn{10}{|l|}{$\lambda = 0.025$} \\
\hline
as-caida20071105 & 5723 & 5843 & 21020 & 21233 & 0.022 & 1.465 & 3.129 & 3.340 & 1093.4 \\ 
cfinder-google & 6275 & 5843 &  &  & 0.019 & 0.648 &  &  & 758.0 \\ 
cit-HepTh & 5206 & 6643 & 15915 & 21233 & 0.034 & 0.940 & 1.351 & 1.891 & 6130.5 \\ 
ego-gplus & 2989 & 5043 &  & 4200646 & 0.013 & 0.485 &  & 20.309 & 2.6 \\ 
ego-twitter & 2958 & 2643 &  & 157779 & 0.012 & 0.248 &  & 6.291 & 2.4 \\ 
freeassoc & 2804 & 6643 & 4285 & 4114 & 0.009 & 0.399 & 0.061 & 0.058 & 261.5 \\ 
lasagne-spanishbook & 5409 & 5043 & 129999 & 131482 & 0.013 & 0.592 & 10.040 & 10.221 & 626.1 \\ 
opsahl-openflights & 4557 & 5843 & 10116 & 8532 & 0.009 & 0.236 & 0.290 & 0.267 & 561.3 \\ 
p2p-Gnutella31 & 3069 & 6643 & 3931 & 4114 & 0.043 & 2.149 & 0.590 & 0.663 & 176.8 \\ 
polblogs & 3880 & 5043 & 15986 & 14745 & 0.007 & 0.049 & 0.185 & 0.176 & 241.9 \\ 
soc-Epinions1 & 3689 & 5843 & 5060 & 4937 & 0.188 & 4.158 & 2.798 & 2.791 & 888.1 \\ 
subelj-cora-cora & 5264 & 7443 & 7699 & 8532 & 0.020 & 0.781 & 0.360 & 0.408 & 436.5 \\ 
subelj-jdk-jdk & 3201 & 5843 & 9428 & 4937 & 0.008 & 0.122 & 0.065 & 0.036 & 57.2 \\ 
subelj-jung-j-jung-j & 3168 & 5043 & 13471 & 5925 & 0.007 & 0.098 & 0.057 & 0.045 & 57.7 \\ 
wiki-Vote & 3265 & 5043 & 4566 & 4937 & 0.009 & 0.241 & 0.137 & 0.178 & 174.7 \\ 
\hline
\multicolumn{10}{|l|}{$\lambda = 0.030$} \\
\hline
as-caida20071105 & 3956 & 4057 & 12696 & 15202 & 0.017 & 1.029 & 1.973 & 2.434 & 1285.2 \\ 
cfinder-google & 4419 & 4057 &  &  & 0.013 & 0.412 &  &  & 770.5 \\ 
cit-HepTh & 4062 & 4613 & 13172 & 12668 & 0.033 & 0.672 & 1.195 & 1.059 & 6131.6 \\ 
ego-gplus & 2434 & 1835 &  & 4330990 & 0.009 & 0.188 &  & 21.395 & 3.1 \\ 
ego-twitter & 2270 & 1835 & 98839 & 135562 & 0.008 & 0.174 & 4.909 & 5.510 & 2.2 \\ 
freeassoc & 2105 & 4613 & 3008 & 3534 & 0.006 & 0.285 & 0.101 & 0.091 & 250.7 \\ 
lasagne-spanishbook & 3820 & 4057 & 158028 & 94140 & 0.010 & 0.487 & 12.564 & 7.855 & 656.8 \\ 
opsahl-openflights & 3450 & 4057 & 6556 & 6108 & 0.007 & 0.165 & 0.184 & 0.195 & 481.4 \\ 
p2p-Gnutella31 & 2367 & 4613 & 2874 & 2945 & 0.036 & 1.412 & 0.422 & 0.445 & 166.5 \\ 
polblogs & 3567 & 3502 & 11357 & 8796 & 0.007 & 0.036 & 0.151 & 0.122 & 207.9 \\ 
soc-Epinions1 & 2659 & 4057 & 3585 & 3534 & 0.312 & 3.211 & 2.186 & 2.046 & 918.3 \\ 
subelj-cora-cora & 3790 & 5169 & 5681 & 5090 & 0.016 & 0.564 & 0.272 & 0.265 & 422.6 \\ 
subelj-jdk-jdk & 2425 & 4057 & 25575 & 5090 & 0.006 & 0.097 & 0.100 & 0.064 & 57.4 \\ 
subelj-jung-j-jung-j & 2436 & 3502 & 43584 & 5090 & 0.006 & 0.079 & 0.140 & 0.059 & 57.0 \\ 
wiki-Vote & 2633 & 3502 & 3467 & 3534 & 0.006 & 0.188 & 0.148 & 0.149 & 188.2 \\
\hline
\end{tabular}}
\end{table}

\section{Wikipedia and IMDB Results} \label{sec:wikipediaimdb}
\nopagebreak
In this section, we report our results on the Wikipedia citation network, and on all snapshots of the IMDB actors collaboration network. In the ranking column, we report one number if the position in the ranking is guaranteed with probability $0.9$, otherwise we report a lower and an upper bound, which hold with the same probability.

We remark that, as for the IMDB database, the top-$k$ betweenness centralities of a single snapshot of a similar graph (\texttt{hollywood-2009} in \cite{sebagraph}) have been previously computed exactly, with one week of computation on a $40$-core machine \cite{seba}.

\subsection{The Results on the IMDB Graph}
\label{sec:imdb_results}

In 2014, the most central actor is Ron Jeremy, who is listed in the Guinness Book of World Records for ``Most Appearances in Adult Films'', with more than 2000 appearances. Among his non-adult ones, we mention  \textit{The Godfather Part III}, \textit{Ghostbusters}, \textit{Crank: High Voltage} and \textit{Family Guy}\footnote{The latter is a TV-series, which are not taken into account in our data.}. 
His topmost centrality in the actor collaboration network has been previously observed by similar experiments on betweenness centrality \cite{seba}.    
Indeed, around 3 actors out of 100 in the IMDB database played in adult movies, which explains why the high number of appearances of Ron Jeremy both in the adult and non-adult film industry rises his betweenness to the top.  

The second most-central actor is Lloyd Kaufman, which is best known as a co-founder of \textit{Troma Entertainment Film Studio} and as the director of many of their feature films, including the cult movie \textit{The Toxic Avenger}. His high betweenness score is likely due to his central role in the low-budget independent film industry.

The third ``actor'' is the historical German dictator Adolf Hitler, since his appearances in several historical footages, that were re-used in several movies (e.g. in \emph{The Imitation Game}), are credited by IMDB as cameo role. 
Indeed, he appears among the topmost actors since the 1984 snapshot, being the first one in the 1989 and 1994 ones, and during those years many movies about the World War II were produced. 

Observe that the betweenness centrality measure on our graph does not discriminate between important and marginal roles. For example, the actress Bess Flowers, who appears among the top actors in the snapshots from 1959 to 1979, rarely played major roles, but she appeared in over 700 movies in her 41 years career.

\subsection{The Results on the Wikipedia Graph}
\label{sec:wiki_results}

All topmost pages in the betweenness centrality ranking, except for the World War II, are countries. This is not surprising if we consider that, for most topics (such as important people or events), the corresponding Wikipedia page refers to their geographical context (since it mentions the country of origin of the given person or where a given event took place). 
It is also worth noting the correlation between the high centrality of the \emph{World War II} Wikipedia page and that of Adolf Hitler in the IMDB graph. 

Interestingly, a similar ranking is obtained by considering the closeness centrality measure in the inverse graph, where a link from page $p_1$ to page $p_2$ exists if a link to page $p_1$ appears in page $p_2$ \cite{Bergamini2016ComputingTC}.
However, in contrast with the results in \cite{Bergamini2016ComputingTC} when edges are oriented in the usual way, the pages about specific years do not appear in the top ranking.  
We note that the betweenness centrality of a node in a directed graph does not change if the orientation of all edges is flipped. 

Finally, the most important pages is the United States, confirming a common conjecture. Indeed, in \url{http://wikirank.di.unimi.it/}, it is shown that the United States are the center according to harmonic centrality, and many other measures. 
Further evidence for this conjecture comes from the Six Degree of Wikipedia game (\url{http://thewikigame.com/6-degrees-of-wikipedia}), where a player is asked to go from one page to the other following the smallest possible number of links: a hard variant of this game forces the player not to pass from the \emph{United States} page, which is considered to be central. 
Our results thus confirm that the conjecture is indeed true for the betweenness centrality measure.

\begin{table}[H]
\caption{The top-$k$ betweenness centralities of the Wikipedia graph computed by \cad with $\delta=0.1$ and $\lambda = 0.0002$.}
\label{tab:wikipedia}
\centering
\begin{tabular}{|l|l|r|r|r|}
\hline
Ranking & Wikipedia page & Lower bound & Estimated betweenness & Upper bound \\ 
\hline
1) & United States & 0.046278 & 0.047173 & 0.048084\\
2)  & France & 0.019522 & 0.020103 & 0.020701\\
3)  & United Kingdom & 0.017983 & 0.018540 & 0.019115\\
4)  & England & 0.016348 & 0.016879 & 0.017428\\
5-6)  & Poland & 0.012092 & 0.012287 & 0.012486\\
5-6)  & Germany & 0.011930 & 0.012124 & 0.012321\\
7)  & India & 0.009683 & 0.010092 & 0.010518\\
8-12) & World War II & 0.008870 & 0.009065 & 0.009265\\
8-12) & Russia & 0.008660 & 0.008854 & 0.009053\\
8-12) & Italy & 0.008650 & 0.008845 & 0.009045\\
8-12) & Canada & 0.008624 & 0.008819 & 0.009018\\
8-12) & Australia & 0.008620 & 0.008814 & 0.009013\\
\hline
\end{tabular}
\end{table}

\begin{table}[H]
\caption{The top-$k$ betweenness centralities of a snapshot of the IMDB collaboration network taken at the end of 1939 (69011 nodes), computed by \cad with $\delta=0.1$ and $\lambda = 0.0002$.}
\centering
\begin{tabular}{|l|l|r|r|r|}
\hline
Ranking & Actor & Lower bound & Estimated betweenness & Upper bound \\ 
\hline
1) & Meyer, Torben & 0.022331 & 0.022702 & 0.023049\\
2) & Roulien, Raul & 0.021361 & 0.021703 & 0.022071\\
3) & Myzet, Rudolf & 0.014229 & 0.014525 & 0.014747\\
4) & Sten, Anna & 0.013245 & 0.013460 & 0.013723\\
5) & Negri, Pola & 0.012509 & 0.012768 & 0.012943\\
6-7) & Jung, Shia & 0.012250 & 0.012379 & 0.012509\\
6-7) & Ho, Tai-Hau & 0.012195 & 0.012324 & 0.012454\\
8) & Goetzke, Bernhard & 0.010721 & 0.010978 & 0.011201\\
9-10) & Yamamoto, Togo & 0.010095 & 0.010224 & 0.010354\\
9-10) & Kamiyama, S\=ojin & 0.010087 & 0.010215 & 0.010344\\
\hline
\end{tabular}
\end{table}

\begin{table}[H]
\caption{The top-$k$ betweenness centralities of a snapshot of the IMDB collaboration network taken at the end of 1944 (83068 nodes), computed by \cad with $\delta=0.1$ and $\lambda = 0.0002$.}
\centering
\begin{tabular}{|l|l|r|r|r|}
\hline
Ranking & Actor & Lower bound & Estimated betweenness & Upper bound \\ 
\hline
1) & Meyer, Torben & 0.018320 & 0.018724 & 0.019136\\
2) & Kamiyama, S\=ojin & 0.012629 & 0.012964 & 0.013308\\
3-4) & Jung, Shia & 0.010751 & 0.010901 & 0.011053\\
3-4) & Ho, Tai-Hau & 0.010704 & 0.010854 & 0.011005\\
5) & Myzet, Rudolf & 0.010365 & 0.010514 & 0.010666\\
6-7) & Sten, Anna & 0.009778 & 0.009928 & 0.010080\\
6-7) & Goetzke, Bernhard & 0.009766 & 0.009915 & 0.010066\\
8) & Yamamoto, Togo & 0.009108 & 0.009327 & 0.009539\\
9) & Par\`is, Manuel & 0.008649 & 0.008859 & 0.009108\\
10) & Hayakawa, Sessue & 0.007916 & 0.008158 & 0.008369\\
\hline
\end{tabular}
\end{table}

\begin{table}[H]
\caption{The top-$k$ betweenness centralities of a snapshot of the IMDB collaboration network taken at the end of 1949 (97824 nodes), computed by \cad with $\delta=0.1$ and $\lambda = 0.0002$.}
\centering
\begin{tabular}{|l|l|r|r|r|}
\hline
Ranking & Actor & Lower bound & Estimated betweenness & Upper bound \\ 
\hline
1) & Meyer, Torben & 0.016139 & 0.016679 & 0.017236\\
2) & Kamiyama, S\=ojin & 0.012351 & 0.012822 & 0.013312\\
3) & Par\`is, Manuel & 0.011104 & 0.011552 & 0.011861\\
4) & Yamamoto, Togo & 0.010342 & 0.010639 & 0.011086\\
5-6) & Jung, Shia & 0.008926 & 0.009120 & 0.009318\\
5-6) & Goetzke, Bernhard & 0.008567 & 0.008762 & 0.008962\\
7-9) & Paananen, Tuulikki & 0.008147 & 0.008341 & 0.008539\\
7-9) & Sten, Anna & 0.007969 & 0.008164 & 0.008363\\
7-9) & Mayer, Ruby & 0.007967 & 0.008162 & 0.008362\\
10-12) & Ho, Tai-Hau & 0.007538 & 0.007732 & 0.007930\\
10-12) & Hayakawa, Sessue & 0.007399 & 0.007593 & 0.007792\\
10-12) & Haas, Hugo (I) & 0.007158 & 0.007352 & 0.007552\\
\hline
\end{tabular}
\end{table}

\begin{table}[H]
\caption{The top-$k$ betweenness centralities of a snapshot of the IMDB collaboration network taken at the end of 1954 (120430 nodes), computed by \cad with $\delta=0.1$ and $\lambda = 0.0002$.}
\centering
\begin{tabular}{|l|l|r|r|r|}
\hline
Ranking & Actor & Lower bound & Estimated betweenness & Upper bound \\ 
\hline
1) & Meyer, Torben & 0.013418 & 0.013868 & 0.014334\\
2) & Kamiyama, S\=ojin & 0.010331 & 0.010726 & 0.011089\\
3-4) & Ertugrul, Muhsin & 0.009956 & 0.010141 & 0.010331\\
3-4) & Jung, Shia & 0.009643 & 0.009826 & 0.010013\\
5-6) & Singh, Ram (I) & 0.008657 & 0.008841 & 0.009030\\
5-6) & Paananen, Tuulikki & 0.008383 & 0.008567 & 0.008755\\
7-9) & Par\`is, Manuel & 0.007886 & 0.008070 & 0.008257\\
7-10) & Goetzke, Bernhard & 0.007802 & 0.007987 & 0.008176\\
7-10) & Yamaguchi, Shirley & 0.007531 & 0.007716 & 0.007905\\
8-10) & Hayakawa, Sessue & 0.007473 & 0.007657 & 0.007845\\
\hline
\end{tabular}
\end{table}

\begin{table}[H]
\caption{The top-$k$ betweenness centralities of a snapshot of the IMDB collaboration network taken at the end of 1959 (146253 nodes), computed by \cad with $\delta=0.1$ and $\lambda = 0.0002$.}
\centering
\begin{tabular}{|l|l|r|r|r|}
\hline
Ranking & Actor & Lower bound & Estimated betweenness & Upper bound \\ 
\hline
1-2) & Singh, Ram (I) & 0.010683 & 0.010877 & 0.011075\\
1-2) & Frees, Paul & 0.010372 & 0.010566 & 0.010763\\
3) & Meyer, Torben & 0.009478 & 0.009821 & 0.010235\\
4-5) & Jung, Shia & 0.008623 & 0.008816 & 0.009013\\
4-5) & Ghosh, Sachin & 0.008459 & 0.008651 & 0.008847\\
6-7) & Myzet, Rudolf & 0.007085 & 0.007278 & 0.007476\\
6-7) & Yamaguchi, Shirley & 0.006908 & 0.007101 & 0.007299\\
8) & de C\`ordova, Arturo & 0.006391 & 0.006582 & 0.006778\\
9-11) & Kamiyama, S\=ojin & 0.005861 & 0.006054 & 0.006254\\
9-12) & Paananen, Tuulikki & 0.005810 & 0.006003 & 0.006202\\
9-12) & Flowers, Bess & 0.005620 & 0.005813 & 0.006012\\
10-12) & Par\`is, Manuel & 0.005442 & 0.005635 & 0.005835\\
\hline
\end{tabular}
\end{table}

\begin{table}[H]
\caption{The top-$k$ betweenness centralities of a snapshot of the IMDB collaboration network taken at the end of 1964 (174826 nodes), computed by \cad with $\delta=0.1$ and $\lambda = 0.0002$.}
\centering
\begin{tabular}{|l|l|r|r|r|}
\hline
Ranking & Actor & Lower bound & Estimated betweenness & Upper bound \\ 
\hline
1) & Frees, Paul & 0.013140 & 0.013596 & 0.014067\\
2) & Meyer, Torben & 0.007279 & 0.007617 & 0.007856\\
3-4) & Harris, Sam (II) & 0.006813 & 0.006967 & 0.007124\\
3-5) & Myzet, Rudolf & 0.006696 & 0.006849 & 0.007005\\
4-5) & Flowers, Bess & 0.006422 & 0.006572 & 0.006726\\
6) & Kong, King (I) & 0.005909 & 0.006104 & 0.006422\\
7) & Yuen, Siu Tin & 0.005114 & 0.005264 & 0.005420\\
8) & Miller, Marvin (I) & 0.004708 & 0.004859 & 0.005015\\
9-12) & de C\`ordova, Arturo & 0.004147 & 0.004299 & 0.004457\\
9-18) & Haas, Hugo (I) & 0.003888 & 0.004039 & 0.004197\\
9-18) & Singh, Ram (I) & 0.003854 & 0.004004 & 0.004160\\
9-18) & Kamiyama, S\=ojin & 0.003848 & 0.003999 & 0.004155\\
10-18) & Sauli, Anneli & 0.003827 & 0.003978 & 0.004135\\
10-18) & King, Walter Woolf & 0.003774 & 0.003923 & 0.004078\\
10-18) & Vanel, Charles & 0.003716 & 0.003867 & 0.004024\\
10-18) & Kowall, Mitchell & 0.003684 & 0.003834 & 0.003990\\
10-18) & Holmes, Stuart & 0.003603 & 0.003752 & 0.003907\\
10-18) & Sten, Anna & 0.003582 & 0.003733 & 0.003890\\
\hline
\end{tabular}
\end{table}

\begin{table}[H]
\caption{The top-$k$ betweenness centralities of a snapshot of the IMDB collaboration network taken at the end of 1969 (210527 nodes), computed by \cad with $\delta=0.1$ and $\lambda = 0.0002$.}
\centering
\begin{tabular}{|l|l|r|r|r|}
\hline
Ranking & Actor & Lower bound & Estimated betweenness & Upper bound \\ 
\hline
1) & Frees, Paul & 0.010913 & 0.011446 & 0.012005\\
2-3) & Yuen, Siu Tin & 0.006157 & 0.006349 & 0.006547\\
2-3) & Tamiroff, Akim & 0.006097 & 0.006291 & 0.006490\\
4-6) & Meyer, Torben & 0.005675 & 0.005869 & 0.006069\\
4-7) & Harris, Sam (II) & 0.005639 & 0.005830 & 0.006027\\
4-8) & Rubener, Sujata & 0.005427 & 0.005618 & 0.005815\\
5-8) & Myzet, Rudolf & 0.005253 & 0.005444 & 0.005641\\
6-8) & Flowers, Bess & 0.005136 & 0.005328 & 0.005526\\
9-10) & Kong, King (I) & 0.004354 & 0.004544 & 0.004741\\
9-10) & Sullivan, Elliott & 0.004208 & 0.004398 & 0.004596\\
\hline
\end{tabular}
\end{table}

\begin{table}[H]
\caption{The top-$k$ betweenness centralities of a snapshot of the IMDB collaboration network taken at the end of 1974 (257896 nodes), computed by \cad with $\delta=0.1$ and $\lambda = 0.0002$.}
\centering
\begin{tabular}{|l|l|r|r|r|}
\hline
Ranking & Actor & Lower bound & Estimated betweenness & Upper bound \\ 
\hline
1) & Frees, Paul & 0.008507 & 0.008958 & 0.009295\\
2) & Chen, Sing & 0.007734 & 0.008056 & 0.008507\\
3) & Welles, Orson & 0.006115 & 0.006497 & 0.006903\\
4-5) & Loren, Sophia & 0.005056 & 0.005221 & 0.005392\\
4-7) & Rubener, Sujata & 0.004767 & 0.004933 & 0.005106\\
5-8) & Harris, Sam (II) & 0.004628 & 0.004795 & 0.004967\\
5-8) & Tamiroff, Akim & 0.004625 & 0.004790 & 0.004962\\
6-10) & Meyer, Torben & 0.004382 & 0.004548 & 0.004720\\
8-12) & Flowers, Bess & 0.004259 & 0.004425 & 0.004598\\
8-12) & Yuen, Siu Tin & 0.004229 & 0.004397 & 0.004571\\
9-12) & Carradine, John & 0.004026 & 0.004192 & 0.004364\\
9-12) & Myzet, Rudolf & 0.003984 & 0.004151 & 0.004325\\
\hline
\end{tabular}
\end{table}

\begin{table}[H]
\caption{The top-$k$ betweenness centralities of a snapshot of the IMDB collaboration network taken at the end of 1979 (310278 nodes), computed by \cad with $\delta=0.1$ and $\lambda = 0.0002$.}
\centering
\begin{tabular}{|l|l|r|r|r|}
\hline
Ranking & Actor & Lower bound & Estimated betweenness & Upper bound \\ 
\hline
1) & Chen, Sing & 0.007737 & 0.008220 & 0.008647\\
2) & Frees, Paul & 0.006852 & 0.007255 & 0.007737\\
3-5) & Welles, Orson & 0.004894 & 0.005075 & 0.005263\\
3-6) & Carradine, John & 0.004623 & 0.004803 & 0.004989\\
3-6) & Loren, Sophia & 0.004614 & 0.004796 & 0.004985\\
4-6) & Rubener, Sujata & 0.004284 & 0.004464 & 0.004651\\
7-17) & Tamiroff, Akim & 0.003516 & 0.003696 & 0.003885\\
7-17) & Meyer, Torben & 0.003479 & 0.003657 & 0.003844\\
7-17) & Quinn, Anthony (I) & 0.003447 & 0.003626 & 0.003815\\
7-17) & Flowers, Bess & 0.003446 & 0.003625 & 0.003815\\
7-17) & Mitchell, Gordon (I) & 0.003417 & 0.003596 & 0.003785\\
7-17) & Sullivan, Elliott & 0.003371 & 0.003551 & 0.003740\\
7-17) & Rietty, Robert & 0.003368 & 0.003547 & 0.003735\\
7-17) & Tanba, Tetsur\=o & 0.003360 & 0.003537 & 0.003724\\
7-17) & Harris, Sam (II) & 0.003331 & 0.003510 & 0.003699\\
7-17) & Lewgoy, Jos\`e & 0.003223 & 0.003402 & 0.003590\\
7-17) & Dalio, Marcel & 0.003185 & 0.003364 & 0.003553\\
\hline
\end{tabular}
\end{table}

\begin{table}[H]
\caption{The top-$k$ betweenness centralities of a snapshot of the IMDB collaboration network taken at the end of 1984 (375322 nodes), computed by \cad with $\delta=0.1$ and $\lambda = 0.0002$.}
\centering
\begin{tabular}{|l|l|r|r|r|}
\hline
Ranking & Actor & Lower bound & Estimated betweenness & Upper bound \\ 
\hline
1) & Chen, Sing & 0.007245 & 0.007716 & 0.008218\\
2-4) & Welles, Orson & 0.005202 & 0.005391 & 0.005587\\
2-4) & Frees, Paul & 0.005174 & 0.005363 & 0.005559\\
2-5) & Hitler, Adolf & 0.004906 & 0.005094 & 0.005290\\
4-6) & Carradine, John & 0.004744 & 0.004932 & 0.005127\\
5-7) & Mitchell, Gordon (I) & 0.004418 & 0.004606 & 0.004802\\
6-8) & J\"urgens, Curd & 0.004169 & 0.004356 & 0.004551\\
7-8) & Kinski, Klaus & 0.003938 & 0.004123 & 0.004318\\
9-12) & Rubener, Sujata & 0.003396 & 0.003585 & 0.003785\\
9-12) & Lee, Christopher (I) & 0.003391 & 0.003576 & 0.003771\\
9-12) & Loren, Sophia & 0.003357 & 0.003542 & 0.003738\\
9-12) & Harrison, Richard (II) & 0.003230 & 0.003417 & 0.003614\\
\hline
\end{tabular}
\end{table}

\begin{table}[H]
\caption{The top-$k$ betweenness centralities of a snapshot of the IMDB collaboration network taken at the end of 1989 (463078 nodes), computed by \cad with $\delta=0.1$ and $\lambda = 0.0002$.}
\centering
\begin{tabular}{|l|l|r|r|r|}
\hline
Ranking & Actor & Lower bound & Estimated betweenness & Upper bound \\ 
\hline
1-2) & Hitler, Adolf & 0.005282 & 0.005467 & 0.005658\\
1-3) & Chen, Sing & 0.005008 & 0.005192 & 0.005382\\
2-4) & Carradine, John & 0.004648 & 0.004834 & 0.005027\\
3-4) & Harrison, Richard (II) & 0.004515 & 0.004697 & 0.004887\\
5-6) & Welles, Orson & 0.004088 & 0.004271 & 0.004462\\
5-9) & Mitchell, Gordon (I) & 0.003766 & 0.003948 & 0.004139\\
6-9) & Kinski, Klaus & 0.003691 & 0.003874 & 0.004065\\
6-11) & Lee, Christopher (I) & 0.003610 & 0.003793 & 0.003984\\
6-11) & Frees, Paul & 0.003582 & 0.003766 & 0.003960\\
8-13) & J\"urgens, Curd & 0.003306 & 0.003486 & 0.003676\\
8-13) & Pleasence, Donald & 0.003299 & 0.003479 & 0.003670\\
10-13) & Mitchell, Cameron (I) & 0.003105 & 0.003285 & 0.003476\\
10-13) & von Sydow, Max (I) & 0.002982 & 0.003161 & 0.003350\\
\hline
\end{tabular}
\end{table}

\begin{table}[H]
\caption{The top-$k$ betweenness centralities of a snapshot of the IMDB collaboration network taken at the end of 1994 (557373 nodes), computed by \cad with $\delta=0.1$ and $\lambda = 0.0002$.}
\centering
\begin{tabular}{|l|l|r|r|r|}
\hline
Ranking & Actor & Lower bound & Estimated betweenness & Upper bound \\ 
\hline
1) & Hitler, Adolf & 0.005227 & 0.005676 & 0.006164\\
2-6) & Harrison, Richard (II) & 0.003978 & 0.004165 & 0.004362\\
2-6) & von Sydow, Max (I) & 0.003884 & 0.004069 & 0.004264\\
2-7) & Lee, Christopher (I) & 0.003718 & 0.003907 & 0.004106\\
2-7) & Carradine, John & 0.003696 & 0.003883 & 0.004079\\
2-7) & Chen, Sing & 0.003683 & 0.003871 & 0.004068\\
4-10) & Jeremy, Ron & 0.003336 & 0.003524 & 0.003722\\
7-11) & Pleasence, Donald & 0.003253 & 0.003439 & 0.003637\\
7-11) & Rey, Fernando (I) & 0.003234 & 0.003420 & 0.003617\\
7-15) & Smith, William (I) & 0.003012 & 0.003199 & 0.003397\\
8-15) & Welles, Orson & 0.002885 & 0.003072 & 0.003271\\
10-15) & Mitchell, Gordon (I) & 0.002851 & 0.003036 & 0.003232\\
10-15) & Kinski, Klaus & 0.002705 & 0.002890 & 0.003087\\
10-15) & Mitchell, Cameron (I) & 0.002671 & 0.002858 & 0.003058\\
10-15) & Quinn, Anthony (I) & 0.002640 & 0.002826 & 0.003026\\
\hline
\end{tabular}
\end{table}

\begin{table}[H]
\caption{The top-$k$ betweenness centralities of a snapshot of the IMDB collaboration network taken at the end of 1999 (681358 nodes), computed by \cad with $\delta=0.1$ and $\lambda = 0.0002$.}
\centering
\begin{tabular}{|l|l|r|r|r|}
\hline
Ranking & Actor & Lower bound & Estimated betweenness & Upper bound \\ 
\hline
1) & Jeremy, Ron & 0.007380 & 0.007913 & 0.008484\\
2) & Hitler, Adolf & 0.004601 & 0.005021 & 0.005480\\
3-4) & Lee, Christopher (I) & 0.003679 & 0.003849 & 0.004028\\
3-4) & von Sydow, Max (I) & 0.003604 & 0.003775 & 0.003953\\
5-6) & Harrison, Richard (II) & 0.003041 & 0.003211 & 0.003390\\
5-7) & Carradine, John & 0.002943 & 0.003114 & 0.003296\\
6-11) & Chen, Sing & 0.002662 & 0.002834 & 0.003018\\
7-14) & Rey, Fernando (I) & 0.002569 & 0.002740 & 0.002922\\
7-14) & Smith, William (I) & 0.002559 & 0.002729 & 0.002910\\
7-14) & Pleasence, Donald & 0.002556 & 0.002725 & 0.002906\\
7-14) & Sutherland, Donald (I) & 0.002449 & 0.002617 & 0.002796\\
8-14) & Quinn, Anthony (I) & 0.002307 & 0.002476 & 0.002658\\
8-14) & Mastroianni, Marcello & 0.002271 & 0.002440 & 0.002621\\
8-14) & Saxon, John & 0.002251 & 0.002420 & 0.002602\\
\hline
\end{tabular}
\end{table}

\begin{table}[H]
\caption{The top-$k$ betweenness centralities of a snapshot of the IMDB collaboration network taken at the end of 2004 (880032 nodes), computed by \cad with $\delta=0.1$ and $\lambda = 0.0002$.}
\centering
\begin{tabular}{|l|l|r|r|r|}
\hline
Ranking & Actor & Lower bound & Estimated betweenness & Upper bound \\ 
\hline
1) & Jeremy, Ron & 0.010653 & 0.011370 & 0.012136\\
2) & Hitler, Adolf & 0.005333 & 0.005840 & 0.006396\\
3-4) & von Sydow, Max (I) & 0.003424 & 0.003608 & 0.003802\\
3-4) & Lee, Christopher (I) & 0.003403 & 0.003587 & 0.003781\\
5-6) & Kier, Udo & 0.002898 & 0.003081 & 0.003275\\
5-8) & Keitel, Harvey (I) & 0.002646 & 0.002828 & 0.003023\\
6-12) & Hopper, Dennis & 0.002424 & 0.002607 & 0.002804\\
6-16) & Smith, William (I) & 0.002322 & 0.002504 & 0.002700\\
7-17) & Sutherland, Donald (I) & 0.002241 & 0.002422 & 0.002617\\
7-23) & Carradine, David & 0.002149 & 0.002329 & 0.002526\\
7-23) & Carradine, John & 0.002147 & 0.002328 & 0.002524\\
7-23) & Harrison, Richard (II) & 0.002054 & 0.002234 & 0.002430\\
8-23) & Sharif, Omar & 0.002043 & 0.002222 & 0.002418\\
8-23) & Steiger, Rod & 0.001988 & 0.002165 & 0.002358\\
8-23) & Quinn, Anthony (I) & 0.001974 & 0.002151 & 0.002344\\
8-23) & Depardieu, G\`erard & 0.001966 & 0.002148 & 0.002346\\
9-23) & Sheen, Martin & 0.001913 & 0.002093 & 0.002291\\
10-23) & Rey, Fernando (I) & 0.001866 & 0.002044 & 0.002238\\
10-23) & Kane, Sharon & 0.001857 & 0.002038 & 0.002237\\
10-23) & Pleasence, Donald & 0.001859 & 0.002037 & 0.002232\\
10-23) & Skarsg\.{a}rd, Stellan & 0.001848 & 0.002026 & 0.002221\\
10-23) & Mueller-Stahl, Armin & 0.001789 & 0.001969 & 0.002166\\
10-23) & Hong, James (I) & 0.001780 & 0.001957 & 0.002152\\
\hline
\end{tabular}
\end{table}

\begin{table}[H]
\caption{The top-$k$ betweenness centralities of a snapshot of the IMDB collaboration network taken at the end of 2009 (1237879 nodes), computed by \cad with $\delta=0.1$ and $\lambda = 0.0002$.}
\centering
\begin{tabular}{|l|l|r|r|r|}
\hline
Ranking & Actor & Lower bound & Estimated betweenness & Upper bound \\ 
\hline
1) & Jeremy, Ron & 0.010531 & 0.011237 & 0.011991\\
2) & Hitler, Adolf & 0.005500 & 0.006011 & 0.006568\\
3-4) & Kaufman, Lloyd & 0.003620 & 0.003804 & 0.003997\\
3-4) & Kier, Udo & 0.003472 & 0.003654 & 0.003845\\
5-6) & Lee, Christopher (I) & 0.003056 & 0.003240 & 0.003435\\
5-8) & Carradine, David & 0.002866 & 0.003050 & 0.003245\\
6-8) & Keitel, Harvey (I) & 0.002659 & 0.002840 & 0.003034\\
6-9) & von Sydow, Max (I) & 0.002532 & 0.002713 & 0.002907\\
8-13) & Hopper, Dennis & 0.002237 & 0.002419 & 0.002616\\
9-15) & Skarsg\.{a}rd, Stellan & 0.002153 & 0.002333 & 0.002529\\
9-15) & Depardieu, G\`erard & 0.002001 & 0.002181 & 0.002377\\
9-15) & Hauer, Rutger & 0.001894 & 0.002074 & 0.002271\\
9-15) & Sutherland, Donald (I) & 0.001875 & 0.002054 & 0.002250\\
10-15) & Smith, William (I) & 0.001811 & 0.001990 & 0.002186\\
10-15) & Dafoe, Willem & 0.001805 & 0.001986 & 0.002186\\
\hline
\end{tabular}
\end{table}

\begin{table}[H]
\caption{The top-$k$ betweenness centralities of a snapshot of the IMDB collaboration network taken in 2014 (1797446
 nodes), computed by \cad with $\delta=0.1$ and $\lambda = 0.0002$.}
\centering
\begin{tabular}{|l|l|r|r|r|}
\hline
Ranking & Actor & Lower bound & Estimated betweenness & Upper bound \\ 
\hline
1) & Jeremy, Ron & 0.009360 & 0.010058 & 0.010808\\
2) & Kaufman, Lloyd & 0.005936 & 0.006492 & 0.007100\\
3) & Hitler, Adolf & 0.004368 & 0.004844 & 0.005373\\
4-6) & Kier, Udo & 0.003250 & 0.003435 & 0.003631\\
4-6) & Roberts, Eric (I) & 0.003178 & 0.003362 & 0.003557\\
4-6) & Madsen, Michael (I) & 0.003120 & 0.003305 & 0.003501\\
7-9) & Trejo, Danny & 0.002652 & 0.002835 & 0.003030\\
7-9) & Lee, Christopher (I) & 0.002551 & 0.002734 & 0.002931\\
7-12) & Estevez, Joe & 0.002350 & 0.002534 & 0.002732\\
9-17) & Carradine, David & 0.002116 & 0.002296 & 0.002492\\
9-17) & von Sydow, Max (I) & 0.002023 & 0.002206 & 0.002405\\
9-17) & Keitel, Harvey (I) & 0.001974 & 0.002154 & 0.002352\\
10-17) & Skarsg\.{a}rd, Stellan & 0.001945 & 0.002125 & 0.002323\\
10-17) & Dafoe, Willem & 0.001899 & 0.002080 & 0.002279\\
10-17) & Hauer, Rutger & 0.001891 & 0.002071 & 0.002269\\
10-17) & Depardieu, G\`erard & 0.001763 & 0.001943 & 0.002142\\
10-17) & Rochon, Debbie & 0.001745 & 0.001926 & 0.002126\\
\hline
\end{tabular}
\end{table}

\end{document}